\newcommand{\transStep}[1]{\xrightarrow{\, {#1} \, }} 
\newcommand{\TransStep}[1]{\xRightarrow{\, {#1} \, }} 
\newcommand{\ntransStep}[1]{\mathrel{{\transStep{#1}}\makebox[0em][r]{$\not$\hspace{2ex}}}{\!}}
\newcommand{\nTransStep}[1]{\mathrel{{\TransStep{#1}}\makebox[0em][r]{$\not$\hspace{2ex}}}{\!}}
\renewcommand{\timeout}[2]{\lfloor #1 \rfloor #2  }
\renewcommand{\tick}{\mathsf{tick}}
\newcommand{\der}{\mathit{der}}
\newif\ifdraft\drafttrue
\newcommand{\ntrans}[1]{\mathrel{{\trans{#1}}\makebox[0em][r]{$\not$\hspace{2ex}}}{\!}}
\newcommand{\on}{\mathsf{on}}
\newcommand{\off}{\mathsf{off}}
\newcommand{\bool}[1]{\llbracket #1 \rrbracket}
\newcommand{\confCPS}[2]{#1 \, {\Join} \, #2}
\newcommand{\rsens}[2]{\mathsf{read}\, #2(#1)}
\renewcommand{\operatorname}[1]{\mathit{#1}}
\newcommand{\states}{\mathcal{T}}
\newcommand{\bisimtoll}[2]{\mathrel{\approx^{#1}_{#2}}}
\newcommand{\transSimone}[1][]{\xrightarrow{\, {#1} \, }}
\newcommand{\TransSimone}[1][]{\xRightarrow{\, {#1} \, }}
\newcommand{\transSim}[1]{\xrightarrow{\, {#1} \, }}
\newcommand{\CPS}{CPS}
\newcommand\restrict[1]{\raise-.5ex\hbox{\ensuremath|}_{#1}}
\newtheorem{notation}{Notation}
\newcommand{\distr}[1]{{\mathcal D}(#1)}
\DeclareMathOperator{\Kantorovich}{\mathbf{K}} 
\newcommand{\size}[1]{\mid\!\!{#1}\!\!\mid}
\renewcommand{\xRightarrow}[2][]{\ext@arrow 0359\Rightarrowfill@{#1}{#2}}
\newcommand{\xRightarrowBis}[2][]{\ext@arrow 0359\Rightarrowfill@{#1}{#2}}
\newcommand{\distrib}{\mathcal D}
\newcommand{\dummyN}{\mathsf{Dead}}
\newcommand{\metric}{\ensuremath{\mathbf{m}}}
\DeclareMathOperator{\zeroF}{{\bf 0}}
\DeclareMathOperator{\oneF}{{\bf 1}}
\DeclareMathOperator{\Bisimulation}{\mathbf{B}} 
\DeclareMathOperator{\Simulation}{\mathbf{S}} 
\newcommand{\dirac}[1]{\overline{#1}}
\renewcommand{\rcva}[2]{#1?(#2)}
\definecolor{darkred}{RGB}{128,0,0}
\definecolor{darkgreen}{RGB}{0,128,0}
\definecolor{lightgreen}{RGB}{224,255,224}
\newcommand{\statefun}{\xi_{\mathrm{x}}}
\newcommand{\actuatorfun}{\xi_{\mathrm{a}}}
\newcommand{\evolmap}{\mathit{evol}}
\newcommand{\measmap}{\xi_{\mathrm{m}}}
\newcommand{\invariantfun}{\mathit{inv}}
\begin{document}

\title{Towards a formal notion of impact metric for cyber-physical attacks (full version)\thanks{An extended abstract will appear in the Proc.\ of the \emph{14th International Conference on integrated Formal Methods} (iFM 2018), 5th-7th September 2018, Maynooth University, Ireland, and published in a volume of  \emph{Lecture Notes in Computer Science}.}}

\titlerunning{Towards a formal notion of impact metric for cyber-physical attacks}

\author{Ruggero Lanotte\inst{1} \and Massimo Merro\inst{2} \and
Simone Tini\inst{1}}

\authorrunning{R. Lanotte et al.}

\institute{Dipartimento di Scienza e Alta Tecnologia, Universit\`a dell'Insubria, Como, Italy \\ \email{\{ruggero.lanotte,simone.tini\}@uninsubria.it}
\and Dipartimento di Informatica, Universit\`a degli Studi di Verona, Verona,  Italy \email{massimo.merro@univr.it}}

\maketitle

\begin{abstract}
Industrial facilities and critical infrastructures are transforming into ``smart" environments that dynamically adapt to external events. The result is an ecosystem of heterogeneous physical and cyber components integrated in cyber-physical systems which are more and more exposed to \emph{cyber-physical attacks}, 
\emph{i.e.},  security breaches in cyberspace that adversely affect the physical processes at the core of the systems. 

We provide a formal \emph{compositional metric} to estimate the \emph{impact} of cyber-physical attacks targeting sensor devices of \emph{I{o}T systems} formalised in a simple extension of Hennessy and Regan's \emph{Timed Process Language}.
 Our \emph{impact metric} relies on a discrete-time generalisation of Desharnais et al.'s \emph{weak bisimulation metric} for \nolinebreak con\-cur\-rent systems. We show the adequacy of our definition on two different attacks on a simple surveillance system. 
\end{abstract}


\section{Introduction}
The \emph{Internet of Things} (IoT) is heavily affecting our daily lives in many domains, ranging from tiny wearable devices to large industrial systems with thousands of heterogeneous cyber and physical components that interact with each other.

\emph{Cyber-Physical Systems} (\CPS{s}) are integrations of networking and distributed computing systems with physical processes, where feedback loops allow the latter to affect the computations of the former and vice versa. 
Historically, \CPS{s} relied on proprietary technologies and were implemented as stand-alone networks in physically protected locations. However,  the growing connectivity and integration of these systems  has triggered a  dramatic increase in the number of \emph{cyber-physical attacks}~\cite{CPS-Book2015}, \emph{i.e.},  security breaches in cyberspace that adversely affect the physical processes, 
 \emph{e.g.}, manipulating \emph{sensor readings} and, in general, influencing physical processes to bring the system into a state desired \nolinebreak by \nolinebreak  the \nolinebreak attacker.

Cyber-physical attacks are complex and challenging as they usually cross the boundary between cyberspace and the physical world, possibly more than once~\cite{GGIKLW2015}. Some notorious examples are: (i) the \emph{Stuxnet} worm, which reprogrammed PLCs of nuclear centrifuges in Iran~\cite{stuxnet},  (ii) the attack on a sewage treatment facility in Queensland, Australia,   which manipulated the SCADA system to release raw sewage into local rivers~\cite{SlMi2007},  or the (iii) the recent \emph{BlackEnergy} cyber-attack on the Ukrainian power grid, again  compromising the SCADA system~\cite{ICS15}. 

The points in common of these systems   is that they are all safety critical and failures may cause catastrophic consequences. Thus, the concern for consequences at the physical level puts \emph{\CPS{} security} apart from standard \emph{IT security}. 

\emph{Timing} is particularly relevant in \CPS{} security because the physical state of a system changes continuously over time and, as the system evolves in time, some states might be more vulnerable to attacks than others~\cite{KrCa2013}. For example, an attack launched when the target state variable reaches a local maximum (or minimum) may have a great impact on
the whole system behaviour~\cite{BestTime2014}. 
Also the \emph{duration of the attack} is an
important parameter to be taken into consideration in order to achieve a
successful attack. For example, it may take minutes for a chemical reactor
to rupture~\cite{chemical-reactor}, hours to heat a tank of water or burn
out a motor, and days to destroy centrifuges~\cite{stuxnet}.

Actually, the estimation of the \emph{impact} of cyber-physical attacks on the target system  is crucial when protecting \CPS{s}~\cite{GeKiHa2015}. 
For instance, in industrial CPSs, before taking any countermeasure against an attack, engineers 
 first try to estimate the impact of the attack on the system functioning (e.g., performance and security) and weight it against the cost of stopping the plant. If this cost is higher than the damage caused by the attack (as is sometimes the case), then engineers might actually decide to let the system continue its activities even under attack. Thus, once an attack is detected, \emph{impact metrics} are necessary to quantify the perturbation introduced in the physical behaviour of the system under attack. 

The \emph{goal} of this paper is to lay theoretical foundations to provide formal instruments to precisely define the notion of impact of cyber-physical attack targeting physical devices, such as \emph{sensor devices} of IoT systems. For that we rely on a timed generalisation of \emph{bisimulation metrics}~\cite{DJGP02,DGJP04,BW05} 
 to compare the behaviour of two systems up to a given tolerance, for time-bounded executions.

\emph{Weak bisimulation metric}~\cite{DJGP02} allows  us to compare two systems $M$ and $N$, writing $M \simeq_{p} N$, if the weak bisimilarity holds with a \emph{distance} or \emph{tolerance} $p \in [0,1]$, \emph{i.e.}, if $M$ and $N$ exhibit a different behaviour with probability $p$, and the same behaviour with probability $1-p$. 
A useful generalisation  is  the \emph{$n$-bisimulation metric}~\cite{vB12} that takes into account bounded computations. Intuitively, the distance $p$ 
is ensured only for the first $n$ computational steps, for some $n \in \mathbb{N}$.
However, 
in timed systems it is desirable  to focus on the passage of time rather than the number of computational steps. This would allow us to deal with situations where it is not necessary (or it simply does not make sense) to compare two systems ``ad infinitum'' but only for a limited amount of time.

\subsubsection*{Contribution.}
In this paper, we first introduce a general notion of \emph{timed bisimulation metric} for concurrent probabilistic systems equipped with a discrete notion of time. Intuitively, this kind of  metric allows us to derive a \emph{timed weak bisimulation with tolerance}, denoted with $\approx_p^k$, for $k\in \mathbb{N}^+ \cup \{ \infty \}$  and $p \in [0,1]$, to express that the tolerance $p$ between  two timed systems  is ensured only for the first $k$ time instants ($\tick$-actions). Then, we use our timed bisimulation metric to set up 
a formal \emph{compositional} theory  to study and  measure the \emph{impact} 
of cyber-physical attacks on IoT systems specified in a simple probabilistic timed process calculus which 
extends Hennessy and Regan's \emph{Timed Process Language} (TPL)~\cite{HR95}. 
IoT systems in our calculus are modelled by specifying:
\begin{inparaenum}[(i)]
\item a \emph{physical environment\/},  containing informations on the 
physical state variables and the sensor measurements, and 
\item a \emph{logics} that governs both accesses to sensors and channel-based communications with other cyber components. 
\end{inparaenum}

We focus on \emph{attacks on sensors}
that may eavesdrop and possibly modify the sensor measurements provided to the controllers of  sensors, affecting both the \emph{integrity} and the \emph{availability} of the system under attack. 

In order to make security assessments of our IoT systems, we   adapt 
a well-know approach called \emph{Generalized Non Deducibility on Composition} (GNDC)~\cite{FM99} to compare the behaviour of an IoT system $M$  with the behaviour of the same system under attack, written  $M \parallel A$, for some arbitrary cyber-physical attack $A$. 
This comparison makes use of our timed bisimulation metric  to evaluate not only the \emph{tolerance} and the \emph{vulnerability} of a  system $M$ with respect to a certain attack $A$, but also  the \emph{impact} of a successful attack  in terms of the deviation introduced  in the behaviour of the target system. In particular, we say that a system $M$ \emph{tolerates an attack} $A$ if $M \parallel A \approx^{\infty}_0 M $, \emph{i.e.}, the presence of $A$ does not affect the behaviour of $M$; whereas $M$ is said to be \emph{vulnerable} to $A$ in the time interval $m..n$ with impact $p$ if $m..n$ is the smallest interval  such that 
$M \parallel A \approx^{m-1}_0 M $ and $M  \parallel A \approx^{k}_p M $, for any $k \geq n$, \emph{i.e.}, if the perturbation introduced by the attack $A$ becomes observable in the $m$-th time slot and yields the maximum \emph{impact} $p$ in the $n$-th time slot. 
In the concluding discussion we will show that the \emph{temporal vulnerability window} $m..n$ provides several informations about the corresponding attack, such as \emph{stealthiness} capability, duration of the \emph{physical effects} of the attack, and  consequent room for possible run-time \emph{countermeasures}. 

As a case study, we use our timed bisimulation metric  to measure the impact of two different attacks injecting \emph{false positives} and \emph{false negative}, respectively, into a simple surveillance system expressed in our process calculus.  

\subsubsection*{Outline.} Section~\ref{sectionMainDefinitions} formalises our timed bisimulation metrics in a general setting. Section~\ref{sec:impact} provides a simple calculus of IoT systems. 
Section~\ref{sec:cyber-physical-attackers} defines cyber-physical attacks together with the notions of tolerance and vulnerability 
\emph{w.r.t.}\ an attack.  In Section~\ref{sec:case} we use our metrics to evaluate the impact of two attacks on a simple surveillance system. Section~\ref{sec:conclusions} draws conclusions and discusses related and future \nolinebreak work. In this extended abstract proofs are omitted, full details of the proofs can be found in the Appendix.

\section{Timed Bisimulation Metrics}
\label{sectionMainDefinitions}

In this section, we introduce \emph{timed bisimulation metrics} as a general instrument to derive a notion of timed and approximate weak bisimulation between probabilistic systems equipped with a  discrete notion of time. 
In  \autoref{sec:PTS},  we recall the semantic model of \emph{nondeterministic probabilistic labelled transition systems}; in \autoref{sec:new},  we present our metric semantics.

\subsection{Nondeterministic Probabilistic Labelled Transition Systems}
\label{sec:PTS} 
Nondeterministic probabilistic labelled transition systems (pLTS) \cite{S95} combine classic LTSs \cite{K76} and discrete-time Markov chains \cite{HJ94,Ste94}  to model, at the same time, reactive behaviour, nondeterminism and probability.
We first provide the mathematical machinery required to define a pLTS. 

The state space in a pLTS is given by a set $\states$, whose elements are called $\emph{processes}$, or $\emph{terms}$. We use $t,t',..$ to range over $\states$. A (discrete) \emph{probability sub-distribution} over $\states$ is a mapping $\Delta \colon \states \to [0,1]$, with $\sum_{t \in \states}\Delta(t) \in (0 , 1]$. We denote $\sum_{t \in \states}\Delta(t)$ by $\size{\Delta}$, and we say that $\Delta$ is a \emph{probability distribution} if $\size{\Delta}=1$. The \emph{support} of $\Delta$ is given by $\lceil \Delta \rceil = \{ t \in \states : \Delta(t) > 0 \}$. The set of all sub-distributions (resp. distributions) over $\states$ with finite support will be denoted with ${\mathcal D}_{\mathrm{sub}}(\states)$ (resp. ${\mathcal D}(\states)$). We use $\Delta$, $\Theta$, $\Phi$ to range over ${\mathcal D}_{\mathrm{sub}}(\states)$ and ${\mathcal D}(\states)$.
\begin{definition}[pLTS~\cite{S95}]
\label{def:pLTS}
A \emph{pLTS} is a triple $(\states,\Act,\transStep{})$, where: 
\begin{inparaenum}[(i)]
\item $\states$ is a countable set of \emph{terms}, 
\item $\Act$ is a countable set of \emph{actions}, and 
\item $\transStep{} \, \subseteq {\states \times \Act \times \distr{\states}}$ is a \emph{transition relation}. 
\end{inparaenum}
\end{definition}
In Definition~\ref{def:pLTS}, we assume the presence of a special deadlocked term $\dummyN \in \states$. Furthermore, we assume that the set of actions $\Act$ contains at least two actions: $\tau$ and $\tick$. 
The former to model internal computations that cannot be externally observed, 
 while the latter denotes the passage of one time unit in a setting with a discrete notion of time~\cite{HR95}. In particular, $\tick$ is the only \emph{timed action} in $\Act$.

We write $t \transStep{\alpha} \Delta$ for $(t,\alpha,\Delta)\!\in \, \transStep{}$, 
$t \transStep{\alpha}$ if there is a distribution $\Delta \in \distr{\states}$ with $t \transStep{\alpha} \Delta$, and $t \ntransStep{\alpha}$ otherwise. 
Let $\der(t,\alpha) =\{\Delta\in\distr{\states} \mid t \transStep{\alpha}\Delta\}$ denote the set of the derivatives (i.e.\ distributions) reachable from term $t$ through action $\alpha$.
We say that a pLTS  is \emph{image-finite} \cite{HPSWZ11} if $\der(t,\alpha)$ is finite for all $t \in \states$ and $\alpha \in \Act$.
In this paper, we will always work with  image-finite pLTSs. \\[4pt] 
\noindent 
\emph{Weak transitions.} 
As we are interested in developing a \emph{weak} bisimulation metric, we need a definition of weak transition 
 which abstracts away from $\tau$-actions.
\label{sec:weak}
In a probabilistic setting, the definition of weak transition is somewhat complicated by the fact that (strong) transitions take terms to distributions;
consequently if we are to use weak transitions  then we need to generalise transitions, so that they take (sub-)distributions to (sub-)distributions.

To this end,  we need some extra notation on distributions.
For a term $t \in \states$, the \emph{point (Dirac) distribution at $t$\/}, denoted $\dirac{t}$, is defined by $\dirac{t}(t) = 1$ and $\dirac{t}(t') = 0$ for all $t' \neq t$.
Then, the convex combination $\sum_{i \in I}p_i \cdot  \Delta_i$ of a family $\{\Delta_i\}_{i \in I}$ of (sub-)distributions, with $I$ a finite set of indexes, $p_i \in (0,1]$ and $\sum_{i \in I}p_i \le 1$, is the (sub-)distribution defined by
$(\sum_{i \in I}p_i \cdot  \Delta_i)(t)  \deff \sum_{i\in I}p_i \cdot \Delta_i(t)$
for all $t \in \states$.
We write $\sum_{i \in I}p_i \cdot  \Delta_i$ as $p_1 \cdot \Delta_1 + \ldots + p_n \cdot \Delta_n$ when $I = \{ 1, \ldots , n \}$.

Along the lines of~\cite{Dengetal2008}, we write $t \transStep{\hat{\tau}} \Delta$, for some term $t$ and some distribution $\Delta$,  if either $t \transStep{\tau} \Delta$ or $\Delta = \dirac{t}$. 
Then, for $\alpha \neq \tau$, we write $t \transStep{\hat{\alpha}} \Delta$ if $t \transStep{\alpha} \Delta$.
Relation $\transStep{\hat{\alpha}}$ is extended to model transitions from sub-distributions to sub-distributions.
For a sub-distribution $\Delta =\sum_{i \in I}p_i \cdot \dirac{t_i}$, we write $\Delta \transStep{\hat{\alpha}} \Theta$ if there is a non-empty set of indexes $J\subseteq I$ such that: 
\begin{inparaenum}[(i)]
\item
$t_j \transStep{\hat{\alpha}} \Theta_j$ for all $j \in J$, 
\item
$t_i \ntransStep{\hat{\alpha}}$, for all $i \in I \setminus J$,  and 
\item $\Theta = \sum_{j \in J}p_j  \cdot \Theta_j$.
\end{inparaenum}
Note that if $\alpha \neq \tau$ then this definition admits that only some terms in the support of $\Delta$ make the $\transStep{\hat{\alpha}}$ transition.
Then, we define the \emph{weak transition relation} $\TransStep{\hat{\tau}}$ as the transitive and reflexive closure of $\transStep{\hat{\tau}}$, \emph{i.e.}, 
$\TransStep{\hat{\tau}} \, = (\transStep{\hat{\tau}})^{\ast}$, while for $\alpha \neq \tau$ we let $\TransStep{\hat{\alpha}}$ denote $\TransStep{\hat{\tau}} \transStep{\hat{\alpha}} \TransStep{\hat{\tau}}$.

\subsection{Timed Weak Bisimulation with Tolerance}
\label{sec:new}
In this section, we define a family of relations 
$\bisimtoll{k}{p}$ over $\states$, with $p \in [0,1]$ and $k \in \mathbb{N}^+ \cup \{ \infty \}$, where, intuitively, $t \bisimtoll{k}{p} t'$ means that \emph{$t$ and $t'$ can weakly bisimulate each other with a tolerance $p$ accumulated in $k$ timed steps}.
This is done by introducing a family of \emph{pseudometrics} $\metric^k \colon \states \times \states \to [0,1]$ and defining $t \bisimtoll{k}{p} t'$ iff $\metric^{k}(t,t') = p$. The pseudometrics $\metric^k$ will have the following properties for any $t,t' \in \states$:\!
\begin{inparaenum}[(i)]
\item 
$\metric^{k_1}(t,t') \le \metric^{k_2}(t,t')$ whenever $k_1 < k_2$ (tolerance monotonicity);
\item
$\metric^{\infty}(t,t') = p$ iff $p$ is the distance between $t$ and $t'$ as given by the weak bisimilarity metric in \cite{DJGP02} in an untimed setting; 
\item
$\metric^{\infty}(t,t') = 0$ iff $t$ and $t'$ are related by the standard weak probabilistic bisimilarity~\cite{ALS00}.
\end{inparaenum}

Let us recall the standard definition of pseudometric.
\begin{definition}[Pseudometric]
\label{def:pseudoquasimetric}
A function $d \colon \states \times \states  \to [0,1]$ is a \emph{1-bounded pseudometric} over $\states$ if
\begin{itemize}	
	\item $d(t,t)= 0$ for all $t \in \states$,
         \item $d(t,t') = d(t',t)$ for all $t,t' \in \states$ (symmetry),
	\item $d(t,t') \le d(t,t'') + d(t'',t')$ for all $t,t',t''\in \states$ (triangle inequality).
\end{itemize}
\end{definition}

In order to define the family of functions 
$\metric^{k}$, we define an auxiliary family of functions  $\metric^{k,h} \colon \states \times \states  \to [0,1]$, with $k,h \in \mathbb{N}$, 
quantifying the tolerance of the weak bisimulation after a sequence of computation steps such that:
\begin{inparaenum}[(i)] 
\item the sequence contains exactly $k$ $\tick$-actions,
\item the sequence terminates with a $\tick$-action,
\item any term performs exactly $h$ untimed actions before the first $\tick$-action, 
\item between any $i$-th and $(i{+}1)$-th $\tick$-action,  with $1\le i < k$,  
there are an arbitrary number of untimed actions. 
\end{inparaenum}

The definition of $\metric^{k,h}$ 
relies on a \emph{timed and quantitative} version of the classic bisimulation game: 
The tolerance between $t$ and $t'$ as given by $\metric^{k,h}(t,t')$ can be below
a threshold $\epsilon \in [0,1]$ only if each transition $t \transSimone[\alpha] \Delta$ is mimicked by a weak transition $t' \TransSimone[\hat{\alpha}] \Theta$ such that the bisimulation tolerance between $\Delta$ and $\Theta$ is, in turn, below $\epsilon$.
This requires to lift pseudometrics over $\states$ to pseudometrics over (sub-)distributions in ${\mathcal D}_{\mathrm{sub}}(\states)$.
To this end, 
we adopt the notions of \emph{matching}~\cite{Vil08} (also called coupling) and \emph{Kantorovich lifting\/}~\cite{Den09}.
\begin{definition}[Matching]
\label{def_matching}
A \emph{matching} for a pair of distributions 
$(\Delta,\Theta) \in {\mathcal D}(\states) \times {\mathcal D}(\states)$ is a distribution $\omega$ in the state product space ${\mathcal D}(\states \times \states)$ 
such that: 
\begin{itemize}
\item 
$\sum_{t' \in \states} \omega(t,t')=\Delta(t)$, for all $t \in \states$, and 
\item 
 $\sum_{t \in \states} \omega(t,t')=\Theta(t')$, for all $t' \in \states$. 
\end{itemize}
We write $\Omega(\Delta ,\Theta)$ to denote the set of all matchings for $(\Delta,\Theta)$.
\end{definition}
A matching for $(\Delta,\Theta)$ may be understood as a transportation schedule for the shipment of probability mass from $\Delta$ to $\Theta$ \cite{Vil08}.

\begin{definition}[Kantorovich lifting] \label{def:KantorovichLifting}
\label{def:Kantorovich}
Assume a pseudometric $d\colon \states \times \states  \to [0,1]$.
 The \emph{Kantorovich lifting} of $d$ is the function
$\Kantorovich(d) \colon \distr{\states} \times \distr{\states} \to [0,1]$ defined for distributions $\Delta$ and $\Theta$ 
as: 
\begin{center}
\(
\Kantorovich(d)(\Delta,\Theta) \deff  \min_{\omega \in \Omega(\Delta,\Theta)} \sum_{s,t \in \states}\omega(s,t) \cdot d(s,t).
\)
\end{center}
\end{definition}
Note that since we are considering only distributions with finite support, the minimum over the set of matchings $\Omega(\Delta,\Theta)$ used in \autoref{def:Kantorovich} is well defined.

Pseudometrics $\metric^{k,h}$  are  inductively defined on $k$ and $h$ by means of suitable \emph{functionals} over the complete lattice  ${([0,1]^{\states \times \states},\sqsubseteq)}$ of functions of type $\states \times \states \to [0,1]$, 
ordered by $d_1 \sqsubseteq d_2$ iff $d_1(t, t') \le d_2(t,t')$ for all $t,t' \in \states$. Notice that in this lattice, for each set $D \subseteq [0,1]^{ \states\times  \states}$, the supremum and infimum are defined as $\sup(D)(t,t') = \sup_{d \in D}d(t,t')$ and $\inf(D)(t,t') = \inf_{d \in D}d(t,t')$, for all $t,t' \in \states$. The infimum of the lattice is the constant function zero, denoted by $\zeroF$, and the supremum is the constant function one, denoted by $\oneF$.
\begin{definition}[Functionals for $\metric^{k,h}$] 
\label{def:metric_sim_functional}
The functionals $\Bisimulation, \Bisimulation_{\tick} \colon [0,1]^{\states\times \states } \to [0,1]^{ \states  \times \states}$
are defined for any function $d \in [0,1]^{\states\times \states }$
and terms $t,t' \in \states$ as: \\[2pt]
\begin{math}
\begin{array}{rccl}
\Bisimulation(d)(t,t') 
&=&
\displaystyle \max\{ & d(t,t')  , \\
 &&&
\displaystyle  \sup_{\alpha \in \Act{\setminus} \{ \tick  \}} \; 
 \max_{t \transStep{\alpha}\Delta} \; \inf_{t' \TransStep{\hat{\alpha}} \Theta} \Kantorovich(d)\big(\Delta,\Theta + (1-\size{\Theta})\dirac{\dummyN} \big),\\
& & &
\displaystyle  \sup_{\alpha \in \Act {\setminus} \{ \tick \} } \; 
 \max_{t' \transStep{\alpha}\Theta} \; \inf_{t \TransStep{\hat{\alpha}} \Delta} \Kantorovich(d)\big(\Delta + (1-\size{\Delta}) \dirac{\dummyN}, \Theta \big) \:\}
\\[1.5 ex] 
\Bisimulation_{\tick}(d)(t,t') 
&=&
\displaystyle  \max\{ & d(t,t')  , \\
 & &&
\displaystyle 
 \max_{t \transStep{\tick}\Delta}\;  \inf_{t' \TransStep{\widehat{\tick}} \Theta} \Kantorovich(d)\big(\Delta,\Theta + (1-\size{\Theta})\dirac{\dummyN} \big),\\
& & &
\displaystyle   \max_{t' \transStep{\tick}\Theta} \; \inf_{t \TransStep{\widehat{\tick}} \Delta} \Kantorovich(d)\big(\Delta + (1-\size{\Delta}) \dirac{\dummyN}, \Theta \big) \; \}
\end{array}
\end{math} \\
where $\inf \emptyset = 1$ and $\max \emptyset = 0$.
\end{definition}

Notice that all $\max$ in \autoref{def:metric_sim_functional} are well defined since the pLTS is image-finite.
Notice also that any strong transitions from $t$ to a distribution $\Delta$ is mimicked by a weak transition from $t'$, which, in general, takes to a sub-distribution $\Theta$.
Thus,  process $t'$ may not simulate $t$ with probability $1{-}\size{\Theta}$.

\begin{definition}[Timed weak bisimilarity metrics]
\label{twbm}
The family of the  \emph{timed weak bisimilarity metrics} $\metric^k \colon (\states \times \states) \to [0,1]$  is  defined for all $k \in \mathbb{N}$ by
\(
\metric^{k} = 
\begin{cases}
\zeroF & \text{ if } k = 0 \\
 \sup_{h \in \mathbb{N}}\metric^{k,h} 
& \text{ if } k > 0 
\end{cases}
\)

\noindent 
while 
the functions $\metric^{k,h }\colon(\states \times \states) \! \to \! [0,1]$ are defined for all $k \in \mathbb{N}^+$ and \nolinebreak  $h \in \mathbb{N}$ \nolinebreak  by\\
\(
\metric^{k,h} = \begin{cases} 
\displaystyle\Bisimulation_{\mathit{\tick}}(\metric^{k-1}) & \text{ if } h = 0 \\
\Bisimulation(\metric^{k,h-1})  & \text{ if } h > 0.
\end{cases}
\)

\noindent
Then, we define $\metric^{\infty} \colon (\states \times \states) \to [0,1]$ as $\metric^{\infty} = \sup_{k \in \mathbb{N}}\metric^{k}$.
\end{definition}
Note that any $\metric^{k,h}$ is obtained from  $\metric^{k-1}$ by one application of the functional $\Bisimulation_{\tick}$, in order to take into account the distance between terms introduced by 
the $k$-th $\tick$-action, and $h$ applications of the functional $\Bisimulation$, in order to lift such a distance to terms that take $h$ untimed actions to be able to perform a $\tick$-action. By taking $\sup_{h \in \mathbb{N}} \metric^{k,h}$ we consider an arbitrary number of untimed steps.

The pseudometric property of $\metric^k$ is necessary to conclude that 
the tolerance between terms as given by $\metric^k$ is a reasonable notion of behavioural distance.

\begin{theorem}
\label{q_and_m_are_metrics}
For any  $k \ge 1$,
$\metric^{k}$ is a 1-bounded pseudometric.
\end{theorem}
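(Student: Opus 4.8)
The plan is to prove the three axioms of \autoref{def:pseudoquasimetric} for $\metric^k$ by induction on $k$, taking $\metric^0 = \zeroF$ (a trivial $1$-bounded pseudometric) as base case and assuming as induction hypothesis that $\metric^{k-1}$ is a $1$-bounded pseudometric. The argument rests on three auxiliary facts I would isolate first: (a) the Kantorovich lifting of \autoref{def:Kantorovich} sends $1$-bounded pseudometrics to $1$-bounded pseudometrics on $\distr{\states}$ --- reflexivity via the diagonal matching, symmetry via transposition of matchings, and the triangle inequality via the standard ``gluing'' of two optimal matchings; (b) the functionals $\Bisimulation$ and $\Bisimulation_{\tick}$ are monotone and preserve both reflexivity and symmetry of their argument (reflexivity because each challenge of a term is answered by the identical move of the same term, with Kantorovich cost $0$); and (c) the pointwise supremum of a family of $1$-bounded pseudometrics is again a $1$-bounded pseudometric. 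Granting (a)--(c), reflexivity and symmetry of $\metric^k=\sup_h \metric^{k,h}$ follow at once: each approximant $\metric^{k,h}$ inherits them from $\metric^{k-1}$ through finitely many applications of $\Bisimulation_{\tick}$ and $\Bisimulation$, and the supremum preserves them by (c).

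The whole difficulty is the triangle inequality, and the key observation is that one must \emph{not} try to establish it for the finite approximants $\metric^{k,h}$: the functionals do not preserve the triangle inequality on an arbitrary pseudometric, since a single challenger step is answered by a \emph{weak} response, and a naive step-by-step matching through an intermediate term would let the tolerance accumulate along internal $\tau$-steps. Instead I would exploit the global structure of \autoref{twbm}: since $\Bisimulation$ is monotone and continuous and $\Bisimulation(d)\sqsupseteq d$ for every $d$, the chain $\{\metric^{k,h}\}_h$ is increasing and its supremum satisfies $\Bisimulation(\metric^k)=\metric^k$; that is, $\metric^k$ is a \emph{fixed point} of $\Bisimulation$ lying above $\Bisimulation_{\tick}(\metric^{k-1})$. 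This fixed-point identity is exactly what makes the tolerance self-consistent: whenever $\metric^k(v,s)$ is small, every transition of $v$ is already answerable from $s$ within cost $\metric^k(v,s)$, so no distance has to be re-incurred when chaining responses.

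With this in hand I would prove $\metric^k(t,t') \le \metric^k(t,t'') + \metric^k(t'',t')$ by the usual ``through-the-middle'' bisimulation game, now free of accumulation. For an untimed challenge $t \transStep{\alpha}\Delta$ I would first choose, up to an $\epsilon$, a weak response $t''\TransStep{\hat\alpha}\Theta''$ with $\Kantorovich(\metric^k)\big(\Delta,\Theta''+(1{-}\size{\Theta''})\dirac{\dummyN}\big)\le \metric^k(t,t'')$, and then \emph{transfer} that weak transition of $t''$ to a weak transition $t'\TransStep{\hat\alpha}\Theta'$: decomposing $t''\TransStep{\hat\alpha}\Theta''$ into its constituent strong steps, answering each support state of the intermediate sub-distributions from $t'$ via the fixed-point property, and recombining the per-state responses convexly, so as to obtain $\Kantorovich(\metric^k)\big(\Theta''+(1{-}\size{\Theta''})\dirac{\dummyN},\Theta'+(1{-}\size{\Theta'})\dirac{\dummyN}\big)\le \metric^k(t'',t')$; the triangle inequality of $\Kantorovich(\metric^k)$ from (a) then yields the claimed bound, and taking $\epsilon\to 0$ and the infimum over responses closes the case. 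The timed case, a $\tick$-challenge, is handled identically, using $\metric^k\sqsupseteq \Bisimulation_{\tick}(\metric^{k-1})$ to answer the first $\tick$, after which the residual distance is measured by $\metric^{k-1}$ and the induction hypothesis supplies its triangle inequality on the post-$\tick$ distributions.

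The main obstacle is precisely this \emph{weak-transfer} step --- showing that $\metric^k$-closeness of $t''$ and $t'$ lets $t'$ reproduce an \emph{arbitrary weak} transition of $t''$ within the \emph{same} tolerance, with no degradation over the internal $\tau$-moves. Making it rigorous requires the compatibility (linearity) of weak transitions with convex combinations of sub-distributions, a careful treatment of the $\dummyN$-padding inside the Kantorovich lifting, and an $\epsilon$-argument to cope with infima that need not be attained; the fixed-point identity $\Bisimulation(\metric^k)=\metric^k$ is the indispensable ingredient that keeps the bound from accumulating. Everything else --- reflexivity, symmetry, $1$-boundedness, and the final passage to the supremum via (c) --- is routine.
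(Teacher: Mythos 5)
Your overall architecture coincides with the paper's: reflexivity and symmetry are dispatched as routine; the fixed-point identity $\Bisimulation(\metric^k)=\metric^k$ (which the paper isolates as a lemma, proved via Knaster--Tarski plus continuity exactly as you indicate) and the weak-transfer lemma (that $t'$ can answer an arbitrary \emph{weak} transition of a term $t''$ within cost $\metric^k(t'',t')$, proved by induction on the length of the weak transition, with convex recombination of per-state responses and careful $\dummyN$-padding) are precisely the paper's two preparatory lemmas; and the triangle inequality is obtained by the same through-the-middle game. However, your combination step has two genuine holes.

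First, you close the game by invoking ``the triangle inequality of $\Kantorovich(\metric^k)$ from (a)''. Fact (a) gives the triangle inequality of $\Kantorovich(d)$ only when $d$ itself satisfies the triangle inequality --- and the triangle inequality of $\metric^k$ is exactly what is being proved, so as stated this step is circular. The paper avoids the circularity with an asymmetric strengthening of the gluing lemma: if $d(t,t')\le d'(t,t'')+d'(t'',t')$ holds \emph{pointwise} (with no pseudometric assumption on either $d$ or $d'$), then $\Kantorovich(d)(\Delta_1,\Delta_2)\le \Kantorovich(d')(\Delta_1,\Delta_3)+\Kantorovich(d')(\Delta_3,\Delta_2)$; your gluing construction actually proves this stronger statement with no extra work, but it must be stated in this form to be usable here. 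Second, establishing that every challenge $t\transStep{\alpha}\Delta$ is answered by $t'$ with residual Kantorovich cost at most $B=\metric^k(t,t'')+\metric^k(t'',t')$ does not by itself yield $\metric^k(t,t')\le B$: the fixed-point identity only says the transfer terms are $\le\metric^k(t,t')$, not conversely, so ``taking $\epsilon\to 0$ and the infimum over responses'' does not close the case. What converts the transfer condition into an upper bound on $\metric^k$ is \emph{leastness}: the paper defines the candidate $\metric(t,t')=\min\big(\metric^k(t,t'),\,\inf_{t''}(\metric^k(t,t'')+\metric^k(t'',t'))\big)$, shows --- using the two lemmas above, with residuals measured in $\Kantorovich(\metric)$ rather than $\Kantorovich(\metric^k)$, which is exactly where the asymmetric gluing lemma is indispensable --- that $\metric$ is a prefixed point of $\Bisimulation$ in the lattice of functions above $\Bisimulation_{\tick}(\metric^{k-1})$, and then concludes $\metric^k\sqsubseteq\metric$ because $\metric^k$ is the \emph{least} prefixed point there. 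Your sketch mentions the fixed-point identity but never invokes minimality; without it (or an equivalent induction over the approximants $\metric^{k,h}$) the final inference does not go through.
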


Finally, everything is in place to define our timed weak bisimilarity 
$\bisimtoll{k}{p}$ with tolerance $p \in [0, 1]$ accumulated after $k$
time units, for $k \in \mathbb{N} \cup \{\infty\}$. 
\begin{definition}[Timed weak bisimilarity with tolerance]
\label{def:distance-n}
Let $t, t' \in \states$, $k \in \mathbb{N}$ and $p \in [0,1]$. 
We say that \emph{$t$ and $t'$ are 
weakly bisimilar with a tolerance $p$, which accumulates in $k$ timed actions}, 
written $t \bisimtoll{k}{p} t'$, if and only if $\metric^{k}(t,t') = p$.
Then, we write
$t \bisimtoll{\infty}{p} t'$  if and only if $\metric^{\infty}(t,t') = p$.
\end{definition}

Since the Kantorovich lifting $\Kantorovich$ is monotone \cite{Pan09}, it follows that both functionals $\Bisimulation$ and $\Bisimulation_{\tick}$ are monotone.
This implies that, for any $k\geq 1$,  $(\metric^{k,h})_{h \ge 0}$ is a non-decreasing chain and,
analogously, also $(\metric^k)_{k \ge 0}$ is a non-decreasing chain, thus giving the following expected result saying that the distance between terms grows when we consider a higher number of $\tick$ computation steps.
\begin{proposition}[Tolerance monotonicity]
\label{prop:tol-monotonicity}
For all terms $t,t' \in \states$ and $k_1,k_2 \in \mathbb{N}^+$ with $k_1 < k_2$,  $t \bisimtoll{k_1}{p_1} t'$ and $t \bisimtoll{k_2}{p_2} t'$ entail $p_1 \le p_2$.
\end{proposition}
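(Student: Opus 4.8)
The plan is to unfold both hypotheses through \autoref{def:distance-n} and reduce everything to a single monotonicity statement about the pseudometrics $\metric^k$ in the lattice $([0,1]^{\states\times\states},\sqsubseteq)$. Concretely, $t \bisimtoll{k_1}{p_1} t'$ and $t \bisimtoll{k_2}{p_2} t'$ mean exactly $p_1 = \metric^{k_1}(t,t')$ and $p_2 = \metric^{k_2}(t,t')$, so the claim $p_1 \le p_2$ follows pointwise once I establish $\metric^{k_1} \sqsubseteq \metric^{k_2}$ whenever $k_1 < k_2$. Since the order $<$ on $\mathbb{N}^+$ is generated by successors, it suffices to prove the one-step inequality $\metric^{k} \sqsubseteq \metric^{k+1}$ for every $k \ge 1$ and then chain these by transitivity of $\sqsubseteq$.

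The key observation, which I would isolate as a small lemma, is that the game functionals of \autoref{def:metric_sim_functional} are \emph{extensive} (inflationary): for every $d \in [0,1]^{\states\times\states}$ one has $d \sqsubseteq \Bisimulation(d)$ and $d \sqsubseteq \Bisimulation_{\tick}(d)$. This is immediate by inspection, because in both defining expressions the outermost operator is a $\max$ whose first argument is literally $d(t,t')$; hence $\Bisimulation(d)(t,t') \ge d(t,t')$ and $\Bisimulation_{\tick}(d)(t,t') \ge d(t,t')$ for all $t,t'$. Notice that this uses none of the fixed-point or continuity machinery invoked for \autoref{q_and_m_are_metrics}; it is a purely syntactic feature of the $\max$ in the definitions, and in particular it does not even require the monotonicity of $\Kantorovich$.

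With extensivity in hand the one-step inequality is a short chain in the lattice. For $k \ge 1$, the index $h = 0$ belongs to $\mathbb{N}$, so $\metric^{k+1} = \sup_{h \in \mathbb{N}} \metric^{k+1,h} \sqsupseteq \metric^{k+1,0}$; and by definition $\metric^{k+1,0} = \Bisimulation_{\tick}(\metric^{k})$, which by extensivity dominates $\metric^{k}$. Composing, $\metric^{k} \sqsubseteq \Bisimulation_{\tick}(\metric^{k}) = \metric^{k+1,0} \sqsubseteq \metric^{k+1}$. Transitively this yields $\metric^{k_1} \sqsubseteq \metric^{k_2}$ for all $1 \le k_1 < k_2$, and evaluating at the pair $(t,t')$ gives $p_1 = \metric^{k_1}(t,t') \le \metric^{k_2}(t,t') = p_2$, as required.

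I do not expect a genuine obstacle here: the statement is precisely the $\approx$-level reformulation of the remark, made just before the proposition, that $(\metric^k)_{k \ge 0}$ is a non-decreasing chain. The only points demanding mild care are bookkeeping ones, namely anchoring the supremum over $h$ at $h = 0$ so that $\metric^{k+1}$ provably dominates $\metric^{k+1,0}$, and keeping the correct functional ($\Bisimulation_{\tick}$, not $\Bisimulation$) at the $h=0$ position, since it is exactly the $\tick$-step that bridges stage $k$ to stage $k+1$. The base layer $\metric^0 = \zeroF$ never enters the argument, because the proposition restricts to $k_1,k_2 \in \mathbb{N}^+$.
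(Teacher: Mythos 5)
Your proof is correct, and it takes a genuinely different route from the paper's. The paper gives no detached proof of this proposition; its justification is the sentence immediately preceding it, which derives the chain property of $(\metric^k)_{k \ge 0}$ from \emph{monotonicity} of the functionals $\Bisimulation$ and $\Bisimulation_{\tick}$, itself inherited from monotonicity of the Kantorovich lifting $\Kantorovich$ (citing the literature). You instead rely on \emph{extensivity}, $d \sqsubseteq \Bisimulation(d)$ and $d \sqsubseteq \Bisimulation_{\tick}(d)$, which is a purely syntactic consequence of the outer $\max$ having $d(t,t')$ as its first argument, and uses no property of $\Kantorovich$ whatsoever. Your route is tighter in a precise sense: monotonicity of a functional only propagates an inequality along its iterates, so the paper's argument still needs a base-case inequality somewhere (e.g.\ $\metric^{k,0} \sqsubseteq \Bisimulation(\metric^{k,0})$, or a nested induction anchored at $\metric^0 = \zeroF$), and that base case is exactly your extensivity observation; your chain $\metric^{k} \sqsubseteq \Bisimulation_{\tick}(\metric^{k}) = \metric^{k+1,0} \sqsubseteq \sup_{h \in \mathbb{N}} \metric^{k+1,h} = \metric^{k+1}$ supplies precisely the ingredient the paper's one-line remark leaves implicit, with no induction beyond chaining over successive $k$. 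What the paper's route buys in exchange is economy in context: monotonicity of $\Bisimulation$ and $\Bisimulation_{\tick}$ is needed anyway for the Knaster--Tarski and Kleene-chain arguments in \autoref{q_and_m_are_metrics}, so invoking it here costs nothing, whereas your lemma is self-contained and would remain valid even under a non-monotone lifting.
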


We conclude this section by comparing our  behavioural distance with the  behavioural relations known in the literature.

We recall that in \cite{DJGP02} a family of relations $\simeq_p$ for \emph{untimed}
process calculi  
are defined such that  $t \simeq_p t'$ if and only if $t$ and $t'$ weakly bisimulate each other with tolerance $p$.
Of course, one can apply these relations also to timed process calculi, the effect being that timed actions are treated in exactly the same manner as untimed actions. 
The following result compares the behavioural metrics proposed in the present paper with those of \cite{DJGP02},  and with the classical notions of probabilistic weak bisimilarity~\cite{ALS00} denoted $\approx$.

\begin{proposition}
\label{prop_simulazione}
Let $t,t' \in \states$ and $p \in [0,1]$. Then, 
\begin{itemize}
\item
\label{prop_simulazione_uno}
$t \bisimtoll{\infty}{p} t'$ iff $t \simeq_p t'$ 
\item
\label{prop_simulazione_due}
$t \bisimtoll{\infty}{0} t'$ iff $t \approx t'$. 
\end{itemize}
\end{proposition}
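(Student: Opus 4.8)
The plan is to prove the first item by identifying $\metric^{\infty}$ with the \emph{untimed} weak bisimilarity pseudometric of~\cite{DJGP02}, and then to read off the second item as its $p=0$ instance. The crucial remark is that, once $\tick$ is regarded as an ordinary action, the one-step functional underlying the weak metric of~\cite{DJGP02} is exactly $\mathcal{F} \deff \max(\Bisimulation,\Bisimulation_{\tick})$ (pointwise maximum): the functional $\Bisimulation$ quantifies over $\Act\setminus\{\tick\}$, the functional $\Bisimulation_{\tick}$ over $\{\tick\}$, and both retain the summand $d(t,t')$, so $\mathcal{F}(d)$ is precisely the weak bisimulation functional ranging over all of $\Act$. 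Writing $\metric_{\mathrm w}$ for its least (pre)fixed point on $([0,1]^{\states\times\states},\sqsubseteq)$, one has $t\simeq_p t'$ iff $\metric_{\mathrm w}(t,t')=p$. Hence the first item amounts to the identity $\metric^{\infty}=\metric_{\mathrm w}$.

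I would prove this identity by two inclusions. For $\metric_{\mathrm w}\sqsubseteq\metric^{\infty}$ it suffices to check that $\metric^{\infty}$ is a prefixed point of $\mathcal{F}$. Recall that $\metric^{\infty}=\sup_k\metric^k$ is the supremum of the non-decreasing chain $(\metric^k)_k$; that, by its defining equations together with the continuity of $\Bisimulation$, each $\metric^k$ with $k\ge1$ is a fixed point of $\Bisimulation$, so $\Bisimulation(\metric^k)=\metric^k$ (this is the fixed-point fact established in the proof of~\autoref{q_and_m_are_metrics}); and that $\Bisimulation_{\tick}(\metric^{k})=\metric^{k+1,0}\sqsubseteq\metric^{k+1}$. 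Using the $\omega$-continuity of $\Bisimulation$ and $\Bisimulation_{\tick}$ on image-finite pLTSs to commute the functionals with $\sup_k$, I obtain $\Bisimulation(\metric^{\infty})=\sup_k\Bisimulation(\metric^k)\sqsubseteq\metric^{\infty}$ (the terms with $k\ge1$ equal $\metric^k$, while the $k=0$ term $\Bisimulation(\zeroF)$ is dominated by $\metric^1$) and $\Bisimulation_{\tick}(\metric^{\infty})=\sup_k\Bisimulation_{\tick}(\metric^k)=\sup_k\metric^{k+1,0}\sqsubseteq\metric^{\infty}$, whence $\mathcal{F}(\metric^{\infty})\sqsubseteq\metric^{\infty}$ and therefore $\metric_{\mathrm w}\sqsubseteq\metric^{\infty}$. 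For the reverse inclusion, any prefixed point $d$ of $\mathcal{F}$ satisfies $\Bisimulation(d)\sqsubseteq d$ and $\Bisimulation_{\tick}(d)\sqsubseteq d$; an easy induction on $k$ then gives $\metric^k\sqsubseteq d$ (base $\metric^0=\zeroF$; step: from $\metric^{k-1}\sqsubseteq d$ we get $\Bisimulation_{\tick}(\metric^{k-1})\sqsubseteq d$, hence $\Bisimulation^h(\Bisimulation_{\tick}(\metric^{k-1}))\sqsubseteq d$ for all $h$, and so $\metric^k\sqsubseteq d$). Taking $d=\metric_{\mathrm w}$ yields $\metric^{\infty}\sqsubseteq\metric_{\mathrm w}$, and the two inclusions give $\metric^{\infty}=\metric_{\mathrm w}$, proving the first item.

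For the second item, by the first item $t\bisimtoll{\infty}{0}t'$ iff $t\simeq_0 t'$, so it remains to show $\metric^{\infty}(t,t')=0$ iff $t\approx t'$, \emph{i.e.}\ that the kernel of $\metric_{\mathrm w}$ is weak probabilistic bisimilarity. The implication from right to left I would obtain by exhibiting a prefixed point: set $d_{\approx}(t,t')=0$ if $t\approx t'$ and $1$ otherwise. Since $\approx$ is a weak bisimulation, every $t\transStep{\alpha}\Delta$ is answered by some $t'\TransStep{\hat{\alpha}}\Theta$ with $\size{\Theta}=1$ and with $\Delta,\Theta$ related by the lifting of $\approx$ to distributions; the witnessing matching is concentrated on $\approx$-pairs, so $\Kantorovich(d_{\approx})(\Delta,\Theta)=0$, giving $\mathcal{F}(d_{\approx})\sqsubseteq d_{\approx}$ and hence $\metric^{\infty}=\metric_{\mathrm w}\sqsubseteq d_{\approx}$, that is $\metric^{\infty}(t,t')=0$ whenever $t\approx t'$. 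Conversely, I would show that $R\deff\{(t,t'):\metric^{\infty}(t,t')=0\}$ is a weak bisimulation: as $\metric^{\infty}$ is a fixed point of $\mathcal{F}$, $\metric^{\infty}(t,t')=0$ forces each $t\transStep{\alpha}\Delta$ to be matched by a weak transition $t'\TransStep{\hat{\alpha}}\Theta$ with $\Kantorovich(\metric^{\infty})(\Delta,\Theta+(1-\size{\Theta})\dirac{\dummyN})=0$, and a vanishing Kantorovich distance provides a matching supported on $\metric^{\infty}$-distance-$0$ pairs, that is on $R$, which is exactly the lifting $\Delta\mathbin{R^{\dagger}}\Theta$ required by weak bisimilarity.

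The main obstacle lies in two technical points. In the first item it is the continuity needed to interchange $\Bisimulation,\Bisimulation_{\tick}$ with the supremum $\sup_k$; this is the $\omega$-continuity guaranteed by image-finiteness and finiteness of supports, reused from the proof of~\autoref{q_and_m_are_metrics} along the lines of~\cite{vB12}. In the second item the delicate step is the treatment of the $\dummyN$-padding when proving that the kernel $R$ is closed under transitions: one must argue that the padding mass $(1-\size{\Theta})$ carries no weight at distance $0$, so that $t'$ answers with a genuine full distribution $\Theta$ and $\Delta\mathbin{R^{\dagger}}\Theta$ holds in the exact sense of the standard definition of $\approx$ in~\cite{ALS00}; this is settled as in the kernel-versus-bisimilarity correspondence of~\cite{DJGP02}.
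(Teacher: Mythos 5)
Your proposal is correct and follows essentially the same route as the paper: both identify $\metric^{\infty}$ with the least (pre)fixed point of the combined functional $\max(\Bisimulation,\Bisimulation_{\tick})$ (the paper's $\Bisimulation'$, your $\mathcal{F}$) underlying the untimed weak metric of~\cite{DJGP02}, and prove the two inclusions by induction over $k$ in one direction and by the prefixed-point property of $\metric^{\infty}$ in the other. The only differences are presentational: you justify the prefixed-point direction via $\omega$-continuity and the fixed-point lemma (a step the paper merely asserts), and you unfold the kernel correspondence $\simeq_0 \; = \; \approx$ that the paper handles purely by citation to~\cite{DJGP02}.
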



\section{A Simple Probabilistic Timed Calculus for IoT Systems}
\label{sec:impact}

In this section, we propose a simple  extension of Hennessy and Regan's \emph{timed process algebra} TPL~\cite{HR95}  to express \emph{IoT systems} and 
\emph{cyber-physical attacks\/}. The goal is to show that timed weak bisimilarity with tolerance is a suitable notion to estimate the impact of cyber-physical attacks on IoT systems.

Let us start with some preliminary notations. 
\begin{notation}
We use 
$x, x_k$ for \emph{state variables},  
$c,c_k,$ for \emph{communication channels}, 
$z_,z_k$ for \emph{communication variables}, 
 $s,s_k$ for \emph{sensors devices}, while 
$o$ ranges over both channels and sensors. \emph{Values}, ranged over by $v,v'$, belong to a \emph{finite} set of admissible values $\mathcal V$. 
We use $u, u_k$ for both values and communication variables. Given a generic set of names $\cal N $, we write $\mathcal{V}^{\cal N} $ to denote the set of functions $\mathcal N \rightarrow \mathcal{V} $ assigning a value to each name in $\mathcal N$. For  $m \in \mathbb{N}$ and $n \in \mathbb{N} \cup \{ \infty \}$, we write $m..n$ to denote an \emph{integer interval}. As we will adopt a discrete notion of time, we will use integer intervals to denote \emph{time intervals}.
\end{notation}

\emph{State variables} are associated to physical properties like 
\emph{temperature}, \emph{pressure}, etc. \emph{Sensor names}
are metavariables for sensor devices, such as 
\emph{thermometers} and \emph{barometers}. Please, notice that in cyber-physical systems, state variables cannot be directly  accessed but they can only be  
tested via one or more sensors.

\begin{definition}[IoT system]
\label{def:SmartSys}
Let $\mathcal{X}$ be a set of state variables and $\mathcal S$ be a set of sensors. Let $\mathit{range}: \mathcal X \rightarrow 2^{\mathcal{V}}$ be a total function 
returning the range of admissible values for any state variable $x \in \mathcal X$.  An \emph{IoT system}  consists of two components: 
\begin{itemize}[noitemsep]
\item a \emph{physical environment} $\xi = \langle \statefun{} , \measmap{} \rangle$
where: 
\begin{itemize}
\item  
$\statefun{} \in  \mathcal{V}^{\mathcal X}$ is the \emph{physical state}  of the system that associates a value to each state variable in $\mathcal X$, such that $\statefun{}(x) \in \mathit{range}(x)$ for any $x \in \mathcal{X}$,

\item $\measmap{}:  {\mathcal{V}}^{\mathcal X} \rightarrow  \mathcal S \rightarrow \distr{\mathcal{V}}$  is the \emph{measurement map}  that given a physical state
returns a function that associates to any sensor in $\mathcal S$ a discrete probability distribution over the set of possible sensed values; 
\end{itemize} 
\item a
\emph{logical (or cyber) component} 
$P$ that interacts with the sensors defined in $\xi$, and can communicate, via channels, with other cyber components. 
\end{itemize}
We write $\confCPS \xi  P$ to denote the resulting IoT system, and use 
$M$ and $N$ to range over IoT systems. 
\end{definition}

Let us now formalise the \emph{cyber component} of an IoT system.
Basically, we adapt Hennessy and Regan's \emph{timed process algebra TPL}~\cite{HR95}.

\begin{definition}[Logics]
\label{def:processes}
\emph{Logical components} of IoT systems  are defined by the
following grammar:\\[3pt]
\begin{math}
\begin{array}{rl}
P,Q \Bdf & \nil \q \big| \q 
\tick.P \q \big| \q 
P \parallel Q \q \big| \q 
\timeout {\mathit{pfx}.P} {Q}  \q \big| \q  
 H \langle \tilde{u} \rangle \q \big| \q
 \ifelse b P Q \q \big| \q P{\setminus}c  \\[3pt]
\mathit{pfx} \Bdf & \snda o v \Bor \rcva o z
\end{array}
\end{math}
\end{definition}
 The process $\tick.P$
sleeps for one time unit and then continues as $P$. 
We write $P \parallel Q$ to denote the \emph{parallel composition} of concurrent processes $P$ and $Q$. 
The process $\timeout {\mathit{pfx}.P} Q$ denotes
\emph{prefixing with timeout}. We recall that $o$ ranges over both channel and sensor names. Thus, for instance, 
$\timeout{\snda c v . P}Q$ sends the
value $v$ on channel $c$ and, after that, it continues as $P$; otherwise,
if no communication partner is available within one time unit, 
 it evolves into $Q$. The process $\timeout{\rcva c z.P}Q$ is the obvious counterpart for channel reception. 
On the other hand, the process    $\timeout{\rcva s z.P}Q$ reads the sensor $s$, according to 
the measurement map of the systems, and, after that, it
continues as $P$. 
The process  $\timeout{\snda s v . P}Q$  writes to the sensor $s$ and, after that, it continues as $P$; here, we wish to point out that this a \emph{malicious
activity}, as controllers may only access sensors for reading  sensed data. 
 Thus, the construct  $\timeout{\snda s v.P}Q$ serves to implement an 
\emph{integrity attack} that attempts at synchronising with the controller of sensor $s$ to 
provide a fake value $v$. 
In the following, we say that a process is \emph{honest} if it never writes on sensors. 
The definition of honesty  naturally lifts to IoT systems. 
In processes of the form $\tick.Q$ and $\timeout {\mathit{pfx}.P} Q$, the occurrence of $Q$ is said to be \emph{time-guarded}. 
\emph{Recursive processes} $H \langle \tilde{u} \rangle$ are defined via  equations $H(z_1,\ldots, z_k) = P$, where (i) the tuple $z_1,\ldots, z_k$ contains all the variables that appear free in $P$, and (ii) $P$ contains \emph{only  time-guarded occurrences} of the process identifiers, such as $H$ itself (to avoid \emph{zeno behaviours}). The two remaining constructs are standard; they model conditionals and channel restriction, respectively. 

Finally, we define how to compose IoT systems.  For simplicity,   we  compose two  systems only if they have the same physical environment. 
\begin{definition}[System composition]
\label{def:composing-systems}
Let $M_1 = \confCPS \xi  P_1$ and $M_2 = \confCPS \xi  P_2$ be two IoT systems, and $Q$ be a process whose sensors are defined 
in the physical environment $\xi$.  We write:
\begin{itemize}
\item  $M_1 \parallel M_2$ to denote $\confCPS \xi (P_1 \parallel P_2)$; 
\item $M_1 \parallel Q$ to denote $\confCPS \xi  { ({P_1}\parallel Q) }$; 
\item  $M_1{\setminus}c$ as an abbreviation for $\confCPS \xi  {({P_1} {\setminus}c)}$. 
\end{itemize}
\end{definition}

We conclude this section with the following  abbreviations that will be used in the rest of the paper. 
\begin{notation}
 We  write $P{\setminus}\{ c_1, c_2, \ldots , c_n \}$, or $P {\setminus}\tilde{c}$, to mean
$P{\setminus}{c_1}{\setminus}{c_2}\cdots{\setminus}{c_n}$. For simplicity, 
we sometimes abbreviate both $H(i)$ and $H \langle i \rangle$ with $H_i$. 
  We write $\mathit{pfx}.P$ as an abbreviation for 
the process defined via the  equation $\mathit{H} = 
\timeout{\mathit{pfx}.P}{\mathit{H}}$, where the process name $\mathit{H}$ does not occur in $P$. 
 We write $\tick^{k}.P$ as a shorthand for $\tick.\tick. \ldots \tick.P$, where the prefix $\tick$ appears $k \geq 0$ consecutive times.
 We write $\dummyN$ to denote a deadlocked IoT system  that  cannot perform any action.
\end{notation}


\subsection{Probabilistic labelled transition semantics}
\label{lab_sem}

\begin{table}[t]
\begin{displaymath}
\begin{array}{l@{\hspace*{5mm}}l}
\Txiom{Write}
{-}
{ { \timeout{\snda o v .P}Q } \trans{\snda o v}   P}
&
\Txiom{Read}
{-}   
{ { \timeout{\rcva o z .P}Q } \trans{\rcva o z}    {P}  }
\\[14pt]
\Txiom{Sync}
{ P \trans{\snda o v}  { P'}  \Q  Q \trans{\rcva o z}  { Q'} }
{ P \parallel  Q \trans{\tau}  {P'\parallel Q'{\subst v z}}}
&
\Txiom{Par}
{ P \trans{\lambda}  P' \Q \lambda \neq  \tick }
{ {P\parallel Q} \trans{\lambda} {P'\parallel Q}}
\\[14pt]

\Txiom{Res}{P \trans{\lambda} P' \Q \lambda \not\in \{ {\snda o v}, {\rcva o z} \}}{P {\setminus}o \trans{\lambda} {P'}{\setminus}o}

&

\Txiom{Rec}
{  P{\subst {\tilde{v}} {\tilde{z}}} \trans{\lambda}  Q \Q H(\tilde{z})=P}
{ H \langle \tilde{v} \rangle  \trans{\lambda}  Q}
\\[14pt] 

\Txiom{Then}{\bool{b}=\true \Q P \trans{\lambda} P'}
{\ifelse b P Q \trans{\lambda} P'}

&

\Txiom{Else}{\bool{b}=\false \Q Q \trans{\lambda} Q'}
{\ifelse b P Q \trans{\lambda} Q'}
\\[14pt]

\Txiom{TimeNil}{-}
{ \nil \trans{\tick}  \nil}

&
\Txiom{Delay}
{-}
{  { \tick.P} \trans{\tick}  P}
\\
[14pt]

\Txiom{Timeout}
{-}
{  {\timeout{\mathit{pfx}.P}{Q} }   \trans{\tick}  Q}
&
\Txiom{TimePar}
{
  P \trans{\tick}  {P'}  \Q 
   Q \trans{\tick} {Q'} 
}
{
  {P \parallel Q}   \trans{\tick}  { P' \parallel Q'}
}
\end{array}
\end{displaymath}
\caption{Labelled transition system for processes}
\label{tab:lts-P} 
\end{table}

As said before, sensors serve to observe the evolution of the 
physical state of an IoT system. 
However, sensors  are usually affected by an \emph{error/noise} that we represent
in our measurement maps by means of  discrete probability distributions. 
 For this reason,  we equip our calculus with a probabilistic labelled transition system. 
 In the following, the symbol $\epsilon$ ranges over distributions on physical environments, whereas $\pi$ ranges over distributions on (logical) processes. Thus, 
$\confCPS {\epsilon} {\pi}$ denotes the distribution over 
IoT systems  defined by $(\confCPS {\epsilon} {\pi})(\confCPS{\xi}{P})= {\epsilon}(\xi) \cdot \pi(P)$. The symbol $\gamma$ ranges over distributions on
IoT systems.

In \autoref{tab:lts-P}, we give a standard labelled transition system for logical components (timed processes), whereas in \autoref{tab:lts-S}
we rely on the LTS of \autoref{tab:lts-P} to define a simple 
pLTS  
for IoT  systems
by lifting transition rules from processes to systems.

In \autoref{tab:lts-P},  the meta-variable $\lambda$ ranges over labels in the set $\{\tau, \tick,  {\snda o v}, {\rcva o z} \}$. 
 Rule \rulename{Sync} serve to model synchronisation and value passing, on some name (for channel or sensor) $o$: if $o$ is a channel then we have standard point-to-point 
communication, whereas if $o$ is a sensor then 
this rule models an \emph{integrity attack} on sensor $s$, as the controller 
is provided with a fake value $v$. 
The remaining rules are standard. The symmetric counterparts of rules \rulename{Sync} 
and  \rulename{Par}  are  omitted.

According to \autoref{tab:lts-S}, IoT systems may fire four possible 
actions ranged over by $\alpha$. 
These actions represent: internal activities ($\tau$),  the
passage of time ($\tick$), 
channel transmission (${\out c v}$) and channel reception (${\inp c v}$). 

Rules \rulename{Snd} and \rulename{Rcv} model transmission and reception
on a channel $c$  with an external system, respectively.  
Rule \rulename{SensRead} models the reading of the value detected at a  \emph{sensor} $s$   according to the current physical environment $\xi = \langle \statefun{}, \measmap{} \rangle$. In particular, this rule says that if a process $P$ in a system 
$\confCPS \xi P$ 
reads a sensor $s$ defined in $\xi$ then it will get a value that may vary according to the probability distribution resulting by providing  the state function $\statefun{}$ and the sensor $s$ to the measurement map $\measmap{}$.

Rule \rulename{Tau} lifts internal actions from processes to
systems. This includes communications on channels and malicious accesses to 
 sensors' controllers. According to  Definition~\ref{def:composing-systems}, rule \rulename{Tau} models also channel communication between two parallel IoT systems sharing the same physical environment. 

A second lifting  occurs in rule
\rulename{Time} for timed actions $\tick$. 
Here, $\xi'$ denotes an 
admissible  physical environment for the next time slot,
nondeterministically chosen from the \emph{finite} set
 $\mathit{next}(\langle \statefun{} ,  \measmap{} \rangle)$. 
This set is defined as  $ 
  \{ \langle \statefun'{}, \measmap{} \rangle : \statefun'{}(x) \in \mathit{range}(x) \textrm{ for any } x \in \mathcal X\}$.\footnote{The finiteness follows 
from the finiteness of $\mathcal V$, and hence of $\mathit{range}(x)$, for any $x \in \mathcal X$.} As a consequence,  the rules in \autoref{tab:lts-S} define an \emph{image-finite} pLTS.  

For simplicity, we abstract from the \emph{physical process} behind our IoT systems.

\begin{table}[t]
\begin{displaymath}
\begin{array}{c}
\Txiom{Snd}
{P \trans{\snda c v}  P'  }
{\confCPS \xi  P   \trans{\out c v}   \confCPS {\dirac{\xi}}  {\dirac{P'} }}
\Q\Q\Q\Q
\Txiom{Rcv}
{P  \trans{\rcva c z}  P' }
{\confCPS \xi  P    \trans{\inp c v}  \confCPS {\dirac{\xi}}  {\dirac{P'{\subst v z}} }}
\\[17pt]

\Txiom{SensRead}{P \trans{\rcva s z} P'  \Q 
\mbox{\small{$\measmap{}(\statefun{})(s) = \sum_{i \in I} p_i \cdot \dirac{v_i}$}}
}
{\confCPS \xi  P \trans{\tau} \confCPS {\dirac{\xi}} {\sum_{i \in I}p_i \cdot  \dirac{P' \subst{v_i}{z}}}} 
\\[17pt]

\Txiom{Tau}{P \trans{\tau} P'  }
{ \confCPS \xi  P \trans{\tau} \confCPS {\dirac{\xi}}  {\dirac{P'}}}
\Q
\Txiom{Time}{ P \trans{\tick} {P'} \Q
\confCPS \xi  P \ntrans{\tau} \Q
 \xi' \in \mathit{next}(\xi) }
{\confCPS \xi  P \trans{\tick} \confCPS {\dirac{\xi'}}  {\dirac{P'}}}

\end{array}
\end{displaymath}
\caption{Probabilistic LTS for a IoT system $\confCPS \xi P$ with $\xi =
\langle \statefun{} , \measmap{} \rangle$}
\label{tab:lts-S} 
\end{table}

\section{Cyber-physical attacks on sensor devices}
\label{sec:cyber-physical-attackers}

In this section, we consider attacks tampering with sensors by eavesdropping and possibly  modifying the sensor measurements provided to the corresponding controllers. These attacks may affect both the \emph{integrity} and the \emph{availability} of the system under attack. We do not represent (well-known) attacks on communication channels as our focus is on attacks to physical devices and the consequent impact on the physical state. However, our technique can be easily generalised to deal with attacks on channels as well.  

\begin{definition}[Cyber-physical attack] A (pure) cyber-physical attack $A$ is a process derivable from the grammar of 
 \autoref{def:processes} such that:
\begin{itemize}
\item $A$ writes on at least one sensor;
\item $A$ never uses communication channels. 
\end{itemize}
\end{definition}

In order to make security assessments on our IoT systems, we adapt a
well-known approach called \emph{Generalized Non Deducibility on
Composition (GNDC)}~\cite{FM99}. 
Intuitively,  an attack $A$  affects an honest IoT system  $M$ if the execution of the composed system $M \parallel A$ differs from that of the original system $M$ in an observable manner. 
Basically, a cyber-physical attack can influence the system under attack in at least two different ways:
\begin{itemize}
\item The system $M \parallel  A$ might have non-genuine execution traces containing observables 
that cannot be reproduced by $M$; here the attack affects the \emph{integrity} of the system behaviour (\emph{integrity attack}).
\item The system $M$   might have execution traces containing
observables that cannot be reproduced by the system under attack 
$M \parallel  A$ (because they are prevented by the attack); this is an attack against the \emph{availability} of the system (\emph{DoS attack}). 
\end{itemize}

Now, everything is in place to provide a formal definition of \emph{system tolerance}
and \emph{system  vulnerability} with respect to a given attack. Intuitively, a system $M$ tolerates
an attack $A$ if the presence of the attack does not affect the behaviour of $M$; on the other hand $M$ is vulnerable to $A$ in a certain time interval if the
attack has an \emph{impact} on the behaviour of $M$ in that time interval.  

\begin{definition}[Attack tolerance]
\label{def:tolerance}
Let $M$ be a honest  IoT system. We say that $M$ 
 \emph{tolerates an attack $A$}  if  
$  M \parallel A  \bisimtoll{\infty}{0} M  $. 
\end{definition}

\begin{definition}[Attack vulnerability and impact]
\label{def:vulnerability}
Let $M$ be a honest  IoT system.
We say that $M$ is \emph{vulnerable to
 an attack $A$ in the time interval $m..n$  with \emph{impact} $p \in [0,1]$\/}, for $m\in \mathbb{N}^+$ and $n \in \mathbb{N}^+ \cup \{ \infty \} $, if 
 $m..n$ is the smallest time interval  such that:  
 (i) $ M \parallel A \bisimtoll{m-1}{0} M$, (ii)
$M \parallel A \bisimtoll{n}{p} M$, (iii) $M \parallel A \bisimtoll{\infty}{p} M$.\footnote{By \autoref{prop:tol-monotonicity}, at all time instants greater than $n$ the impact remains $p$.}
\end{definition}
Basically, the definition above says that if a system is vulnerable to an attack in the time interval $m..n$ then the perturbation introduced by the attack starts in the $m$-th time slot and reaches the maximum impact in the $n$-th time slot.

The following result says that both notions of tolerance and vulnerability 
are suitable for \emph{compositional reasonings\/}. More 
precisely, we prove that they are both preserved by parallel composition and 
channel restriction. Actually,  channel restriction may obviously make a system less vulnerable by hiding channels. 
\begin{theorem}[Compositionality] 
\label{thm:attack-tolerance-gen} 
Let $M_1 = \confCPS \xi  P_1$ and $M_2 = \confCPS \xi  P_2$ be two honest IoT systems with the same physical environment $\xi$, 
 $A$ an arbitrary attack, and $\tilde{c}$ a set of channels.
\begin{itemize}
\item
If both $M_1$ and $M_2$ tolerate $A$ then $(M_1 \parallel M_2) {\setminus} \tilde{c}$  
tolerates  $A$. 
\item
If $M_1$ is vulnerable to $A$ in the time interval $m_1..n_1$ with impact $p_1$,  and
$M_2$ is vulnerable to $A$ in the time interval $m_2..n_2$ with impact $p_2$,
then  $M_1 \parallel M_2$ is vulnerable to $A$ in a the time interval $\min(m_1,m_2)..\max(n_1,n_2)$ with an impact $p' \leq (p_1+p_2 - p_1 p_2)$. 
\item
If $M_1$ is vulnerable to $A$ in the interval $m_1..n_1$ with impact $p_1$ 
then  $M_1 {\setminus} \tilde{c}$ is vulnerable to $A$ in a time interval $m'..n' \subseteq m_1..n_1$ with an impact $p' \le p_1$. 
  \end{itemize}
\end{theorem}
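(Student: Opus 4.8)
The plan is to reduce the three statements to two compositionality properties of the family $\metric^k$ with respect to the operators $\parallel$ and ${\setminus}c$, and then to read off the time windows and impacts from these bounds together with tolerance monotonicity (\autoref{prop:tol-monotonicity}). Writing $a \oplus b = a + b - ab$ for the probabilistic sum, the heart of the matter is the following per-level bound for parallel composition: for all $k \in \mathbb{N}\cup\{\infty\}$ and all honest $P_1,P_2$ sharing an environment $\xi$,
\[
\metric^{k}\big(\confCPS{\xi}{(P_1\parallel P_2\parallel A)},\,\confCPS{\xi}{(P_1\parallel P_2)}\big)
\;\le\;
\metric^{k}\big(\confCPS{\xi}{(P_1\parallel A)},\,\confCPS{\xi}{P_1}\big)
\;\oplus\;
\metric^{k}\big(\confCPS{\xi}{(P_2\parallel A)},\,\confCPS{\xi}{P_2}\big).
\]
Honesty of $P_1,P_2$ guarantees that the only sensor writes come from $A$, which is what makes the decomposition meaningful. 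Since tolerance is the instance $p_1=p_2=0$ of vulnerability (and $0\oplus 0=0$), the first bullet is a special case of the second once combined with the restriction bound below; so I would establish the quantitative bound first and specialise afterwards.

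For this bound I would argue by induction following \autoref{twbm}: an outer induction on $k$ and, within each level, an induction on the number $h$ of untimed steps, mirroring $\metric^{k,0}=\Bisimulation_{\tick}(\metric^{k-1})$ and $\metric^{k,h}=\Bisimulation(\metric^{k,h-1})$; the base case $k=0$ is trivial as $\metric^{0}=\zeroF$. For the inductive step I would play the game of \autoref{def:metric_sim_functional} on the pair of composed systems: every single transition is a genuine sensor read, a synchronisation of $A$ with the controller of $P_1$ or of $P_2$, a channel action, a $\tau$, or a $\tick$, and in each case it touches at most one of the two components, so the resulting distribution over composed systems factorises as a parallel product of the distribution arising in the corresponding component comparison and a Dirac factor for the untouched component (with the state of $A$ carried along identically). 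Assuming the inductive hypothesis $d(s_1\parallel s_2, t_1\parallel t_2)\le d(s_1,t_1)\oplus d(s_2,t_2)$ for the metric $d$ one level down, I would take as matching the product $\omega_1\otimes\omega_2$ of the optimal matchings $\omega_i\in\Omega(\cdot,\cdot)$ witnessing the two component costs $c_1,c_2$, so that
\[
\sum_{s_1,t_1,s_2,t_2}\omega_1(s_1,t_1)\,\omega_2(s_2,t_2)\,d(s_1\parallel s_2,\,t_1\parallel t_2)
\;\le\;
c_1 + c_2 - c_1 c_2
\;=\; c_1\oplus c_2,
\]
using the hypothesis inside the sum and the unit mass of each $\omega_i$. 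As $\Kantorovich$ minimises over matchings, this product matching is an admissible witness; the sub-distribution correction $(1-\size{\Theta})\dirac{\dummyN}$ is accommodated by charging unmatched mass the maximal distance $1$ in each factor, consistently with the $\oplus$ combination. Because the component metrics at level $(k,h)$ already account for $h$ untimed steps of each component, the case where $A$ perturbs both components within one time slot is captured by the inner $h$-induction; and since $a\oplus b$ is non-decreasing in each argument, the bound passes to $\metric^k=\sup_h\metric^{k,h}$ and to $\metric^{\infty}=\sup_k\metric^k$.

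The restriction bound I would prove in the same style but more easily: ${\setminus}c$ only prunes transitions, so it is non-expansive, $\metric^{k}(\confCPS{\xi}{(P{\setminus}c)},\confCPS{\xi}{(Q{\setminus}c)})\le\metric^{k}(\confCPS{\xi}{P},\confCPS{\xi}{Q})$, via the identity coupling; and since an attack never uses channels, ${\setminus}\tilde c$ commutes with $\parallel A$, whence the third bullet with impact $p'\le p_1$ and a window $m'..n'\subseteq m_1..n_1$. With both bounds in hand the statements follow. For tolerance, instantiate $p_1=p_2=0$ to get $\metric^{\infty}((M_1\parallel M_2)\parallel A,\,M_1\parallel M_2)=0$ and apply non-expansiveness of restriction. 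For vulnerability, set $f_i(k)=\metric^{k}(M_i\parallel A,\,M_i)$: by \autoref{prop:tol-monotonicity} each $f_i$ is non-decreasing with $f_i(k)=0$ for $k<m_i$ and $f_i(k)=p_i$ for $k\ge n_i$, so the per-level bound yields distance $0$ before $\min(m_1,m_2)$ and at most $p_1\oplus p_2$ from $\max(n_1,n_2)$ onwards, which is the asserted window and impact.

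I expect the main obstacle to be the shared attacker $A$ in the parallel bound: a single process must be related simultaneously to both component comparisons, so the factorisation of the composed distribution into a product with a consistently tracked copy of $A$ has to be justified case by case against the rules of \autoref{tab:lts-S}, in particular for the \rulename{Sync} and \rulename{SensRead} steps where $A$ perturbs one controller while the other component reads genuinely, and for the interleaving of untimed steps indexed by $h$. A secondary delicacy is pinning down the \emph{exact} endpoints of the composed vulnerability window rather than merely the superset guaranteed by the upper bound, which needs the complementary observation that composition does not mask an already observable perturbation of a component.
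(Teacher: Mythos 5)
Your overall reduction---bound $\metric^{k}$ of the composed comparison by the probabilistic sum $p_1+p_2-p_1p_2$ of the component distances, then read off windows and impacts via \autoref{prop:tol-monotonicity}, treating the first bullet as the degenerate case $p_1=p_2=0$ and restriction as non-expansive---coincides with the paper's plan. But there is a genuine gap exactly at the obstacle you flag at the end without resolving: the single shared attacker. Your induction rests on the claim that every transition of $\confCPS{\xi}{(P_1\parallel P_2\parallel A)}$ ``touches at most one of the two components,'' so that the reached distribution factorises and the product matching $\omega_1\otimes\omega_2$ is admissible. This claim is false. To decompose the composed system at all, $A$ must be placed on one side, say with $P_1$; a synchronisation of $A$ with the controller of $P_2$ (an arbitrary attack may write on sensors read by both components, and this is precisely the interesting case) then crosses the decomposition boundary and is not a product of independent component moves. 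Worse, the component comparison $\metric^k(\confCPS{\xi}{(P_2\parallel A)},\confCPS{\xi}{P_2})$ needs its own copy of $A$ evolving only through interactions with $P_2$, which a single attacker whose state has already changed by interacting with $P_1$ cannot supply; the ``consistently tracked copy of $A$'' you appeal to does not exist. Your auxiliary inductive hypothesis $d(s_1\parallel s_2,t_1\parallel t_2)\le d(s_1,t_1)\oplus d(s_2,t_2)$ for arbitrary pairs also cannot be established by product couplings, because $\parallel$ synchronises: a process that only offers sensor writes is at distance $0$ from $\nil$ as a stand-alone system (sensor writes do not lift to system-level actions in \autoref{tab:lts-S}), yet composing both sides with a controller that reads that sensor creates distance $1$.

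The paper closes this gap with a step your proposal is missing: it first \emph{duplicates the attacker}, observing that
\[
\metric^{k}\big(\confCPS{\xi}{(P_1\parallel P_2\parallel A)},\,\confCPS{\xi}{(P_1\parallel P_2)}\big)
\;\le\;
\metric^{k}\big(\confCPS{\xi}{(P_1\parallel A\parallel P_2\parallel A)},\,\confCPS{\xi}{(P_1\parallel P_2)}\big),
\]
since the right-hand system can mimic every behaviour of the left-hand one, so adding a second copy of $A$ can only increase the unmatched behaviours. After this move each pair $(P_i\parallel A,\,P_i)$ owns a private attacker, cross-boundary injections are absorbed by (charged to) the component distances $p_i$, and the quantitative bound is then proved not by your $(k,h)$-induction with product couplings but by a prefixed-point argument: one defines a ``metric congruence closure'' $m$ of $\metric^k$ for $\parallel$, shows that $m$ satisfies the transfer condition of the bisimulation metric (following the arguments of Proposition 3.2 in~\cite{GLT16}), and concludes $\metric^k = m$ from \autoref{lemma_mk_fixed_point} and the minimality of $\metric^k$ as a prefixed point of $\Bisimulation$. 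Without the duplication step your case analysis cannot be repaired, so the proposal as it stands does not go through.
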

%
Note that if an attack $A$ is tolerated by a system $M$ and can interact  with a honest process $P$ then the compound system $M  \parallel P$ may be  vulnerable to $A$. 
However, if $A$ does not write on the sensors of $P$ then it is tolerated by 
 $M \parallel P$  as  well. 
The bound $p' \leq (p_1+p_2 - p_1 p_2)$  can be explained as follows.
The likelihood that the attack does not impact on $M_i$ is $(1-p_i)$, for $i 
\in \{ 1,2 \}$.
Thus, the likelihood that the attack impacts neither on $M_1$ nor on $M_2$ is at least $(1-p_1)  (1-p_2)$.
Summarising, the likelihood that the attack impacts on at least one of the two systems $M_1$ and $M_2$ is at most $1- (1-p_1)  (1-p_2) = p_1+p_2 - p_1 p_2$.

An easy corollary of \autoref{thm:attack-tolerance-gen} allows us to lift 
the notions of tolerance and vulnerability from a honest system $M$ to the compound systems $M \parallel P$, for a honest process $P$. 
%
\begin{corollary}\label{thm:attack-tolerance}
Let $M$  be a honest system, $A$ an attack, $\tilde{c}$ a set of channels, and  $P$ a honest process that reads  sensors defined in $M$ but not those 
 written by $A$. 
\begin{itemize}
\item
If $M$ tolerates  $A$ then  $(M\parallel P) {\setminus} \tilde{c}$  
tolerates  $A$. 
\item
If  $M$ is vulnerable  to $A$   in the interval $m..n$  with impact $p$,  then  $(M\parallel P) {\setminus} \tilde{c}$ is vulnerable to $A$  in a time interval $m'..n' \subseteq m..n$,  with an impact $p' \leq p$. 
\end{itemize}
\end{corollary}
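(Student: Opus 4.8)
The plan is to derive the corollary from \autoref{thm:attack-tolerance-gen} by instantiating its second honest system as $M_2 = \confCPS{\xi}{P}$, where $\xi$ is the common physical environment of $M = \confCPS{\xi}{P_M}$. Since $P$ reads only sensors defined in $M$, the system $M_2$ is well formed and honest, it shares $\xi$ with $M_1 = M$, and $M_1 \parallel M_2 = \confCPS{\xi}{(P_M \parallel P)} = M \parallel P$ by \autoref{def:composing-systems}; hence the hypotheses of \autoref{thm:attack-tolerance-gen} hold for the pair $(M_1,M_2)$.

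The crucial auxiliary step is to show that $M_2 = \confCPS{\xi}{P}$ \emph{tolerates} $A$, i.e.\ $\confCPS{\xi}{(P \parallel A)} \bisimtoll{\infty}{0} \confCPS{\xi}{P}$. The key observation is that $A$ cannot interfere with $P$: by hypothesis $A$ never uses channels and writes only on sensors that $P$ does not read, so no instance of rule \rulename{Sync} can ever fire between $P$ and $A$. I would therefore exhibit the weak bisimulation with tolerance $0$ generated by all pairs $\big(\confCPS{\xi'}{(P' \parallel A')},\, \confCPS{\xi'}{P'}\big)$, with $P'$ a derivative of $P$ and $A'$ a derivative of $A$, and check the transfer conditions: every $\tau$- or channel-transition of $P'$ (including sensor reads, which produce genuine distributions $\sum_i p_i\cdot\dirac{P'\{v_i/z\}}$) is matched identically, the residual pairs staying in the relation; any internal $\tau$ of $A'$, and any sensor write of $A'$ (which does not lift to a system action for want of a reader), leave $\confCPS{\xi'}{P'}$ idle on the weak $\hat\tau$-move and keep the pair related. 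By \autoref{prop_simulazione} this yields $\bisimtoll{\infty}{0}$. The delicate point is the $\tick$-transition, where maximal progress (the $\not\!\!\transStep{\tau}$ side condition of rule \rulename{Time}) may force $A'$ to discharge finitely many internal $\tau$-actions before time can elapse; since the calculus forbids zeno behaviour these are absorbed into the weak move $\TransStep{\widehat{\tick}}$, choosing the same next environment $\xi'' \in \mathit{next}(\xi')$ on both sides, while the fact that every attack derivative can always perform a $\tick$ (by timing out) guarantees the converse matching.

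With this in place the tolerance clause is immediate: $M$ tolerates $A$ by hypothesis and $M_2$ tolerates $A$ by the auxiliary step, so the first item of \autoref{thm:attack-tolerance-gen} gives that $(M \parallel M_2)\setminus\tilde c = (M\parallel P)\setminus\tilde c$ tolerates $A$. For the vulnerability clause I would apply the second item of \autoref{thm:attack-tolerance-gen} to $M_1 = M$ (vulnerable in $m..n$ with impact $p$) and the tolerated $M_2$, treated as the degenerate case of impact $p_2 = 0$: the combination $p_1 + p_2 - p_1 p_2$ collapses to $p$, and, since $M_2 \parallel A \bisimtoll{k}{0} M_2$ for \emph{every} $k$, the pointwise perturbation $\metric^{k}$ of $M \parallel M_2$ is bounded by that of $M$ (so the vulnerability window of $M\parallel P$ is contained in $m..n$ with impact $\le p$, using \autoref{prop:tol-monotonicity} to keep the endpoints well defined). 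Finally, the third item of \autoref{thm:attack-tolerance-gen} applied to the restriction $\setminus\tilde c$ can only shrink the interval and lower the impact, delivering $(M\parallel P)\setminus\tilde c$ vulnerable in some $m'..n' \subseteq m..n$ with impact $p' \le p$.

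I expect the main obstacle to be precisely the auxiliary tolerance claim for $\confCPS{\xi}{P}$, namely reconciling weak bisimilarity with maximal progress: one must argue that the extra $\tau$-activity of $A'$ never lets time pass in $\confCPS{\xi'}{(P'\parallel A')}$ at a moment when it could not in $\confCPS{\xi'}{P'}$, and conversely that whenever $P'$ can let time elapse so can $P'\parallel A'$. A secondary subtlety is the book-keeping needed to read ``tolerance'' as the impact-$0$ boundary case of the vulnerability item, so that the interval formula $\min..\max$ of \autoref{thm:attack-tolerance-gen} does not spuriously enlarge the window; this is cleanly resolved by reasoning with the pointwise distances $\metric^{k}$ rather than with the interval endpoints directly.
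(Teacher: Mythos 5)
Your proposal is correct and follows essentially the same route as the paper: the corollary is derived from \autoref{thm:attack-tolerance-gen} by observing that, since $A$ uses no channels and writes only on sensors that $P$ never reads, the system $\confCPS{\xi}{P}$ tolerates $A$, after which the compositionality bounds (read pointwise on $\metric^{k}$, as in the theorem's own proof) yield both items. The paper leaves the auxiliary tolerance claim and the impact-$0$ bookkeeping implicit, so your explicit bisimulation relation and your treatment of the degenerate case are a faithful expansion of the paper's argument rather than a departure from it.
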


\section{Attacking  a smart surveillance system: A case study}
\label{sec:case}
Consider an alarmed ambient consisting of three rooms, $r_i$ for $i \in \{ 1, 2, 3 \}$, each of which equipped with a sensor $s_i$ to detect  unauthorised accesses. The alarm goes off if at least one of the three sensors
detects an intrusion.

The logics of the system can be easily specified in our language as follows:
\begin{displaymath}
{\small 
\begin{array}{rcl}
\mathit{Sys} &  =  & 
 \left( \mathit{Mng} \parallel \mathit{Ctrl_1}\parallel  \mathit{Ctrl_2} \parallel \mathit{Ctrl_3}\right){\setminus}\{c_1,c_2, c_3\}
\\[1pt]

\mathit{Mng} &  =  &  \rcva {c_1}{z_1}.\rcva {c_2}{z_2}.\rcva{c_3}{z_3} . 
\mathsf{if} \, 
(\bigvee_{i=1}^3  z_i{=} \mathsf{on}) \, \{ \snda{\mathit{alarm}}{\mathsf{on}} .\tick.\mathit{Check_{k}} \}  \, 
\mathsf{else} \,  \{ \tick.\mathit{Mng}\}
\\[1pt]

\mathit{Check_{0}} & = & \mathit{Mng}
\\[1pt]

\mathit{Check_{j}} & = &  \snda{\mathit{alarm}}{\mathsf{on}} . 
\rcva {c_1}{z_1}.\rcva {c_2}{z_2}.\rcva{c_3}{z_3} . 
\mathsf{if} \, 
(\bigvee_{i=1}^3  z_i= \mathsf{on}) \, \{ \mathit{\tick.Check_{k}} \} \: \\
&&
\mathsf{else} \:  \{  \tick.\mathit{Check_{j{-}1}}  \} \Q \textrm{for } j>0
\\[1pt]

\mathit{Ctrl_i} & = &  \rcva  {s_i} {z_i} .  \mathsf{if} \, 
(z_i{=}\mathsf{presence}) \, \{ \snda{c_i}{\mathsf{on}} .\tick. \mathit{Ctrl_i} \}  \, 
\mathsf{else} \,   \{ \snda{c_i}{\mathsf{off}} .\tick. \mathit{Ctrl_i} \}
\textrm{ for }i {\in} \{ 1, 2, 3 \}. \end{array}
}
\end{displaymath}

Intuitively, the process $\mathit{Sys}$ is composed by three controllers, $\mathit{Ctrl_i}$, 
one for each sensor $s_i$, and a manager  $\mathit{Mng}$ that interacts with the 
controllers via private channels $c_i$. The process $\mathit{Mng}$ fires an 
alarm if at least one of the controllers signals an intrusion. As usual in this kind of 
surveillance systems, the alarm will keep going off for $k$ instants of time after the last detected intrusion.

As regards the physical environment,  the physical state  $\statefun{} : \{ r_1, r_2, r_3 \} \rightarrow \{ \mathsf{presence} , \mathsf{absence} \} $ is set to $\statefun{}(r_i)=\mathsf{absence}$, for any $i \in \{ 1, 2, 3\}$. 
Furthermore, let $p_i^+$ and $p_i^-$  be the probabilities 
of having \emph{false  positives} (erroneously detected intrusion) and  \emph{false negatives} (erroneously missed intrusion) at sensor $s_i$\footnote{These probabilities are usually very 
 small; we assume them  smaller than $\frac{1}{2}$.}, respectively,  for 
 $i \in \{ 1 , 2, 3 \}$, 
the measurement function $\measmap{}$ is defined as follows: 
$ \measmap{}(\statefun{})(s_i)=(1{-}p_i^-) \, \dirac {\mathsf{presence}} + p_i^- \dirac {\mathsf{absence}}$, 
if $\statefun{}(r_i)=\mathsf{presence}$; 
$ \measmap{}(\statefun{})(s_i)=(1{-}p_i^+)\, \dirac {\mathsf{absence}} + p_i^+ \dirac {\mathsf{presence}}$, otherwise.

Thus, the whole IoT system has the form $\confCPS \xi {\mathit{Sys}}$, with $\xi = 
\langle \statefun{} , \measmap{} \rangle $.

 We start our analysis studying  the impact of  a simple cyber-physical attack that provides fake 
\emph{false positives} to the controller of one of the sensors $s_i$. 
This attack affects the \emph{integrity} of the system behaviour as the  system under attack will fire alarms  without any physical  intrusion.

\begin{example}[Introducing false positives] 
In this example, we provide an attack that tries to increase the number 
of false positives detected by the controller of some sensor $s_i$ during a specific time interval $m..n$, with $m, n \in \mathbb{N}$, $n \geq m > 0$. Intuitively, the attack 
 waits for $m-1$ time slots, then, during the time interval $m..n$, it provides  the controller of sensor $s_i$ with a fake intrusion signal. 
Formally,  
\begin{displaymath} 
\begin{array}{rcl}
  A_{\mathsf{fp}}(i,m,n) &  =  & \tick^{m-1} . B\langle i,  n-m+1 \rangle
\\[2pt]
 B(i, j) & = &  \ifelse {j = 0} {\nil} {\timeout{\snda {s_i} {\mathsf{presence}}  .  \tick. B \langle i , j-1 \rangle}  {B \langle i, j-1 \rangle}}
\, . 
\end{array}
\end{displaymath}
\end{example}

In the  following proposition, we use our metric to measure the perturbation introduced by the attack to the controller of a sensor $s_i$ by varying the time of observation of the system under attack. 

\begin{proposition}
\label{prop:case1}
Let $\xi$ be an arbitrary physical state for the systems $M_i = \confCPS \xi \mathit{Ctrl}_i$, for $i \in \{ 1, 2 , 3 \}$.  Then, 
\begin{itemize}
\item 
$ M_i \parallel  A_{\mathsf{fp}} \langle i , m, n \rangle \, \bisimtoll{j}{0} \,  M_i$, 
 for $j \in 1 .. m{-}1$;
\item
$ M_i \parallel  A_{\mathsf{fp}} \langle i , m, n \rangle \, \bisimtoll{j}{h} \,  M_i$, with    $h=1-(p_i^+)^{j-m+1} $,
	  for $j \in m  .. n $;
\item 
$ M_i \parallel  A_{\mathsf{fp}} \langle i , m, n \rangle \, \bisimtoll{j}{r} \,  M_i$, with    $r=1-(p_i^+)^{n-m+1} $,
	  for $j > n $ or $j=\infty$.  
\end{itemize}
\end{proposition}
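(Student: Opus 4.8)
The plan is to compute the metric $\metric^{j}(M_i \parallel A_{\mathsf{fp}}\langle i,m,n\rangle, M_i)$ directly by analysing the joint behaviour of the attacked and unattacked controller over discrete time slots, exploiting the recursive structure of both $\mathit{Ctrl}_i$ and the functionals $\Bisimulation, \Bisimulation_{\tick}$ from \autoref{def:metric_sim_functional}. First I would establish the baseline behaviour: in $M_i = \confCPS{\xi}{\mathit{Ctrl}_i}$, each time slot the controller reads sensor $s_i$, and since $\statefun{}(r_i)=\mathsf{absence}$, the measurement yields $\mathsf{absence}$ with probability $1-p_i^+$ (sending $\mathsf{off}$) and $\mathsf{presence}$ with probability $p_i^+$ (sending $\mathsf{on}$). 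The key observation is that the attack $A_{\mathsf{fp}}$ writes $\mathsf{presence}$ to $s_i$ via rule \rulename{Sync}, which \emph{forces} the controller to behave as if an intrusion occurred, i.e.\ to send $\mathsf{on}$ with probability $1$ in each slot of the active window, whereas the genuine system sends $\mathsf{on}$ only with probability $p_i^+$. I would make precise that the only observable difference between the two systems is this discrepancy in the value emitted on $c_i$ (or in the reachable state after the $\tick$).

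The proof then proceeds by the three regimes of $j$ matching the three bullets. For $j \in 1..m-1$ the attack is still counting down its $\tick^{m-1}$ prefix and writes nothing, so $M_i \parallel A_{\mathsf{fp}}$ and $M_i$ are weakly bisimilar for the first $m-1$ timed steps; formally $\metric^{j}=0$ here, giving bullet one. For $j \in m..n$, I would argue by induction on the number of active attack slots. In each active slot the systems agree with probability $p_i^+$ (namely when the genuine sensor also happens to report a false positive, so the attacker's injected $\mathsf{presence}$ is indistinguishable from a genuine reading) and differ otherwise. Crucially, the divergent behaviour compounds multiplicatively across slots: the two systems remain coupled only along the branch where every genuine reading coincided with the attack, which has probability $(p_i^+)^{j-m+1}$ after $j-m+1$ active slots. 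The Kantorovich lifting then charges distance $1$ to all the mass on which the systems have diverged, yielding $\metric^{j}=1-(p_i^+)^{j-m+1}$. The third bullet follows because after slot $n$ the attack terminates (becomes $\nil$) and both systems resume identical behaviour, so no further distance accumulates and the value freezes at $1-(p_i^+)^{n-m+1}$, which by \autoref{prop:tol-monotonicity} is also the value at $j=\infty$.

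The technical heart of each inductive step is evaluating the functionals. I would unfold $\metric^{j,h}$ via \autoref{twbm}: the untimed actions are the internal $\tau$-moves implementing \rulename{SensRead} and the \rulename{Sync} with the attacker, and the $\tick$-step is handled by $\Bisimulation_{\tick}$. The decisive computation is showing that when the attacker injects $\mathsf{presence}$, the best weak response $M_i \TransStep{\widehat{\tick}}\Theta$ by the genuine system matches only the $p_i^+$-fraction of the attacked distribution (the $\mathsf{on}$-branch), so that $\Kantorovich(\metric^{j-1})$ applied to the coupled distributions returns the weighted sum that propagates the factor $p_i^+$ onto the previously computed distance $\metric^{j-1}=1-(p_i^+)^{j-m}$. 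Concretely one checks $1-(p_i^+)\cdot(p_i^+)^{(j-1)-m+1}$-style recurrences collapse to the claimed closed form.

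The main obstacle I anticipate is the careful bookkeeping of the weak-transition matching inside the Kantorovich minimisation, specifically justifying that the \emph{optimal} coupling is the one that pairs the attacked $\mathsf{on}$-outcome against the genuine $\mathsf{on}$-outcome (distance $0$ with weight $p_i^+$) and leaves the remaining $1-p_i^+$ mass uncoupled or coupled to states at distance $1$. Because \autoref{def:metric_sim_functional} involves both a $\max$ over the attacked system's transitions and an $\inf$ over the genuine system's weak responses (and the symmetric clause), I must verify that neither the attacker's nondeterministic choice of firing-versus-timeout nor the controller's branching gives a cheaper matching, and that the $(1-\size{\Theta})\dirac{\dummyN}$ correction term does not introduce additional spurious distance. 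Handling the interplay of \rulename{Timeout} (the attacker may let its window-slot lapse via $B\langle i,j{-}1\rangle$) with the forced synchronisation is where the argument requires the most delicate case analysis.
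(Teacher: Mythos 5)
Your route is the same one the paper takes: compute $\metric^{j}$ directly by unfolding the functionals $\Bisimulation$ and $\Bisimulation_{\tick}$, induct over the time slots (the paper organises this as a well-founded induction over seven auxiliary properties, one for each reachable shape of the attacked and genuine configurations, including the mid-slot shapes $\tick.\mathit{Ctrl}_i$ and $\snda{c_i}{\mathsf{on}}.\tick.\mathit{Ctrl}_i$), identify the optimal coupling, and extract the closed form from the recurrence $d_j = p_i^+\, d_{j-1} + (1-p_i^+)$. However, there is a genuine gap in how you treat the physical environment. The proposition is stated for an \emph{arbitrary} $\xi$, and rule \rulename{Time} lets each system jump, at every $\tick$, to \emph{any} $\xi' \in \mathit{next}(\xi)$, nondeterministically; this choice sits inside the $\max$ of $\Bisimulation_{\tick}$, so the value of the metric is governed by the \emph{worst-case} environment evolution, re-chosen at every tick. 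You instead fix $\statefun{}(r_i)=\mathsf{absence}$ once and for all and never revisit it. To obtain the stated formula one must prove that the maximising environment is always the one with $\statefun'(r_i)=\mathsf{absence}$: under absence the genuine controller emits $\mathsf{on}$ with probability $p_i^+$ against the attacked controller's certain $\mathsf{on}$, giving fresh per-slot divergence $1-p_i^+$, whereas under presence it is only $p_i^-$. The needed comparison $1-p_i^+ > p_i^-$ holds precisely because of the standing assumption $p_i^+,p_i^- < \frac{1}{2}$; your plan never mentions $p_i^-$ or this assumption, and without it the closed form in $p_i^+$ alone is unjustified.

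A second problem is that your description of the optimal coupling is internally inconsistent, and one of the two readings would break the computation. In one place you (correctly) say the matched $p_i^+$-mass carries the previously computed distance $\metric^{j-1}=1-(p_i^+)^{j-m}$; in another you say the attacked $\mathsf{on}$-outcome is paired with the genuine $\mathsf{on}$-outcome ``at distance $0$ with weight $p_i^+$''. Mid-window the matched pair is \emph{not} at distance $0$: the attacker is still alive in the attacked residual, so the pair $\snda{c_i}{\mathsf{on}}.\tick.\mathit{Ctrl}_i$ versus the same process composed with the live attack has exactly the inductive residual distance $1-(p_i^+)^{j-1}$ — this is what the paper's auxiliary properties for those intermediate configurations establish, and it is where the factor $p_i^+$ gets multiplied on. If one charged $0$ to the matched pair, the recurrence would collapse to $d_j = 1-p_i^+$ for every $j$, with no accumulation over the window, contradicting the claimed $1-(p_i^+)^{j-m+1}$. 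So the bookkeeping must follow your first reading, and making it precise requires exactly the explicit chain of inductive statements about the mid-slot processes that your sketch leaves implicit.
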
 
By an application of \autoref{def:vulnerability} we can measure the impact of the 
attack $A_{\mathsf{fp}}$ to the  (sub)systems $ \confCPS {\xi } { \mathit{Ctrl_i}}$. 
\begin{corollary}
The IoT systems
 $ \confCPS {\xi } { \mathit{Ctrl_i}}$ are vulnerable to the 
   attack $ A_{\mathsf{fp}} \langle i , m, n \rangle $ in the time interval $m..n$  with impact  $   1-(p_i^+)^{  n - m +1 } $.
 \end{corollary}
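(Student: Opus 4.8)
The final statement follows directly from \autoref{prop:case1} by instantiating the vulnerability predicate of \autoref{def:vulnerability}, so the plan is to match its three clauses against the three bullets of the proposition. Writing $M=\confCPS\xi{\mathit{Ctrl_i}}$, $A=A_{\mathsf{fp}}\langle i,m,n\rangle$ and $p=1-(p_i^+)^{\,n-m+1}$, I would read off: clause (i) $M\parallel A\bisimtoll{m-1}{0}M$ from the first bullet at $j=m{-}1$ (with the convention $\metric^{0}=\zeroF$ covering the degenerate case $m=1$); clause (ii) $M\parallel A\bisimtoll{n}{p}M$ from the second bullet at $j=n$, where $h=1-(p_i^+)^{\,n-m+1}=p$; and clause (iii) $M\parallel A\bisimtoll{\infty}{p}M$ verbatim from the third bullet. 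Hence $m..n$ meets the three defining conditions with impact $p$.

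The only part carrying real content is that $m..n$ is the \emph{smallest} such interval, and here I would argue by the strict monotonicity of the impact. For the left endpoint, the first bullet gives distance $0$ for all $j\le m{-}1$, while the second bullet at $j=m$ gives $1-(p_i^+)^{1}=1-p_i^+>0$ (using $p_i^+<\tfrac12<1$); thus the perturbation first becomes observable at slot $m$ and the interval cannot start later. For the right endpoint, on $m..n$ the value $1-(p_i^+)^{\,j-m+1}$ is strictly increasing in $j$ (as $0<p_i^+<1$ makes $(p_i^+)^{\ell}$ strictly decreasing), so $\metric^{j}<p$ for every $m\le j<n$; hence no $n'<n$ can realise clause (ii) with the terminal impact $p$, whereas the third bullet, backed by tolerance monotonicity (\autoref{prop:tol-monotonicity}), keeps the distance equal to $p$ for all $j\ge n$. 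Therefore $n$ is the first slot at which the maximal impact is reached, and $m..n$ is minimal. A subtlety worth flagging is that both endpoints rely on $p_i^+\in(0,1)$: strict positivity forces the strict increase on $m..n$ (tightness of $n$) and $p_i^+<1$ forces the jump above $0$ at slot $m$ (tightness of $m$); both follow from $p_i^+$ being a genuine error probability bounded by $\tfrac12$.

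I do not expect the corollary itself to pose any serious obstacle once \autoref{prop:case1} is available; the genuine difficulty lies upstream, in proving that proposition. There the hard part is the per-round analysis inside the functionals $\Bisimulation$ and $\Bisimulation_{\tick}$ of \autoref{def:metric_sim_functional}: one must argue that, during the attack window, the distance-maximising transition of $M_i\parallel A$ is the synchronisation of the attacker's $\snda{s_i}{\mathsf{presence}}$ with the controller, which deterministically forces the observable $\out{c_i}{\mathsf{on}}$, whereas $M_i$ can weakly match it only through its sensor reading, whose presence branch carries mass $p_i^+$; the unmatched mass $1-p_i^+$ is charged at distance $1$ through the $(1-\size{\Theta})\dirac{\dummyN}$ padding. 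Computing the Kantorovich lifting against this Dirac source yields the recurrence $d_\ell=p_i^+\,d_{\ell-1}+(1-p_i^+)$ with $d_0=0$, whose solution $d_\ell=1-(p_i^+)^{\ell}$ is exactly the impact of the corollary. Establishing that this synchronisation move dominates both the sensor-reading alternative and the reverse direction of the bisimulation game, and threading the $\dummyN$-bookkeeping through the nested functionals, is where I expect the real effort to go.
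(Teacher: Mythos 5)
Your proposal is correct and takes the same route as the paper: the corollary is obtained precisely by instantiating the three clauses of \autoref{def:vulnerability} with the three bullets of \autoref{prop:case1}, which is all the paper does. Your explicit minimality argument (the strict increase of $1-(p_i^+)^{j-m+1}$ in $j$ for the right endpoint, and the positive jump to $1-p_i^+$ at slot $m$ for the left endpoint, both requiring $p_i^+\in(0,1)$) merely fills in a step the paper leaves implicit.
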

Note that the vulnerability window $m..n$ coincides with the activity period of the attack $A_{\mathsf{fp}}$. This means that the system under attack recovers its normal behaviour immediately after the termination of the attack. However, in general, an attack may impact the  behaviour of the target system long after its termination. 

Note also that the attack $ A_{\mathsf{fp}}\langle i , m, n \rangle$ has an impact not only on the controller $\mathit{Ctrl}_i$  but also on the whole system $\confCPS \xi \mathit{Sys}$. This because the process $\mathit{Mng}$ will surely fire the alarm as it will receive at least one
intrusion detection from $\mathit{Ctrl}_i$.  However, by an 
application of \autoref{thm:attack-tolerance} we can prove that the impact on the whole system  will not get amplified. 
\begin{proposition}[Impact of the attack $ A_{\mathsf{fp}}$]
The system $ \confCPS \xi  \mathit{Sys} $ is vulnerable  to the 
   attack $ A_{\mathsf{fp}} \langle i , m, n \rangle $ in a time interval $m'..n' \subseteq m..n$ with impact $p' \leq   1-(p_i^+)^{  n - m +1}$.
 \end{proposition}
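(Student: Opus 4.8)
The plan is to obtain the statement with essentially no new computation, by reducing it to the compositionality result of \autoref{thm:attack-tolerance} and the single-controller analysis already carried out in \autoref{prop:case1}. The guiding intuition is that $A_{\mathsf{fp}}\langle i,m,n\rangle$ tampers only with sensor $s_i$, which is read exclusively by the controller $\mathit{Ctrl}_i$; every other component of $\mathit{Sys}$ is honest and reads a \emph{different} sensor, so the manager and the remaining controllers can at most \emph{propagate} the perturbation injected into $\mathit{Ctrl}_i$ towards the observable $\mathit{alarm}$ output, without amplifying its probabilistic magnitude beyond the value already measured for $\mathit{Ctrl}_i$ alone.

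First I would put the target system in the shape required by \autoref{thm:attack-tolerance}. Writing $\{i,j,k\}=\{1,2,3\}$, set $M_i = \confCPS \xi {\mathit{Ctrl}_i}$ and let $P = \mathit{Mng} \parallel \mathit{Ctrl}_j \parallel \mathit{Ctrl}_k$ collect the remaining logical components, with $\tilde c = \{c_1,c_2,c_3\}$. Then, using \autoref{def:composing-systems} and modulo the associativity and commutativity of parallel composition,
\[
\confCPS \xi {\mathit{Sys}} \;=\; (M_i \parallel P)\setminus \tilde c .
\]

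Next I would verify the hypotheses of \autoref{thm:attack-tolerance}. Honesty of $M_i$ and of $P$ follows directly from the syntax: each $\mathit{Ctrl}_l$ only reads its own sensor $s_l$ and writes on channel $c_l$, while $\mathit{Mng}$ performs only channel receptions and the observable output on $\mathit{alarm}$, so none of them ever writes on a sensor. The decisive side condition is that $P$ reads sensors defined in the environment $\xi$ but \emph{not} the one written by the attack: indeed $P$ reads only $s_j$ and $s_k$ with $j,k\neq i$, whereas $A_{\mathsf{fp}}\langle i,m,n\rangle$ writes exclusively on $s_i$. Finally, \autoref{prop:case1} and its corollary give that $M_i$ is vulnerable to $A_{\mathsf{fp}}\langle i,m,n\rangle$ in the interval $m..n$ with impact $1-(p_i^+)^{\,n-m+1}$.

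With every hypothesis discharged, the vulnerability clause of \autoref{thm:attack-tolerance} applies verbatim to $M = M_i$, $A = A_{\mathsf{fp}}\langle i,m,n\rangle$ and $P$, yielding that $(M_i \parallel P)\setminus \tilde c = \confCPS \xi {\mathit{Sys}}$ is vulnerable to $A_{\mathsf{fp}}\langle i,m,n\rangle$ in some time interval $m'..n' \subseteq m..n$ with impact $p' \le 1-(p_i^+)^{\,n-m+1}$, which is exactly the claim. I expect the only genuinely delicate step to be the verification of the side condition on $P$: one must be careful that $\mathit{Mng}$ does not depend on $s_i$ and that $\mathit{alarm}$ is treated as a channel (an observable) rather than a sensor, so that $P$ indeed remains honest and sensor-disjoint from the attack. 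Once this is granted, the non-amplification bound $p'\le p$ is inherited from \autoref{thm:attack-tolerance} (and ultimately from \autoref{thm:attack-tolerance-gen}), and no further reasoning on the pseudometrics $\metric^k$ is required.
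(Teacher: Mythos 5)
Your proposal is correct and follows exactly the route the paper intends: decompose $\confCPS \xi {\mathit{Sys}}$ as $(\confCPS \xi {\mathit{Ctrl}_i} \parallel P)\setminus \tilde c$ with $P = \mathit{Mng} \parallel \mathit{Ctrl}_j \parallel \mathit{Ctrl}_k$, feed in the vulnerability of $\confCPS \xi {\mathit{Ctrl}_i}$ from \autoref{prop:case1} and its corollary, and conclude by the vulnerability clause of \autoref{thm:attack-tolerance}, whose side conditions (honesty of $P$ and disjointness of the sensors read by $P$ from the sensor $s_i$ written by $A_{\mathsf{fp}}\langle i,m,n\rangle$) you check just as the paper implicitly requires.
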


Now, the reader may wonder what happens if we consider a complementary attack that provides fake 
\emph{false negatives} to the controller of one of the sensors $s_i$. In this
case, the attack affects  the \emph{availability} of the system behaviour as the system will no fire the alarm   in the presence of a real intrusion. This 
because a real intrusion  will be somehow ``hidden'' by the attack. 
\begin{example}[Introducing false negatives]
The goal of the following attack is to increase the number of 
false negatives during the time interval $m..n$, with $n \geq m > 0$. Formally, 
the attack is defined as follows: 
\begin{displaymath} 
\begin{array}{rcl}
  A_{\mathsf{fn}}(i,m,n) &  =  & \tick^{m-1} . C \langle i,  n-m+1 \rangle
\\[2pt]
 C(i, j) & = &  \ifelse {j = 0} {\nil} {\timeout{\snda {s_i} {\mathsf{absence}}  . \tick. C \langle i , j-1 \rangle}  {C \langle i, j-1 \rangle}}
\, . 
\end{array}
\end{displaymath}
\end{example}
In the  following proposition, we use our metric to measure the deviation introduced by the attack $ A_{\mathsf{fn}}$ to the controller of a sensor $s_i$. With 
no surprise we get a result that is the symmetric version of  \autoref{prop:case1}.

\begin{proposition}
\label{prop:case2}
Let $\xi$ be an arbitrary physical state for the system $M_i = \confCPS \xi \mathit{Ctrl}_i$, for $i \in \{ 1, 2 , 3 \}$. Then, 
\begin{itemize}
\item 
$ M_i \parallel  A_{\mathsf{fn}} \langle i , m, n \rangle \, \bisimtoll{j}{0} \,  M_i$, 
 for $j \in 1 .. m{-}1$;
\item         	
$ M_i \parallel  A_{\mathsf{fn}} \langle i , m, n \rangle \, \bisimtoll{j}{h} \,  M_i$, with    $h=1-(p_i^-)^{j-m+1} $,
	  for $j \in m  .. n $;
\item 
$ M_i \parallel  A_{\mathsf{fn}} \langle i , m, n \rangle \, \bisimtoll{j}{r} \,  M_i$, with    $r=1-(p_i^-)^{n-m+1} $,
	  for $j > n $ or $j=\infty$. 
\end{itemize}
\end{proposition}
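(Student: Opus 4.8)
The plan is to reduce the three clauses to a single recursive computation of $\metric^{k}$ on the configurations met while the attack is active, and then to solve that recursion; the whole argument mirrors that of \autoref{prop:case1}, with $\mathsf{presence}$ and $\mathsf{absence}$, the outputs $\mathsf{on}$ and $\mathsf{off}$, and the rates $p_i^+$ and $p_i^-$ interchanged. I take the relevant state to satisfy $\statefun{}(r_i)=\mathsf{presence}$ (a genuine intrusion), which is exactly when the false-negative rate $p_i^-$ governs the readings of $s_i$; the nondeterministic successor state in rule \rulename{Time} is, in the worst case (the one selected by the $\max$ in $\Bisimulation_{\tick}$), kept fixed at $\mathsf{presence}$, since $1-p_i^- > p_i^+$.

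First I would dispose of the dormant phase. As $A_{\mathsf{fn}}\langle i,m,n\rangle = \tick^{m-1}.C\langle i,n{-}m{+}1\rangle$ and the prefix $\tick^{m-1}$ writes no sensor, an exact (distance-$0$) bisimulation relates $M_i \parallel \tick^{\ell}.C\langle i,N\rangle$ with $M_i$ for the first $\ell$ $\tick$-steps: the dormant attack merely ticks in lockstep (rule \rulename{TimePar}) and never touches the controller. Since $\metric^{k}$ is assembled one $\tick$-layer at a time, this both gives the first clause, $M_i \parallel A_{\mathsf{fn}} \bisimtoll{j}{0} M_i$ for $j \le m-1$, and reduces the other two clauses to computing $\metric^{j-m+1}$ of the active configuration $\confCPS{\xi}{\mathit{Ctrl}_i \parallel C\langle i,N\rangle}$ against $M_i$, where $N=n-m+1$.

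Writing $V^N_k := \metric^{k}\big(\confCPS{\xi}{\mathit{Ctrl}_i \parallel C\langle i,N\rangle},\, M_i\big)$, the core is a one-slot analysis of the bisimulation game. At the start of an active slot only two internal moves are enabled (a bare sensor write yields no system transition, so it is visible solely through rule \rulename{Sync}): the attack synchronisation, a $\tau$ to the Dirac distribution on the state that will emit $\mathsf{off}$; and the honest reading, a $\tau$ to $(1-p_i^-)$ on the $\mathsf{on}$-state and $p_i^-$ on the $\mathsf{off}$-state. The attacked side maximises by playing the synchronisation; $M_i$ can only reply with its honest reading, whose $p_i^-$ (false-negative) fraction also emits $\mathsf{off}$ and recurses into the next slot, while the complementary $1-p_i^-$ fraction emits $\mathsf{on}$ and cannot match the $\mathsf{off}$ output, contributing distance $1$ via $\inf\emptyset=1$. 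After each matched $\mathsf{off}$ the systems $\tick$ into the configuration with one fewer round, so the optimal Kantorovich matching yields
\[
V^N_k \;=\; (1-p_i^-) + p_i^- \cdot V^{N-1}_{k-1}, \qquad V^N_0 = V^0_k = 0 ,
\]
with solution $V^N_k = 1-(p_i^-)^{\min(k,N)}$; the base case $V^0_k=0$ records that once $C\langle i,0\rangle = \nil$ the attack ticks along harmlessly. Putting $k=j-m+1$ and $N=n-m+1$ gives $1-(p_i^-)^{j-m+1}$ on $j\in m..n$ and the stable value $1-(p_i^-)^{n-m+1}$ for $j>n$ and, since $\metric^{\infty}=\sup_k \metric^{k}$, for $j=\infty$.

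The hard part will be justifying this recurrence inside the layered definition of $\metric^{k}$. Concretely I would exhibit a candidate pseudometric $d^{\star}$ on all intermediate terms (the active configurations, the post-read $\mathsf{on}$/$\mathsf{off}$ states, and their $\tick$-guarded residues) assigning the values above, and verify $d^{\star}=\Bisimulation(d^{\star})$ on the untimed layer together with $d^{\star} \sqsupseteq \Bisimulation_{\tick}(\metric^{k-1})$ across the $\tick$, so that $d^{\star}$ is the least fixed point identified in the proof of \autoref{q_and_m_are_metrics}, namely $\metric^{k}$. Two points need real care: (i) no weak answer by $M_i$ beats the honest reading, because every way for $M_i$ to progress must first read $s_i$, so the $1-p_i^-$ mass that will emit $\mathsf{on}$ is unavoidable; and (ii) the reverse challenge, where $M_i$ moves first, is answered by the attacked system's own honest reading, matching both outputs exactly and contributing only $V^{N-1}_{k-1}\le V^N_k$, hence never dominating. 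This is precisely the bookkeeping already performed for \autoref{prop:case1} under the symmetric relabelling.
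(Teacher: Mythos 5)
Your route is, in substance, the paper's own: the paper proves this proposition purely by symmetric relabelling of \autoref{prop:case1} ($p_i^-$ in place of $p_i^+$, $C\langle\ldots\rangle$ in place of $B\langle\ldots\rangle$, $\mathsf{presence}/\mathsf{absence}$ and $\mathsf{on}/\mathsf{off}$ swapped), which is exactly what you do. Your dormant-phase reduction is the paper's first numbered property ($\metric^{j,l}(\confCPS{\xi}{P},\confCPS{\xi}{P\parallel \tick^{j'}.C\langle\ldots\rangle})=0$ for $j<j'$); your worst-case choice of $\mathsf{presence}$, justified by $1-p_i^->p_i^+$, is precisely the paper's worst-case selection of the next environment inside $\Bisimulation_{\tick}$; and your recurrence $V^N_k=(1-p_i^-)+p_i^-\,V^{N-1}_{k-1}$, grounded at $V^N_0=V^0_k=0$, is the paper's chain of per-configuration values, except that the paper keeps the three layers $\metric^{j,0}$, $\metric^{j,1}$ and $\metric^{j}=\metric^{j,2}$ separate (using the observation that at most two untimed actions occur between consecutive $\tick$s), where you compress them into one slot. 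Your points (i) and (ii) are the paper's analysis of the two directions of the game, including the fact that the honest-side challenge is answered by the attacked system's own genuine reading and never dominates.

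The one step that would fail as written is the closing justification: verifying $d^{\star}=\Bisimulation(d^{\star})$ together with $d^{\star}\sqsupseteq\Bisimulation_{\tick}(\metric^{k-1})$ does \emph{not} make $d^{\star}$ ``the least fixed point identified in the proof of \autoref{q_and_m_are_metrics}''. A fixed point lying in that lattice need not be the least one: the discrete pseudometric (distance $1$ off the diagonal, $0$ on it) is always a fixed point of $\Bisimulation$ — off-diagonal the clause $d(t,t')$ forces value $1$, on the diagonal every challenge is self-matched at Kantorovich distance $0$ — and it dominates $\Bisimulation_{\tick}(\metric^{k-1})$, yet it is generally not $\metric^{k}$. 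So your two conditions only deliver the upper bound $\metric^{k}\sqsubseteq d^{\star}$. To pin the exact value you also need $d^{\star}\sqsubseteq\metric^{k}$, and for that the domination ``across the $\tick$'' must be sharpened to an \emph{equality} on the relevant pairs, with $\metric^{k-1}$ known exactly there — i.e., the argument has to be an induction on the number of $\tick$s (the paper runs a well-founded induction over the indices $(j,l)$ of $\metric^{j}$ and $\metric^{j,l}$), in which each application of $\Bisimulation_{\tick}$ and $\Bisimulation$ is computed exactly: max over challenges, inf over responses, and the unique admissible Kantorovich matching, all evaluated at inner values that the induction has already determined. Your grounded recurrence already \emph{is} that induction, so the repair is only to drop the fixed-point detour and phrase the verification layer by layer, as in the appendix proof of \autoref{prop:case1}.
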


Again, by an application of \autoref{def:vulnerability} we can measure the impact of the 
attack $A_{\mathsf{fn}}$ to the  (sub)systems $ \confCPS {\xi } { \mathit{Ctrl_i}}$. 
\begin{corollary}
The IoT systems
 $ \confCPS {\xi } { \mathit{Ctrl_i}}$ are vulnerable to the 
   attack $ A_{\mathsf{fn}} \langle i , m, n \rangle $  in the time interval $m..n$  with impact  $   1-(p_i^-)^{  n - m +1 } $.
 \end{corollary}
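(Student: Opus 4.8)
The plan is to mirror the proof of \autoref{prop:case1}, using the fact that $A_{\mathsf{fn}}$ is obtained from $A_{\mathsf{fp}}$ by exchanging $\mathsf{presence}$ with $\mathsf{absence}$ and $p_i^+$ with $p_i^-$. The only observable activity of $M_i=\confCPS{\xi}{\mathit{Ctrl}_i}$ is the value, $\mathsf{on}$ or $\mathsf{off}$, emitted on the (unrestricted) channel $c_i$. Since $A_{\mathsf{fn}}$ forces the controller to read a fake $\mathsf{absence}$, its effect is to mask a genuine detection, so the deviation is largest when room $r_i$ actually hosts an intrusion. I would therefore first show that the worst-case contribution of each attacked slot is obtained in an environment with $\statefun{}(r_i)=\mathsf{presence}$: by rule \rulename{Time} the physical state of each new slot is chosen nondeterministically, hence the $\max$ inside $\Bisimulation_{\tick}$ over the possible successor environments selects the $\mathsf{presence}$ state, which yields per-slot divergence $1-p_i^-$, against the divergence $p_i^+$ of the $\mathsf{absence}$ state; the hypothesis $p_i^\pm<\tfrac{1}{2}$ gives $1-p_i^->p_i^+$, so $\mathsf{presence}$ dominates at every attacked slot.

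I would then split the analysis into the three phases of the statement, working at the $\tick$-boundaries. During the pre-attack phase ($j\in 1..m{-}1$) the attacker is $\tick^{m-1}$ and never offers to write on $s_i$; the two systems perform the same reads and emissions and move through the same successor environments in lock-step, so the obvious weak bisimulation gives $\metric^{j}=0$. For the attack phase I let $D_\ell$ denote the distance between the two systems when they are aligned at a $\tick$-boundary with $\ell$ attacked slots still to come, and establish the recurrence $D_\ell=(1-p_i^-)+p_i^-\,D_{\ell-1}$ with $D_0=0$. In the worst case the attacked system synchronises with the attacker via rule \rulename{Sync}, emitting $\mathsf{off}$ with probability $1$, whereas $\confCPS{\xi}{\mathit{Ctrl}_i}$ reads the real sensor and emits $\mathsf{off}$ with probability $p_i^-$ (a false negative) and $\mathsf{on}$ with probability $1-p_i^-$. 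Evaluating $\Kantorovich(\metric)$ on the two continuations, the optimal matching (\autoref{def_matching}) couples the $p_i^-$ false-negative mass onto the synchronised $\mathsf{off}$-continuation, which still carries the residual distance $D_{\ell-1}$, and is forced to couple the remaining $1-p_i^-$ mass onto the observably distinct $\mathsf{on}$/$\mathsf{off}$ continuations, of distance $1$. Solving the recurrence yields $D_\ell=1-(p_i^-)^{\ell}$, i.e.\ the claimed $h$ with $\ell=j-m+1$. Finally, in the post-attack phase ($j>n$ or $j=\infty$) the attacker has become $\nil$ and $\mathit{Ctrl}_i\parallel\nil$ is weakly bisimilar to $\mathit{Ctrl}_i$, so no further distance accrues and the value is frozen at $1-(p_i^-)^{n-m+1}$; the three ranges are glued by \autoref{prop:tol-monotonicity}.

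Each value must be certified from above and from below. For the upper bound I would exhibit, phase by phase, an explicit family of matchings realising the $D_\ell$ and verify that it satisfies the defining inequalities of $\Bisimulation$ and $\Bisimulation_{\tick}$ in \autoref{def:metric_sim_functional}: every $\tau$-challenge of the attacked system through \rulename{Sync} is answered by a weak $\transStep{\hat\tau}$ of the genuine system through \rulename{SensRead}, and conversely the attacked system answers any sensor read of the genuine one by reading the real sensor (rule \rulename{SensRead}); since in both directions the response is a total distribution, the $\dummyN$-padding in \autoref{def:metric_sim_functional} is vacuous. Taking the $\max$ over the two directions leaves $D_\ell$, because the genuine-challenge direction contributes only the residual $D_{\ell-1}\le D_\ell$. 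For the lower bound I would fix the $\mathsf{presence}$ environment together with the synchronising attacker and show that this particular challenge cannot be matched within a smaller tolerance.

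The main obstacle is the faithful transcription of this per-slot picture into the functional machinery of \autoref{twbm}. Three points need care. First, once the continuations diverge on the $\mathsf{on}$/$\mathsf{off}$ emission their distance is exactly $1$ and, being $1$-bounded, stays $1$, so that divergence is irrevocable and contributes its full weight to every later pseudometric $\metric^{k}$. Second, one must check that the synchronised $\mathsf{off}$-continuation genuinely carries distance $D_{\ell-1}$: this requires tracking the attacker's state ($C\langle i,\ell\rangle$ against $\tick.C\langle i,\ell{-}1\rangle$) across the \rulename{Timeout}/\rulename{Delay} step and confirming that the two systems are again aligned at the next $\tick$-boundary with one fewer attacked slot. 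Third, the untimed prefix of each slot (read followed by emission) has to be threaded through the iterates $\Bisimulation^{h}(\Bisimulation_{\tick}(\metric^{k-1}))$ so that the $\sup_{h}$ of \autoref{twbm} lands precisely on the $\tick$-boundary at which the recurrence is phrased. Verifying that the outer $\max$, over both bisimulation directions and over all nondeterministic choices of successor environment and of \rulename{Sync}-versus-\rulename{SensRead}, is attained exactly at the $\mathsf{presence}$-plus-synchronise strategy is where the $p_i^\pm<\tfrac{1}{2}$ hypothesis is indispensable and constitutes the crux of the argument.
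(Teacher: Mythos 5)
Your proposal is correct and takes essentially the same route as the paper: there, the corollary follows by applying \autoref{def:vulnerability} to \autoref{prop:case2}, whose proof is precisely the symmetric transcription of the proof of \autoref{prop:case1} ($p_i^-$ in place of $p_i^+$, $C$ in place of $B$, worst-case environment $\mathsf{presence}$ instead of $\mathsf{absence}$) that you reconstruct, including the per-slot dominance argument from $p_i^{\pm}<\tfrac{1}{2}$, the recurrence $D_\ell=(1-p_i^-)+p_i^-\,D_{\ell-1}$ solved as $1-(p_i^-)^{\ell}$, and the gluing of the three phases via \autoref{prop:tol-monotonicity}. The only step you leave implicit is the final, immediate check that the computed distances make $m..n$ the smallest interval satisfying the three clauses of \autoref{def:vulnerability}.
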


As our timed  metric is compositional, by an application of  \autoref{thm:attack-tolerance} we can estimate the impact of the 
attack $A_{\mathsf{fn}}$ to the whole system $\confCPS \xi \mathit{Sys}$. 
\begin{proposition}[Impact of the attack $ A_{\mathsf{fn}}$]
The system $ \confCPS \xi  \mathit{Sys} $ is vulnerable to the 
   attack $ A_{\mathsf{fn}} \langle i , m, n \rangle $  in a time interval
 $m'..n' \subseteq m..n$  with impact $ p' \leq  1-(p_i^-)^{  n - m +1}$.
 \end{proposition}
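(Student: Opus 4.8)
The plan is to obtain the statement as a direct instance of the compositionality corollary, \autoref{thm:attack-tolerance}, mirroring the argument already used for the dual attack $A_{\mathsf{fp}}$. The guiding idea is that $A_{\mathsf{fn}}\langle i,m,n\rangle$ touches only the sensor $s_i$, hence only the controller $\mathit{Ctrl}_i$, so the rest of the surveillance system can be treated as a honest context that cannot amplify the impact.

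First I would put $\confCPS{\xi}{\mathit{Sys}}$ into the exact shape required by \autoref{thm:attack-tolerance}. By \autoref{def:composing-systems} we have $\confCPS{\xi}{\mathit{Sys}} = \big(\confCPS{\xi}{\mathit{Mng} \parallel \mathit{Ctrl}_1 \parallel \mathit{Ctrl}_2 \parallel \mathit{Ctrl}_3}\big){\setminus}\{c_1,c_2,c_3\}$. Using commutativity and associativity of parallel composition I would single out the attacked controller and set $M_i = \confCPS{\xi}{\mathit{Ctrl}_i}$, let $P = \mathit{Mng} \parallel \prod_{j \neq i}\mathit{Ctrl}_j$ be the remaining honest context, and take $\tilde c = \{c_1,c_2,c_3\}$, so that $\confCPS{\xi}{\mathit{Sys}} = (M_i \parallel P){\setminus}\tilde c$.

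Next I would verify the hypotheses of the corollary. Both $M_i$ and $P$ are honest, since the controllers and the manager only read sensors and exchange values on channels, never writing on a sensor. Moreover $P$ reads exactly the sensors $s_j$ with $j \neq i$, while $A_{\mathsf{fn}}\langle i,m,n\rangle$ writes solely on $s_i$; thus $P$ does not read the sensor tampered with by the attack, as the corollary demands. Finally, the corollary following \autoref{prop:case2} gives that $M_i = \confCPS{\xi}{\mathit{Ctrl}_i}$ is vulnerable to $A_{\mathsf{fn}}\langle i,m,n\rangle$ in the interval $m..n$ with impact $1-(p_i^-)^{n-m+1}$. Feeding this into the second clause of \autoref{thm:attack-tolerance} yields that $(M_i \parallel P){\setminus}\tilde c = \confCPS{\xi}{\mathit{Sys}}$ is vulnerable to the same attack in some $m'..n' \subseteq m..n$ with impact $p' \le 1-(p_i^-)^{n-m+1}$, which is exactly the claim.

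The only delicate point is the rewriting in the first step: I have to ensure that reordering and regrouping the parallel components of $\mathit{Sys}$ does not alter any of the distances $\metric^k$, i.e.\ that parallel composition is commutative and associative up to an equivalence inducing zero distance in every $\metric^k$ (the expected structural congruence, which these operators satisfy on the underlying pLTS). Granting this, the argument is a purely mechanical instantiation of the compositionality result, and no new quantitative reasoning about $p_i^-$ is needed beyond what \autoref{prop:case2} already provides.
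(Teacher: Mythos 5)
Your proposal is correct and follows essentially the same route as the paper: the paper obtains this proposition precisely by instantiating the second clause of \autoref{thm:attack-tolerance} with $M_i = \confCPS{\xi}{\mathit{Ctrl}_i}$ (whose vulnerability in $m..n$ with impact $1-(p_i^-)^{n-m+1}$ is the corollary of \autoref{prop:case2}), the honest context $P = \mathit{Mng} \parallel \prod_{j\neq i}\mathit{Ctrl}_j$, and $\tilde c = \{c_1,c_2,c_3\}$. Your extra remark about reordering the parallel components only makes explicit a harmless step the paper leaves implicit.
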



\section{Conclusions, related and future work}
\label{sec:conclusions}

We have proposed a timed  generalisation of the 
$n$-bisimulation metric~\cite{vB12}, called \emph{timed bisimulation metric}, 
obtained by defining two functionals over the complete lattice of the functions assigning a distance in $[0,1]$ to each pair of systems: the former deals with the distance accumulated when executing untimed  steps, the latter with the distance introduced by  timed actions.

We have used our timed bisimulation metrics to provide a formal and \emph{compositional} notion of \emph{impact metric} for \emph{cyber-physical attacks} on \emph{IoT systems} specified in a simple timed process calculus. In particular, we have focussed on 
cyber-physical attacks targeting sensor devices (attack on sensors are by far the most studied cyber-physical attacks~\cite{survey-CPS-security-2016}). We have used our timed weak bisimulation with tolerance to formalise the notions of  \emph{attack tolerance} 
and \emph{attack vulnerability with a given impact $p$}. 
In particular, 
a system  $M$ is said to be vulnerable to an attack $A$ in the time interval $m..n$ with impact $p$ if 
the perturbation introduced by $A$ becomes observable in the $m$-th time slot and yields the maximum impact $p$ in the $n$-th time slot. Here, we wish to stress that  the \emph{vulnerability window} $m..n$ is quite informative. 
 In practise, this interval says when an attack will produce observable effects on the system under attack. 
Thus,  if $n$ is finite  we have an attack with \emph{temporary effects}, otherwise we have an attack with \emph{permanent effects}.  Furthermore,  if the attack is quick enough, and terminates  well before the time instant $m$,  then we
have a \emph{stealthy attack} that affects the system late enough to allow
\emph{attack camouflages}~\cite{GGIKLW2015}. On the other hand, if 
at time $m$ the attack is far from termination, then the IoT system under attack 
has good chances of undertaking countermeasures to stop the attack.

As a case study,  we have estimated the impact of two  cyber-physical attacks on sensors  that introduce \emph{false positives} and \emph{false negatives}, respectively, into a  simple surveillance system, affecting the
 \emph{integrity} and the \emph{availability} of the IoT system. Although our attacks are quite simple, the specification language and the corresponding metric semantics presented in the paper allow us to deal with smarter attacks, such as \emph{periodic attacks} with constant or variable period of attack. Moreover, we can easily extend our threat model to recover (well-known) attacks on communication channels.

\subsubsection*{Related work.}
We are aware of a  number of works using formal methods for \CPS{} security, although they apply methods, and most of the time  have goals, that are quite different from ours.

Burmester et al.~\cite{BuMaCh2012} employed \emph{hybrid timed automata} to give a threat model 
based on the traditional Byzantine fault model for
crypto-security.
However, as remarked in~\cite{TeShSaJo2015}, cyber-physical attacks and faults have inherently distinct
characteristics.  
In fact, unlike faults, cyber-physical attacks may be performed over a significant number of attack points and in a coordinated way.

In~\cite{Vig2012}, Vigo presented an attack scenario that addresses some
of the peculiarities of a cyber-physical adversary, and discussed how this
scenario relates to other attack models popular in the security protocol
literature. Then, in~\cite{Vigo2015,VNN2013} Vigo et al.\ proposed an
untimed calculus of broadcasting processes 
equipped with notions of 
failed and unwanted communication. They focus on DoS attacks without taking into consideration timing aspects or attack impact.

 Bodei et al.~\cite{BDFG16,BDFG17}  proposed an untimed 
process calculus, IoT-LySa, supporting  a control flow analysis that safely approximates the abstract behaviour of IoT systems. Essentially, they track how data spread from sensors to the logics of the network, 
and how physical data are manipulated. 

  Rocchetto and
Tippenhaur~\cite{RocchettoTippenhauer2016a} introduced a taxonomy of the diverse attacker models proposed
for \CPS{} security and outline requirements for generalised attacker
models; in~\cite{RocchettoTippenhauer2016b}, they then proposed an
extended Dolev-Yao attacker model suitable for \CPS{s}. 
In their approach, physical layer interactions are modelled as abstract
interactions between logical components to support reasoning on the
physical-layer security of \CPS{s}. This is done by introducing additional
orthogonal channels. Time is not represented.

Nigam et al.~\cite{Nigam-Esorics2016} 
worked around the notion of Timed Dolev-Yao Intruder Models for
Cyber-Physical Security Protocols by bounding the number of intruders
required for the automated verification of such protocols. Following a
tradition in security protocol analysis, they provide an answer to the
question: How many intruders are enough for verification and where should
they be placed? 
Their notion of time is somehow different from ours, as they focus on the time a
message needs to travel from an agent to another. The paper does not
mention physical devices, such as sensors and/or actuators.

%
%
Finally, Lanotte et al.~\cite{paperCSF2017} defined a hybrid process calculus to model both \CPS{s} and cyber-physical attacks; they defined a threat model for cyber-physical attacks to physical devices and provided a proof methods to assess attack tolerance/vulnerability with respect to a timed trace semantics (no tolerance allowed).

\subsubsection*{Future work.}
Recent works~\cite{LM11,GLT16,LMT17b,LMT17,GT18} have shown that bisimulation metrics are suitable 
for compositional reasoning, as the distance between two complex systems can be often derived in terms of the distance between their components. In this respect, \autoref{thm:attack-tolerance-gen} and \autoref{thm:attack-tolerance} allows us compositional reasonings when computing the impact of attacks 
on a target system, 
in terms of the impact on its sub-systems. We believe that this result can be  generalised to estimate the impact of parallel attacks of the form $A = A_1 \parallel \ldots \parallel A_k$
in terms of the impacts of each malicious module $A_i$.

As future work, we also intend to adopt our impact metric  in more involved languages for \emph{cyber-physical systems and attacks}, such as the language developed in~\cite{paperCSF2017}, 
with an explicit representation of  physical processes via differential equations or their discrete counterpart, difference equations. 

\paragraph*{Acknowledgements.}
We thank the anonymous reviewers for  valuable comments.  This work has been partially supported by the project ``Dipartimenti di Eccellenza 2018-2022'', funded by the Italian Ministry of Education, Universities and Research (MIUR), and by the Joint Project 2017 ``Security Static Analysis for Android Things'', funded by the University of Verona and JuliaSoft Srl.  \textcolor{white}{\cite{CHM15,MKN02}}


\bibliographystyle{abbrv}
\bibliography{IoT_bib}



\appendix

\section{Proofs}

To prove \autoref{q_and_m_are_metrics} we need some preliminary results.
The first of these results is \autoref{prop_kant_quasimetric} below, which states that the pseudometric property is preserved by function $\Kantorovich$, namely $\Kantorovich(d)$ is a pseudometric over 
$\distrib(\states)$ whenever $d$ is a pseudometric over $\states$.
\autoref{triangolareStrana} supports \autoref{prop_kant_quasimetric}.

\begin{lemma}
\label{triangolareStrana}
Assume two functions $d,d'  \colon \states \times \states \to [0,1]$ with 
$d(t,t') \le d'(t,t'')+d'(t'',t)$ for all terms $t,t',t'' \in \states$.
Then $\Kantorovich(d)(\Delta_1,\Delta_2) \le 
\Kantorovich(d')(\Delta_1,\Delta_3) +
\Kantorovich(d')(\Delta_3,\Delta_2)$ for all distributions $\Delta_1,\Delta_2,\Delta_3 \in \distr{\states}$.
\end{lemma}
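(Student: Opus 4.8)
The plan is to prove the inequality by the classical \emph{gluing} construction from optimal transport (cf.\ the transportation reading of matchings in \cite{Vil08}): starting from optimal matchings for the pairs $(\Delta_1,\Delta_3)$ and $(\Delta_3,\Delta_2)$, I build a single matching for $(\Delta_1,\Delta_2)$ whose transport cost under $d$ is bounded by the sum of the two optimal costs under $d'$. Since all three distributions have finite support, the minima in \autoref{def:Kantorovich} are attained, so I may fix matchings $\omega_{13}\in\Omega(\Delta_1,\Delta_3)$ and $\omega_{32}\in\Omega(\Delta_3,\Delta_2)$ realising $\Kantorovich(d')(\Delta_1,\Delta_3)$ and $\Kantorovich(d')(\Delta_3,\Delta_2)$, respectively.

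First I would glue $\omega_{13}$ and $\omega_{32}$ along their common marginal $\Delta_3$. Concretely, I define a mass assignment on triples by
\[
\pi(s,u,t)\;=\;\frac{\omega_{13}(s,u)\cdot\omega_{32}(u,t)}{\Delta_3(u)}
\]
for every $u\in\support(\Delta_3)$, and $\pi(s,u,t)=0$ otherwise. Using that the second marginal of $\omega_{13}$ and the first marginal of $\omega_{32}$ both equal $\Delta_3$, a one-line summation gives $\sum_t\pi(s,u,t)=\omega_{13}(s,u)$ and $\sum_s\pi(s,u,t)=\omega_{32}(u,t)$. I then set $\omega_{12}(s,t)=\sum_u\pi(s,u,t)$ and check that $\omega_{12}\in\Omega(\Delta_1,\Delta_2)$: summing over $t$ collapses $\pi$ to $\omega_{13}$ and yields first marginal $\Delta_1$, while summing over $s$ collapses it to $\omega_{32}$ and yields second marginal $\Delta_2$.

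With this matching in hand, I bound $\Kantorovich(d)(\Delta_1,\Delta_2)$ from above by the cost of $\omega_{12}$, rewrite that cost over triples, and apply the hypothesis (with the middle point $u$ ranging over $\support(\Delta_3)$) in the form $d(s,t)\le d'(s,u)+d'(u,t)$:
\[
\Kantorovich(d)(\Delta_1,\Delta_2)\;\le\;\sum_{s,u,t}\pi(s,u,t)\,d(s,t)\;\le\;\sum_{s,u,t}\pi(s,u,t)\big(d'(s,u)+d'(u,t)\big).
\]
Splitting the final sum and marginalising $\pi$ back to $\omega_{13}$ and $\omega_{32}$ turns the two pieces into $\sum_{s,u}\omega_{13}(s,u)\,d'(s,u)$ and $\sum_{u,t}\omega_{32}(u,t)\,d'(u,t)$, which equal $\Kantorovich(d')(\Delta_1,\Delta_3)$ and $\Kantorovich(d')(\Delta_3,\Delta_2)$ by optimality of the two chosen matchings, giving the claim.

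The only genuinely delicate step is the gluing itself: the construction divides by $\Delta_3(u)$, so I must isolate the points $u\notin\support(\Delta_3)$ (they carry no mass, and indeed $\omega_{13}(s,u)=\omega_{32}(u,t)=0$ there, so the convention $\pi=0$ is consistent), and I must verify carefully that the two one-step marginalisations of $\pi$ recover precisely $\omega_{13}$ and $\omega_{32}$ — this is exactly where taking $\Delta_3$ as the shared marginal is essential. Everything else is routine rearrangement of finite sums, and finiteness of the supports guarantees that all sums converge and all minima are attained.
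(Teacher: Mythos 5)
Your proof is correct and follows essentially the same route as the paper: the paper's matching $\omega(t_1,t_2)=\sum_{t_3\mid\Delta_3(t_3)\neq 0}\omega_1(t_1,t_3)\,\omega_2(t_3,t_2)/\Delta_3(t_3)$ is exactly your glued $\omega_{12}$ after marginalising out the middle coordinate, and the marginal checks and the cost bound via the triangle-type hypothesis coincide with the paper's two proof obligations. The only difference is presentational: you pass through an explicit coupling $\pi$ on triples, while the paper writes the pair matching directly.
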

\begin{proof}
Consider the function $\omega \colon \states \times \states \to [0,1]$ defined for all terms $t_1,t_2 \in \states$ as 
\[
\omega(t_1,t_2) = \sum_{t_3 \in \states \mid \Delta_3(t_3) \neq 0} \frac{\omega_1(t_1,t_3) \cdot \omega_2(t_3,t_2)}{\Delta_3(t_3)}
\] 
with $\omega_1 \in \Omega(\Delta_1,\Delta_3)$ one of the optimal matchings realising $\Kantorovich(d')(\Delta_1,\Delta_3)$, and $\omega_2 \in \Omega(\Delta_3,\Delta_2)$ one of the optimal matchings realising $\Kantorovich(d')(\Delta_3,\Delta_2)$.
We will prove that:\
\begin{compactenum}
\item \label{Kant_triang_uno}
$\omega$ is a matching in $\Omega(\Delta_1,\Delta_2)$, and
\item \label{Kant_triang_due}
$\sum_{t_1,t_2 \in \states} \omega(t_1,t_2) \cdot d(t_1,t_2) \le \Kantorovich(d')(\Delta_1,\Delta_3)  + \Kantorovich(d')(\Delta_3,\Delta_2)$.
\end{compactenum}
By property \ref{Kant_triang_uno} we infer $\Kantorovich(d)(\Delta_1,\Delta_2) \le \sum_{t_1,t_2 \in \states} \omega(t_1,t_2) \cdot d(t_1,t_2)$, then by property \ref{Kant_triang_due} we infer the thesis
$\Kantorovich(d)(\Delta_1,\Delta_2) \le \Kantorovich(d')(\Delta_1,\Delta_3)  + \Kantorovich(d')(\Delta_3,\Delta_2)$.
To show (\ref{Kant_triang_uno}) we prove that the left marginal of $\omega$ is $\Delta_1$ by\\[3pt]
\begin{math}
\begin{array}{rlr}
& 
\sum_{t_2 \in \states} \omega(t_1,t_2)
\\[5pt]
= \quad & \sum_{t_2 \in \states}  \sum_{t_3 \in \states \mid \Delta_3(t_3) \neq 0} \frac{\omega_1(t_1,t_3) \cdot \omega_2(t_3,t_2)}{\Delta_3(t_3)}
\\[5pt]
= \quad & \sum_{t_3 \in \states \mid \Delta_3(t_3) \neq 0} \frac{\omega_1(t_1,t_3) \cdot \Delta_3(t_3)}{\Delta_3(t_3)} & \text{(by $\omega_2 \in \Omega(\Delta_3,\Delta_2)$)}
\\[5pt]
= \quad & \sum_{t_3 \in \states \mid \Delta_3(t_3) \neq 0} \omega_1(t_1,t_3) 
\\[5pt]
= \quad & \Delta_1(t_1) & \text{(by $\omega_1 \in \Omega(\Delta_1,\Delta_3)$)}
\end{array}
\end{math}\\[5pt]
and we observe that the proof that the right marginal of $\omega$ is $\Delta_2$ is analogous.
Then, we show (\ref{Kant_triang_due}) by\\[3pt]
\begin{math}
\begin{array}{rlr}
&  \sum_{t_1,t_2 \in \states} \omega(t_1,t_2) \cdot d(t_1,t_2)
\\[5pt]
=  \quad &\sum_{t_1,t_2 \in \states} \sum_{t_3 \in \states \mid \Delta_3(t_3) \neq 0} \frac{\omega_1(t_1,t_3) \cdot \omega_2(t_3,t_2)}{\Delta_3(t_3)} \cdot d(t_1,t_2)
\\[5pt]
\le   \quad &  \sum_{t_1,t_2 \in \states,t_3 \in \states \mid \Delta_3(t_3) \neq 0} \frac{\omega_1(t_1,t_3) \cdot \omega_2(t_3,t_2)}{\Delta_3(t_3)} \cdot d'(t_1,t_3)  \; + 
\\[5pt]
 \quad  & 
 \sum_{t_1,t_2 \in \states, t_3 \in \states \mid \Delta_3(t_3) \neq 0} \frac{\omega_1(t_1,t_3) \cdot \omega_2(t_3,t_2)}{\Delta_3(t_3)} \cdot d'(t_3,t_2) 
\\[5 pt]
=  \quad &  \sum_{t_1,t_3 \in \states} \frac{\omega_1(t_1,t_3) \cdot \Delta_3(t_3)}{\Delta_3(t_3)} \cdot  d'(t_1,t_3)  \; + 
   \sum_{t_2,t_3 \in \states} \frac{\Delta_3(t_3) \cdot \omega_2(t_3,t_2)}{\Delta_3(t_3)}  \cdot d'(t_3,t_2)
\\[5pt]
=  \quad &  \sum_{t_1,t_3 \in \states} \omega_1(t_1,t_3) \cdot d'(t_1,t_3) \; +
 \sum_{t_2,t_3 \in \states} \omega_2(t_3,t_2)  \cdot d'(t_3,t_2)
\\[5pt]
=  \quad &  \Kantorovich(d')(\Delta_1,\Delta_3)  + \Kantorovich(d')(\Delta_3,\Delta_2) 
\end{array}
\end{math}\\[5pt]
where the inequality follows from 
the hypothesis and the third last equality follows by $\omega_2 \in \Omega(\Delta_3,\Delta_2)$ and $\omega_1 \in \Omega(\Delta_1,\Delta_2)$.
\qed
\end{proof}

\begin{proposition}
\label{prop_kant_quasimetric}
If $d \colon \states \times \states \to [0,1]$ is a $1$-bounded pseudometric over $\states$, then $\Kantorovich(d) \colon \distrib(\states) \times \distrib(\states) \to [0,1]$ is a $1$-bounded pseudometric over $\distrib(\states)$.
\end{proposition}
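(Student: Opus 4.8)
The plan is to verify directly the three defining properties of a $1$-bounded pseudometric (\autoref{def:pseudoquasimetric}) for $\Kantorovich(d)$, together with the range condition. First, $[0,1]$-boundedness is immediate: every matching $\omega \in \Omega(\Delta,\Theta)$ is a genuine probability distribution on $\states \times \states$, so using $0 \le d \le 1$ we have $0 \le \sum_{s,t} \omega(s,t) \cdot d(s,t) \le \sum_{s,t}\omega(s,t) = 1$; taking the minimum over $\Omega(\Delta,\Theta)$, which exists by finiteness of the supports (\autoref{def:Kantorovich}), keeps the value in $[0,1]$.

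For reflexivity, I would exhibit the diagonal coupling $\omega$ defined by $\omega(t,t) = \Delta(t)$ and $\omega(s,t) = 0$ for $s \neq t$. Its marginals are $\sum_{t}\omega(s,t) = \Delta(s)$ and $\sum_{s}\omega(s,t) = \Delta(t)$, so $\omega \in \Omega(\Delta,\Delta)$, and it realises the objective $\sum_{t}\Delta(t) \cdot d(t,t) = 0$ because $d(t,t)=0$. Since $\Kantorovich(d)(\Delta,\Delta)$ is a minimum of nonnegative terms and this coupling attains $0$, we conclude $\Kantorovich(d)(\Delta,\Delta)=0$. For symmetry, I would use the transpose $\omega^{\mathsf{T}}(s,t) = \omega(t,s)$: this map is a bijection between $\Omega(\Delta,\Theta)$ and $\Omega(\Theta,\Delta)$ that swaps the two marginals, and because $d$ is symmetric the objective is preserved, $\sum_{s,t}\omega^{\mathsf{T}}(s,t) \cdot d(s,t) = \sum_{s,t}\omega(s,t)\cdot d(s,t)$. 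Hence the two minima coincide and $\Kantorovich(d)(\Delta,\Theta)=\Kantorovich(d)(\Theta,\Delta)$.

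The triangle inequality is the only substantial point, and it is already isolated in \autoref{triangolareStrana}. I would simply instantiate that lemma with $d' = d$: since $d$ is a pseudometric it satisfies the triangle inequality $d(t,t') \le d(t,t'') + d(t'',t')$ taken as the hypothesis of \autoref{triangolareStrana}, which then yields $\Kantorovich(d)(\Delta_1,\Delta_2) \le \Kantorovich(d)(\Delta_1,\Delta_3) + \Kantorovich(d)(\Delta_3,\Delta_2)$ for all $\Delta_1,\Delta_2,\Delta_3 \in \distr{\states}$. Thus the genuine obstacle, namely gluing an optimal coupling of $(\Delta_1,\Delta_3)$ with one of $(\Delta_3,\Delta_2)$ into an admissible coupling of $(\Delta_1,\Delta_2)$ whose transport cost is controlled, has been factored out into the preceding lemma; the remaining work for the proposition is only the assembly of the three axioms above.
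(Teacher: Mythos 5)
Your proposal is correct and follows essentially the same route as the paper's own proof: the diagonal coupling for $\Kantorovich(d)(\Delta,\Delta)=0$, the transposed matching for symmetry, and an instantiation of \autoref{triangolareStrana} with $d'=d$ for the triangle inequality. The only addition, the explicit $[0,1]$-boundedness check, is immediate and does not change the argument.
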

\begin{proof} 
We have to prove that $\Kantorovich(d)$ satisfies the three properties in \autoref{def:pseudoquasimetric}.

To show $\Kantorovich(d)(\Delta,\Delta) = 0$ it is enough to take the matching $\omega \in \Omega(\Delta,\Delta)$ defined by $\omega(t,t) = \Delta(t)$, for all $t \in \states$, and $\omega(t,t') = 0$, for all $t,t'\in \states$ with $t \neq t'$. 
In fact, we obtain $\Kantorovich(d)(\Delta,\Delta) =0 $ by $\Kantorovich(d)(\Delta,\Delta) \le
\sum_{t,t' \in \states}\omega(t,t') \cdot d(t,t') = \sum_{t \in \states} \Delta(t) \cdot d(t,t) =0$, with the last equality from the property $d(t,t) =0$ of the pseudometric $d$.

To show the symmetry property $\Kantorovich(d)(\Delta_1,\Delta_2) = \Kantorovich(d)(\Delta_2,\Delta_1)$ it is enough to observe that for any matching $\omega \in \Omega(\Delta_1,\Delta_2)$, the function $\omega' \colon \states \times \states \to [0,1]$ defined for all processes $t_1,t_2 \in \states$ as $\omega'(t_1,t_2) = \omega(t_2,t_1)$, is a matching in $\Omega(\Delta_2,\Delta_1)$. 
In fact, by exploiting this property, given one of the optimal matching $\omega \in \Omega(\Delta_1,\Delta_2)$ realising $\Kantorovich(d)(\Delta_1,\Delta_2)$ we get\\[5pt]
\begin{math}
\begin{array}{rlr}
& \Kantorovich(d)(\Delta_1,\Delta_2)\\[5pt]
= &  \sum_{t_1,t_2 \in \states} \omega(t_1,t_2) \cdot d(t_1,t_2)\\[5pt]
= &  \sum_{t_2,t_1 \in \states} \omega'(t_2,t_1) \cdot d(t_2,t_1)\\[5pt]
\ge & \Kantorovich(d)(\Delta_2,\Delta_1)\\[5pt]
\end{array}
\end{math}\\
with the second equality from the symmetry property $d(t_1,t_2) = d(t_2,t_1)$ of the pseudometric $d$.
Then, by exchanging the role of $\Delta_1$ and $\Delta_2$ we get  
$\Kantorovich(d)(\Delta_2,\Delta_1) \ge \Kantorovich(d)(\Delta_1,\Delta_2)$, thus giving $\Kantorovich(d)(\Delta_1,\Delta_2) = \Kantorovich(d)(\Delta_2,\Delta_1)$.

We conclude by observing that the triangular property 
$\Kantorovich(d)(\Delta_1,\Delta_2) \le \Kantorovich(d)(\Delta_1,\Delta_3) + \Kantorovich(d)(\Delta_3,\Delta_2)$ is an instance of \autoref{triangolareStrana}, which can be applied since the hypothesis $d(t,t') \le d(t,t'') + d(t'',t')$ for all $t,t',t'' \in \states$ follows from the triangular property of the pseudometric $d$.
\qed
\end{proof}


Now we prove that for all $k \ge 1$, the function $\metric^k$ is a fixed point of $\Bisimulation$.
\begin{lemma}
\label{lemma_mk_fixed_point}
For all $k \ge 1$, 
$\Bisimulation(\metric^k) = \metric^k$
\end{lemma}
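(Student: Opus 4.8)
The plan is to recognise the sequence $(\metric^{k,h})_{h \in \mathbb{N}}$, for fixed $k \ge 1$, as the Kleene iteration of the functional $\Bisimulation$ started from $\metric^{k,0} = \Bisimulation_{\tick}(\metric^{k-1})$: indeed $\metric^{k,h+1} = \Bisimulation(\metric^{k,h})$ for every $h \ge 0$, and $\metric^{k} = \sup_{h \in \mathbb{N}}\metric^{k,h}$ since $k \ge 1$. It then suffices to show that this supremum is a fixed point of $\Bisimulation$, which is the content of the classical Kleene fixed-point argument once monotonicity and $\omega$-continuity of $\Bisimulation$ are in place.

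First I would record the two easy ingredients. The functional $\Bisimulation$ is \emph{extensive}, i.e.\ $d \sqsubseteq \Bisimulation(d)$ for every $d$, because $\Bisimulation(d)(t,t')$ is a maximum one of whose arguments is $d(t,t')$; and it is monotone, since $\Kantorovich$ is monotone (as already noted after \autoref{def:distance-n}). Extensivity and monotonicity give that $(\metric^{k,h})_h$ is a non-decreasing chain, so $\metric^{k} = \sup_{h \ge 1}\metric^{k,h}$ as well. Combining these facts yields one inclusion for free: since $\metric^{k,h} \sqsubseteq \metric^{k}$ for all $h$, monotonicity gives $\metric^{k} = \sup_{h}\metric^{k,h+1} = \sup_h \Bisimulation(\metric^{k,h}) \sqsubseteq \Bisimulation(\sup_h \metric^{k,h}) = \Bisimulation(\metric^{k})$.

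The real work is the reverse inclusion $\Bisimulation(\metric^{k}) \sqsubseteq \metric^{k}$, which amounts to the $\omega$-continuity of $\Bisimulation$, namely $\Bisimulation(\sup_h d_h) \sqsubseteq \sup_h \Bisimulation(d_h)$ for every non-decreasing chain $(d_h)$. Here I would fix $t,t'$ and unfold the definition of $\Bisimulation$. The outer supremum over labels $\alpha$ and the inner $\max_{t \transStep{\alpha}\Delta}$ (the latter a finite maximum, by image-finiteness) commute with $\sup_h$ in the presence of a monotone chain; and the continuity of $\Kantorovich$ gives $\Kantorovich(\sup_h d_h) = \sup_h \Kantorovich(d_h)$ at every fixed pair of arguments. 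This last point holds because, the supports being finite, $\Kantorovich(d)(\Delta,\Theta)$ is a minimum over the fixed compact polytope $\Omega(\Delta,\Theta)$ of a sum that is linear in $d$; the convergence $d_h \uparrow \sup_h d_h$ makes the integrand increase uniformly (Dini), so minimum and supremum may be exchanged.

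The main obstacle is the innermost infimum $\inf_{t' \TransStep{\hat{\alpha}}\Theta}$ over \emph{weak} transitions: to push $\sup_h$ outwards I must prove $\inf_\Theta \sup_h \Kantorovich(d_h)\big(\Delta, \Theta + (1-\size{\Theta})\dirac{\dummyN}\big) \le \sup_h \inf_\Theta \Kantorovich(d_h)(\ldots)$, the direction opposite to the one ($\inf\sup \ge \sup\inf$) that holds for free. Unlike the strong case, the set of weak derivatives $\Theta$ is not finite, so this exchange is not automatic; this is exactly where the hypotheses bite. I would resolve it as in~\cite{vB12}: image-finiteness together with the finiteness of the value set $\mathcal V$ let one restrict attention to a suitably compact family of derivatives on which the non-decreasing convergence $\Kantorovich(d_h) \uparrow \Kantorovich(\sup_h d_h)$ is uniform, so the infimum and the supremum may again be swapped; this is precisely the ingredient that makes $\Bisimulation$ continuous with closure ordinal $\omega$. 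Granting $\omega$-continuity, we conclude $\Bisimulation(\metric^{k}) = \Bisimulation(\sup_h \metric^{k,h}) = \sup_h \Bisimulation(\metric^{k,h}) = \sup_h \metric^{k,h+1} = \metric^{k}$, which is the claim.
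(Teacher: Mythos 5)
Your proof is correct and takes essentially the same route as the paper's: both recognise $(\metric^{k,h})_{h}$ as the Kleene chain of $\Bisimulation$ started at $\Bisimulation_{\tick}(\metric^{k-1})$, and both reduce the claim to monotonicity plus $\omega$-continuity of $\Bisimulation$, delegating the continuity argument to the image-finiteness/finite-support reasoning of~\cite{vB12}. The only organisational difference is that you verify the fixed-point equation directly via the two inclusions (monotonicity for $\metric^{k} \sqsubseteq \Bisimulation(\metric^{k})$, continuity for the converse), whereas the paper invokes Knaster--Tarski together with the closure-ordinal-$\omega$ characterisation, which additionally yields \emph{leastness} of the fixed point---a property your argument does not establish but which the paper later uses in the proof of Theorem~\ref{q_and_m_are_metrics}.
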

\begin{proof}
First we note that structure 
${(\{d \colon \states \times \states \to [0,1] \mid \Bisimulation_{\tick}(\metric^{k-1}) \sqsubseteq d\},\sqsubseteq)}$, 
with $d_1 \sqsubseteq d_2$ iff $d_1(t, t') \le d_2(t,t')$ for all $t,t' \in \states$,
is a complete lattice.
Indeed, for each set $D \subseteq [0,1]^{ \states\times  \states}$, the supremum and infimum are defined as $\sup(D)(t,t') = \sup_{d \in D}d(t,t')$ and $\inf(D)(t,t') = \inf_{d \in D}d(t,t')$, for all $t,t' \in \states$. 
The infimum of the lattice is clearly $\Bisimulation_{\tick}(\metric^{k-1})$.
Being $\Bisimulation$ monotone, by the Knaster-Tarski theorem $\Bisimulation$ has a least fixed point.
Since our pLTS is image-finite, and all transitions lead to distributions with finite support, with arguments analogous to those used in~\cite{vB12} it is possible to prove that $\Bisimulation$ 
is continuous 
and its closure ordinal is $\omega$, thus implying that its least fixed point is the supremum of the Kleene ascending chain 
$\Bisimulation_{\tick}(\metric^{k-1})  \sqsubseteq  \Bisimulation(\Bisimulation_{\tick}(\metric^{k-1})) \sqsubseteq  \Bisimulation^2(\Bisimulation_{\tick}(\metric^{k-1}))  \sqsubseteq \ldots$ =
$\metric^{k,0} \sqsubseteq \metric^{k,1} \sqsubseteq \metric^{k,2} \sqsubseteq \ldots$, and, by definition, the supremum of this chain is $\metric^k$.
\end{proof}

Now we exploit \autoref{lemma_mk_fixed_point} to prove that 
for arbitrary processes $t,t' \in \states$, process $t'$ is able to simulate transitions of the form $t \TransStep{\hat \alpha} \Delta$, besides those of the form $t \transStep{\alpha} \Delta$, when $\alpha \neq \tick$.  
\begin{lemma} 
\label{lemma_sim_weak_transitions}
Given two arbitrary terms $t,t' \in \states$,
whenever $t \TransStep{\hat \alpha} \Delta$ for $\alpha \neq \tick$,
we have:
\[
\inf_{t' \TransStep{\hat \alpha} \Theta} \Kantorovich(\metric^k)(\Delta + (1-\size{\Delta}) \dirac{\dummyN},\Theta + (1-\size{\Theta}) \dirac{\dummyN}) \le \metric^k(t,t')
\]
\end{lemma}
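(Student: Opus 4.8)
The plan is to bootstrap from the single-step matching guarantee supplied by \autoref{lemma_mk_fixed_point} and propagate it along the structure of the weak transition $t \TransStep{\hat\alpha}\Delta$. Since $\Bisimulation(\metric^k)=\metric^k$, unfolding \autoref{def:metric_sim_functional} gives, for every $\alpha\neq\tick$ and every \emph{strong} step $t\transStep{\alpha}\Delta$,
\[
\inf_{t'\TransStep{\hat\alpha}\Theta}\Kantorovich(\metric^k)\big(\Delta,\Theta+(1-\size{\Theta})\dirac{\dummyN}\big)\le\metric^k(t,t').
\]
The reflexive case $t\transStep{\hat\tau}\dirac{t}$ is matched trivially by $t'\TransStep{\hat\tau}\dirac{t'}$, since $\Kantorovich(\metric^k)(\dirac{t},\dirac{t'})=\metric^k(t,t')$. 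These two facts constitute the base case on which the induction rests; note that on these single steps $\Delta$ is already a full distribution, so the left padding in the statement vanishes.

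The core of the argument is a \emph{distribution-lifting} step: whenever two sub-distributions satisfy $\Kantorovich(\metric^k)(\Delta+(1-\size{\Delta})\dirac{\dummyN},\Theta+(1-\size{\Theta})\dirac{\dummyN})\le\epsilon$ and $\Delta\transStep{\hat\alpha}\Delta'$ (with the transition lifted to sub-distributions as in \autoref{sec:weak}), I would produce a weak transition $\Theta\TransStep{\hat\alpha}\Theta'$ whose padded Kantorovich distance from $\Delta'$ is still at most $\epsilon$. To do so I fix an optimal matching $\omega$ realising the Kantorovich distance between the two padded distributions, split $\Delta$ into the pieces routed by $\omega$ towards each term of $\Theta$ (and towards $\dummyN$), apply the base-case single-step matching to each $\omega$-related pair of terms, and recombine the resulting weak transitions using the same convex weights. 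Here the $\dummyN$-padding is essential: terms of $\Delta$ that cannot fire $\hat\alpha$, or that $\omega$ matches to the deadlock $\dummyN$, contribute lost mass that is absorbed uniformly on both sides, so that the convex combination of the pointwise matchings is again a valid matching witnessing a distance $\le\epsilon$. The convexity of $\Kantorovich$ with respect to convex combinations of the component (sub-)distributions is what keeps this recombination within the bound $\epsilon$.

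With this lifting step in hand, I would finish by induction on the shape of $t\TransStep{\hat\alpha}\Delta$. Recall that $\TransStep{\hat\tau}=(\transStep{\hat\tau})^{\ast}$ and, for $\alpha\neq\tau$, $\TransStep{\hat\alpha}=\TransStep{\hat\tau}\transStep{\hat\alpha}\TransStep{\hat\tau}$. An induction on the number of $\transStep{\hat\tau}$-steps, chaining the lifting step and using the transitivity of $\le$ along the successively matched weak transitions, handles each $\TransStep{\hat\tau}$ segment; one further application of the lifting step handles the central $\transStep{\hat\alpha}$ step; composing the three segments yields the required $t'\TransStep{\hat\alpha}\Theta$ with padded distance $\le\metric^k(t,t')$. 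The case $\alpha=\tau$ is just the pure $\TransStep{\hat\tau}$ induction.

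The main obstacle is the distribution-lifting step: the delicate part is the bookkeeping needed to reassemble the pointwise-matched weak transitions into a \emph{single} well-formed weak transition $\Theta\TransStep{\hat\alpha}\Theta'$, tracking how $\omega$ splits probability mass across the subset of the support that actually fires $\hat\alpha$, and verifying that the $\dummyN$-padding remains consistent on both sides so that the convex combination of the individual optimal matchings is itself a matching realising distance at most $\epsilon$. Everything else reduces to a routine transitivity argument once this lemma is in place.
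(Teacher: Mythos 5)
Your plan follows the same route as the paper's own proof: the base case is the term-level transfer condition extracted from the fixed-point property $\Bisimulation(\metric^k)=\metric^k$ (\autoref{lemma_mk_fixed_point}), and the step case fixes an optimal matching $\omega$, applies the term-level transfer to each $\omega$-related pair, and recombines the resulting weak transitions with the weights of $\omega$, using the $\dirac{\dummyN}$-padding to absorb lost mass. The paper organises this as an induction on the length of the derivation of $t\TransStep{\hat{\alpha}}\Delta$ (prefix of length $n-1$ plus one lifted step) rather than as a standalone distribution-level lemma chained along the derivation, but that difference is inessential.

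The genuine gap is that your distribution-lifting step is false as stated, and the failure lies not in the bookkeeping you single out as the main obstacle but in the \emph{existence} of the transition to be recombined. For $\alpha\neq\tau$ the lifted relation $\transStep{\hat{\alpha}}$ only requires a non-empty part of the support of $\Delta$ to fire, and that firing mass can be arbitrarily small; each unit of it that $\omega$ routes to terms of $\Theta$ unable to fire costs at most $1$ per unit in the Kantorovich sum. Hence the hypothesis $\Kantorovich(\metric^k)(\Delta+(1-\size{\Delta})\dirac{\dummyN},\Theta+(1-\size{\Theta})\dirac{\dummyN})\le\epsilon$ is compatible with $\epsilon<1$ while \emph{no} term in the support of $\Theta$ can perform $\TransStep{\hat{\alpha}}$ at all, in which case the set over which your infimum ranges is empty and the infimum equals $1$ under the paper's convention. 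Concretely, take $\Delta=0.1\,\dirac{u}+0.9\,\dirac{z}$ and $\Theta=\dirac{z}$, where $u\transStep{a}\dirac{z}$ and $z$ is inert (up to $\tick$): then $\metric^k(u,z)=1$ and $\metric^k(z,z)=0$, so the padded distance is $0.1$, and $\Delta\transStep{\hat{a}}0.1\,\dirac{z}$, yet $\Theta$ admits no weak $a$-transition whatsoever, so there is nothing to recombine. This existence question is precisely where the paper's proof does explicit work: in its inductive step it argues by contradiction that if no distribution reachable from $t'$ contained a term able to fire the residual action, then $t'\nTransStep{\hat{\alpha}}$, forcing $\Bisimulation(\metric^k)(t,t')=1$ and contradicting $\metric^k(t,t')<1$. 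Note that this argument is anchored at the \emph{terms} $t,t'$ via the functional of \autoref{def:metric_sim_functional}, an anchor your purely distribution-level formulation cannot invoke; note also that it is itself delicate, since the functional forces distance $1$ only when $t$ has a \emph{strong} $\alpha$-transition, so prefixing the example above with a $\tau$-step shows the point is genuinely subtle rather than a routine detail. In any case, your proposal omits this step entirely, and without it the recombination argument cannot even start.
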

\begin{proof}

\noindent
The thesis is immediate if $\metric^k(t,t') =1$.
Consider the case  $\metric^k(t,t') <1$.
We reason by induction on the length $n$ of $t \TransStep{\hat \alpha} \Delta$.

\emph{\underline{Base case $n=1$}}.
In this case $t \TransStep{\hat \alpha} \Delta$ is directly derived from $t \transStep{\hat \alpha} \Delta$.
There are two sub-cases.
The first is $\alpha = \tau$ and $\Delta = \dirac{t}$, the second is $t \transSim{\alpha} \Delta$, with $\alpha$ an arbitrary action in $\Act \setminus \{\tick\}$.  
In the former case, by the definition of the weak transition relation $\trans{\widehat \tau}$ we have that $t' \trans{\widehat \tau} \dirac{t'}$ and, consequently, $t' \TransStep{\widehat \tau} \dirac{t'}$. 
The thesis holds for distribution $\Theta = \dirac{t'}$.
More precisely, we have that $\Kantorovich(\metric^k)(\dirac{t} + (1-\size{\dirac{t}})\dirac{\dummyN},\dirac{t'} + (1-\size{\dirac{t'}})\dirac{\dummyN}) = \Kantorovich(\metric^k)(\dirac{t},\dirac{t'}) = \metric^k(t,t')$.
In the latter case,
the thesis follows directly by \autoref{def:metric_sim_functional} and 
\autoref{lemma_mk_fixed_point}.
In detail, \autoref{def:metric_sim_functional} gives
\[
\inf_{t' \TransStep{\hat \alpha} \Theta} \Kantorovich(\metric^k)(\Delta,\Theta + (1-\size{\Theta}) \dirac{\dummyN}) \le \Bisimulation(\metric^k)(t,t')
\]
and \autoref{lemma_mk_fixed_point} gives $\Bisimulation(\metric^k)(t,t') = \metric^k(t,t')$.

\emph{\underline{Inductive step $n>1$}}.
The derivation $t \TransStep{\hat \alpha} \Delta$ is obtained by $t \TransStep{\hat \beta_1} \Delta'$ and $\Delta' \transStep{\hat{\beta}_2} \Delta$, for some distribution $\Delta' \in \distr{\states}$ and actions $\beta_1,\beta_2 \in \Act \setminus \{\tick\}$.
We have two sub-cases. 
The first is $\beta_1=\tau$ and $\beta_2=\alpha$, the other is $\beta_1=\alpha$ and $\beta_2=\tau$.
We consider the case $\beta_1=\tau$ and $\beta_2=\alpha$, the other is analogous.

The length of derivation $t \TransStep{\hat \beta_1} \Delta'$ is $n-1$. 
Therefore, by the inductive hypothesis we have
\begin{equation}
\label{lemma_sim_weak_transitions_ind_hyp}
\inf_{t' \TransStep{\hat{\beta}_1} \Theta'} \Kantorovich(\metric^k)(\Delta' + (1-\size{\Delta'}) \dirac{\dummyN},\Theta' + (1-\size{\Theta'}) \dirac{\dummyN}) \le \metric^k(t,t')
\end{equation}
Notice that $\metric^k(t,t') < 1$ and \autoref{lemma_sim_weak_transitions_ind_hyp} ensure that the set $\{\Theta' \mid t' \TransStep{\hat{\beta}_1} \Theta'\}$ is not be empty.
Moreover, being $\beta_1 = \tau$, we have that $\size{\Delta'} = 1$ and, for each transition $t' \TransStep{\hat{\beta}_1} \Theta'$, also $\size{\Theta'} = 1$.
Therefore,
the inductive hypothesis \autoref{lemma_sim_weak_transitions_ind_hyp}
instantiates to 
\begin{equation}
\label{lemma_sim_weak_transitions_ind_hyp_instance}
\inf_{t' \TransStep{\hat{\beta}_1} \Theta'} \Kantorovich(\metric^k)(\Delta' ,\Theta' ) \le \metric^k(t,t')
\end{equation}

The sub-distribution $\Delta'$ is of the form $\Delta' = \sum_{i \in I}p_i \cdot \dirac{t_i}$ for suitable processes $t'_i$ and, by definition of transition relation $\transSim{\hat{\beta}_2}$, 
the transition $\Delta' \transSim{\hat{\beta}_2} \Delta$ is derived from a $\beta_2$-transition by some of the processes $t_i$, namely $I$ is partitioned into sets $I_1 \cup I_2$ such that:\
\begin{inparaenum}[(i)] 
\item 
for all $i \in I_1$ we have $t_i \transStep{\beta_2} \Delta_i$ for suitable distributions $\Delta_i$, 
\item
for each $i \in I_2$ we have $t_i \ntransStep{\beta_2}$, and 
\item $\Delta = \sum_{i \in I_1} p_i \cdot \Delta_i$.
\end{inparaenum}

Let us fix an arbitrary transition $t' \TransStep{\hat \beta_1} \Theta'$ (remember we argued above that it is not possible that there are none). 
The sub-distribution
$\Theta'$ is of the form $\Theta' = \sum_{j \in J}q_j \cdot \dirac{t'_j}$, for suitable processes $t'_j$.
Then, $J$ can be partitioned into sets $J_1 \cup J_2$ such that for all $j \in J_1$ we have $t'_j \TransStep{\hat{\beta}_2} \Theta_j$ for suitable distributions $\Theta_j$ and for each $j \in J_2$ we have $t'_j\nTransStep{\hat{\beta}_2}$. 
If $J_1 \neq \emptyset$
this gives $\Theta' \TransStep{\hat{\beta}_2} \Theta$ with $\Theta = \sum_{j \in J_1} q_j \cdot \Theta_j$.
Since we had $t' \TransStep{\hat{\beta}_1} \Theta'$, we can conclude $t' \TransStep{\hat{\alpha}} \Theta$.
Notice that we are sure that there exist some some $\Theta'$ with $t' \TransStep{\hat \beta_1} \Theta'$ for which $J_1 \neq \emptyset$.
Indeed, if for all  $\Theta'$ with $t' \TransStep{\hat \beta_1} \Theta'$ we had $J_1 = \emptyset$, this would cause $t \nTransStep{\hat{\alpha}}$, giving $\Bisimulation(\metric^k)(t,t') =1$ and contradicting $\Bisimulation(\metric^k)(t,t')  = \metric^k(t,t') < 1$.
We remark that in all cases where $J_1 \neq \emptyset$, 
the weak transition
$t' \TransStep{\hat{\alpha}} \Theta$ is obtained by firstly choosing one of the available weak transitions labelled $\hat{\beta}_1$ from $t'$, namely $t' \TransStep{\hat{\beta}_1} \Theta'$, and, then, by choosing one of the available weak transitions labelled $\beta_2$ from $t'_j$, namely $t'_j \TransStep{\hat{\beta}_2} \Theta_j$, for all $j \in J_1$.
%

For the transition
$t' \TransStep{\hat{\beta}_1} \Theta'$ fixed above,
let $\omega$ be one of the optimal matchings realising $\Kantorovich(\metric^k)(\Delta' ,\Theta')$.
We can rewrite the distributions $\Delta'$ and $\Theta'$ as 
$\Delta' = \sum_{i \in I, j \in J} \omega(t_i,t'_j) \cdot \dirac{t_i}$ and 
$\Theta' = \sum_{i \in I, j \in J} \omega(t_i,t'_j)  \cdot \dirac{t'_j}$.
For all $i \in I_1$ and $j \in J$, define $\Delta_{i,j} = \Delta_i$.
We can rewrite $\Delta$ as $\Delta = \sum_{i \in I_1,j\in J} \omega(t_i,t'_j) \cdot \Delta_{i,j}$.
Analogously, for each $j \in J_1$ and $i \in I$ we note that the transition $q_j \dirac{t'_j} \TransStep{\hat{\beta}_2} q_j \cdot  \Theta_{j}$ can always be split into $\sum_{i \in I} \omega(t_i,t'_j)\dirac{t'_j} \TransStep{\hat{\beta}_2} \sum_{i \in I} \omega(t_i,t'_j)  \cdot \Theta_{i,j}$
so that
we can rewrite $\Theta_j$ as $\Theta_j = \sum_{i \in I}\omega(t_i,t'_j)  \cdot \Theta_{i,j}$ and $\Theta$ as
$\Theta = \sum_{i\in I, j \in J_1} \omega(t_i,t'_j)  \cdot  \Theta_{i,j}$. 
Then we note that for all $i \in I_1$ and $j \in J_1$, 
all transition $t'_j \TransStep{\hat{\beta}_2} \Theta_{i,j}$ 
ensure that
\begin{equation}
\label{lemma_sim_weak_transitions_interne}
\inf_{t'_j \TransStep{\hat{\beta}_2} \Theta_{i,j}} \Kantorovich(\metric^k)(\Delta_{i,j},\Theta_{i,j} + (1-\size{\Theta_{i,j}})\dirac{\dummyN}) \le  \metric^k(t_i,t'_j)
\end{equation}
Indeed, 
by definition of $\Bisimulation$,
whenever $t_i \transStep{\beta_2} \Delta_i = \Delta_{i,j}$ we have
\[
\inf_{t'_j \TransStep{\hat{\beta}_2} \Theta_{i,j}} \Kantorovich(\metric^k)(\Delta_{i,j},\Theta_{i,j} + (1-\size{\Theta_{i,j}})\dirac{\dummyN}) \le \Bisimulation(\metric^k)(t_i,t'_j) 
\]
Then, being $\metric^k$ a fixed point of $\Bisimulation$ we have $\Bisimulation(\metric^k)(t_i,t'_j)  = \metric^k(t_i,t'_j)$ and \autoref{lemma_sim_weak_transitions_interne} follows.

Consider any 
$j \in J_1$ and $i \in I_1$.
By \autoref{lemma_sim_weak_transitions_interne} and $\Bisimulation(\metric^k)(t_i,t'_j)  = \metric^k(t_i,t'_j)$, we infer that if 
if $\metric^k(t_i,t'_j) <1$, then the set of the weak transitions labelled $\hat{\beta}_2$ from $t'_j$ cannot be empty.
For any transition $t'_i \TransStep{\hat{\beta}_2} \Theta_{i,j}$,
let $\omega_{i,j}$ be one of the optimal matchings realising $\Kantorovich(\metric^k)(\Delta_{i,j}, \Theta_{i,j} + (1-\size{\Theta_{i,j}})\dirac{ \dummyN})$.
Define $\omega' \colon \states \times \states \to [0,1]$ as the function such that for arbitrary processes $u,v \in \states$ we have:
\[ 
\omega'(u,v) = 
\begin{cases} 
\displaystyle
\sum_{i \in I_1, j \in J_1}\omega(t_i,t'_j)  \omega_{i,j}(u,v) 
& \text{ if } u \neq \dummyN \neq v\\[4pt]
\displaystyle\sum_{i \in I_1, j \in J_1}\omega(t_i,t'_j) \omega_{i,j}(u,v) + 
 \sum_{i \in I_1, j \in J_2}\omega(t_i,t'_j) \Delta_{i,j}(u)
& \text{ if } u \neq \dummyN = v \\[4pt]
\displaystyle \sum_{i \in I_1, j \in J_1}\omega(t_i,t'_j)  \omega_{i,j}(u,v) + 
 \sum_{i \in I_2, j \in J_1}\omega(t_i,t'_j) \Theta_{i,j}(v)
& \text{ if } u = \dummyN \neq v \\[4pt]
\displaystyle\sum_{i \in I_1, j \in J_1}\omega(t_i,t'_j)  \omega_{i,j}(u,v) + 
 \sum_{i \in I_1, j \in J_2}\omega(t_i,t_j) \Delta_{i,j}(u) + \\
\displaystyle \sum_{i \in I_2, j \in J_1}\omega(t_i,t'_j)  \Theta_{i,j}(v)+
\sum_{i \in I_2, j \in J_2}\omega(t_i,t'_j) 
& \text{ if } u = \dummyN = v. 
\end{cases}
\] 

To infer the proof obligation 
\begin{equation}
\label{lemma_sim_weak_transitions_pobb}
\inf_{t' \TransStep{\hat \alpha} \Theta} \Kantorovich(\metric^k)(\Delta + (1-\size{\Delta}) \dirac{\dummyN},\Theta + (1-\size{\Theta}) \dirac{\dummyN}) \le \metric^k(t,t')
\end{equation}
it is now enough to show that:
\begin{enumerate}
\item 
\label{matching} 
the function $\omega'$ is a matching in $\Omega(\Delta + (1-\size{\Delta}) \dirac{\dummyN} ,\Theta + (1-\size{\Theta}) \dirac{\dummyN})$
\item 
\label{metric_condition} 
\begin{equation}
\label{target}
\inf_{t' \TransStep{\hat{\beta}_1} \Theta' \atop{
\Theta' = \sum_{j \in J_1 \cup J_2} q_j \delta(t'_j) \atop
t'_j \TransStep{\hat{\beta}_2} \Theta_{i,j}}}  
\sum_{u,v \in \states} \omega'(u,v) \cdot \metric^k(u,v) \le \metric^k(t,t')
\end{equation}
\end{enumerate}

To show property \ref{matching} we prove that the left marginal of $\omega'$ is $\Delta + (1-\size{\Delta}) \dirac{\dummyN}$.
The proof that the right marginal is $\Theta + (1-\size{\Theta}) \dirac{\dummyN}$ is analogous.
For any process $u \neq \dummyN$ we have\\[5pt]
\begin{math}
\begin{array}{rlr}
& \sum_{v \in \states}\omega'(u,v) 
\\[5pt]
=
&
\sum_{v \neq \dummyN} \sum_{i \in I_1, j \in J_1}\omega(t_i,t'_j) \omega_{i,j}(u,v)
\\[5pt] 
&
+
\sum_{i \in I_1, j \in J_1}\omega(t_i,t'_j) \omega_{i,j}(u,\dummyN)
+ 
\sum_{i \in I_1, j \in J_2}\omega(t_i,t'_j) \Delta_{i,j}(u)
\\[5pt]
= 
&
\sum_{i \in I_1, j \in J_1}\omega(t_i,t'_j) \sum_{v \in \states} \omega_{i,j}(u,v)
+ 
\sum_{i \in I_1, j \in J_2}\omega(t_i,t'_j) \Delta_{i,j}(u)  
\\[5pt]
=
&
\sum_{i \in I_1, j \in J_1}\omega(t_i,t'_j) \Delta_{i,j}(u)
+ 
\sum_{i \in I_1, j \in J_2}\omega(t_i,t'_j) \Delta_{i,j}(u)
\\[5pt]
=
&
\sum_{i \in I_1, j \in J}\omega(t_i,t'_j)  \Delta_{i,j}(u)
\\[5pt]
=
&
\sum_{i \in I_1}p_i \Delta_{i}(u)
\\[5pt]
=
&
\Delta(u)
\\[5pt]
=
& (\Delta + (1-\size{\Delta})\dirac{ \dummyN})(u)
\\[5pt]
\end{array}
\end{math}

\noindent
with the third equality from the fact that  $\omega_{i,j}$ is a matching in $\Omega(\Delta_{i,j},\Theta_{i,j})$, the fourth equality by $J = J_1 \cup J_2$ and the fifth equality by
$\sum_{j \in J}\omega(t_i,t'_j) =  p_i$ and $\Delta_{i,j} = \Delta_i$.
Consider now $\dummyN$. We have
\\[5pt]
\begin{math}
\begin{array}{rlr}
&
\sum_{v \in \states}\omega'(\dummyN,v)
\\[5pt]
= 
&
\sum_{v \neq \dummyN}\sum_{i \in I_1, j \in J_1}\omega(t_i,t'_j) \omega_{i,j}(\dummyN,v)
+ \sum_{v \neq \dummyN} \sum_{i \in I_2, j \in J_1}\omega(t_i,t'_j) \Theta_{i,j}(v) 
\\[5pt]
&
+ 
\sum_{i \in I_1, j \in J_1}\omega(t_i,t'_j) \omega_{i,j}(\dummyN,\dummyN)
+
\sum_{i \in I_1, j \in J_2}\omega(t_i,t'_j) \Delta_{i,j}(\dummyN)
\\[5pt]
&
+ \sum_{i \in I_2, j \in J_1}\omega(t_i,t'_j)  \Theta_{i,j}(\dummyN)
+ \sum_{i \in I_2, j \in J_2}\omega(t_i,t'_j)
\\[5pt]
=
&
\sum_{v \in \states}\sum_{i \in I_1, j \in J_1}\omega(t_i,t'_j) \omega_{i,j}(\dummyN,v) 
+
\sum_{v \in \states} \sum_{i \in I_2, j \in J_1}\omega(t_i,t'_j)  \Theta_{i,j}(v) 
\\[5pt]
&
+
\sum_{i \in I_1, j \in J_2}\omega(t_i,t'_j) \Delta_{i,j}(\dummyN) +
\sum_{i \in I_2, j \in J_2}\omega(t_i,t'_j)
\\[5pt]
=
&
\sum_{i \in I_1, j \in J_1}\omega(t_i,t'_j) \Delta_{i,j}(\dummyN)
+
\sum_{i \in I_2, j \in J_1}\omega(t_i,t'_j)
\\[5pt]
&
+
\sum_{i \in I_1, j \in J_2}\omega(t_i,t'_j) \Delta_{i,j}(\dummyN) +
\sum_{i \in I_2, j \in J_2}\omega(t_i,t'_j)
\\[5pt]
=
&
\sum_{i \in I_1, j \in J}\omega(t_i,t'_j)  \Delta_{i,j}(\dummyN) + 
\sum_{i \in I_2, j \in J}\omega(t_i,t'_j) 
\\[5pt]
=
&
\sum_{i \in I_1} p_i \Delta_{i}(\dummyN) + 
\sum_{i \in I_2}p_i
\\[5pt]
=
&
(\Delta + (1-\size{\Delta}) \dirac{\dummyN})(\dummyN)
\\[5pt]
\end{array}
\end{math}

\noindent
where the third equality by the fact that 
$\omega_{i,j}$ is a matching in $\Omega(\Delta_{i,j},\Theta_{i,j})$ and the fact that
$\Theta_{i,j}$ is a distribution, the fourth equality by $J = J_1 \cup J_2$, the fifth equality by  $\sum_{j \in J}\omega(t_i,t'_j) =  p_i$ and $\Delta_{i,j} = \Delta_i$ and
the last equality follows from  $\sum_{i \in I_1, j \in J}\omega(s_i,t_j) = \sum_{i \in I_1} p_i = \size{\Delta}$.

Summarising, for all processes $u \in \states$ we have proved that $\sum_{v \in \states}\omega'(u,v) = (\Delta + (1-\size{\Delta}) \dirac{\dummyN})(u)$, thus confirming that the left marginal of $\omega'$ is $\Delta + (1-\size{\Delta}) \dirac{\dummyN}$.

To prove (\ref{metric_condition}), by looking at the definition of $\omega'$ given above we get that $\sum_{u,v \in \states} \omega'(u,v) \cdot \metric^k(u,v)$ is the summation of the following values:
\begin{itemize}
\item
$\sum_{u \neq \dummyN \neq v} \sum_{i \in I_1, j \in J_1}\omega(t_i,t'_j)  \omega_{i,j}(u,v) \metric^k(u,v)$\\
\item
$\sum_{u \neq \dummyN}\sum_{i \in I_1, j \in J_1}\omega(t_i,t'_j)  \omega_{i,j}(u,\dummyN)  \metric^k(u,\dummyN)$ + \\
$\sum_{i \in I_1, j \in J_2}\omega(t_i,t'_j) \Delta_{i,j}(u) \metric^k(u,\dummyN)$\\
\item
$\sum_{v \neq \dummyN}\sum_{i \in I_1, j \in J_1}\omega(t_i,t'_j)   \omega_{i,j}(\dummyN,v)  \metric^k(\dummyN,v)$ + \\$
\sum_{i \in I_2, j \in J_1}\omega(t_i,t'_j)  \Theta_{i,j}(v)  \metric^k(\dummyN,v)$\\
\item
$\sum_{i \in I_1, j \in J_1}\omega(t_i,t'_j) \omega_{i,j}(\dummyN,\dummyN)  \metric^k(\dummyN,\dummyN)$
+ \\
$ \sum_{i \in I_1, j \in J_2}\omega(t_i,t'_j)  \Delta_{i,j}(\dummyN) \metric^k(\dummyN,\dummyN)$
+\\
 $\sum_{i \in I_2, j \in J_1}\omega(t_i,t'_j)  \Theta_{i,j}(\dummyN) \metric^k(\dummyN,\dummyN)$+\\
 $\sum_{i \in I_2, j \in J_2}\omega(t_i,t'_j) \metric^k(\dummyN,\dummyN)$.
\end{itemize}

By moving the first summand of the second, third and fourth items to the first item, we rewrite this summation as the summation of the following values:
\begin{itemize}
\item
$\sum_{u, v \in \states} \sum_{i \in I_1, j \in J_1} \omega(t_i,t'_j) \omega_{i,j}(u,v)  \metric^k(u,v)$\\
\item
$\sum_{i \in I_1, j \in J_2}\omega(t_i,t'_j)  \Delta_{i,j}(u) \metric^k(u,\dummyN)$\\
\item
$\sum_{i \in I_2, j \in J_1}\omega(t_i,t'_j) \Theta_{i,j}(v) \metric^k(\dummyN,v)$\\
\item
$\sum_{i \in I_1, j \in J_2}\omega(t_i,t'_j) \Delta_{i,j}(\dummyN)  \metric^k(\dummyN,\dummyN)
+  \\ 
\sum_{i \in I_2, j \in J_1}\omega(t_i,t'_j)  \Theta_{i,j}(v)\metric^k(\dummyN,\dummyN) + \\
 \sum_{i \in I_2, j \in J_2}\omega(t_i,t'_j) \metric^k(\dummyN,\dummyN)$.
\end{itemize}

Since the function
$\omega_{i,j}$ was defined as one of the optimal matchings realising $\Kantorovich(\metric^k)(\Delta_{i,j}, \Theta_{i,j} + (1-\size{\Theta_{i,j}})\dirac{ \dummyN})$,
the first item can be rewritten as $\sum_{i \in I_1, j \in J_1} \omega(t_i,t'_j) \Kantorovich(\metric^k)(\Delta_{i,j},\Theta_{i,j}+ (1-\size{\Theta_{i,j}})\dirac{ \dummyN})$.
From \autoref{lemma_sim_weak_transitions_interne} we get
$\inf_{t'_j \TransStep{\hat{\beta}_2} \Theta_{i,j}}\Kantorovich(\metric^k)(\Delta_{i,j},\Theta_{i,j} + (1-\size{\Theta_{i,j}})\dirac{ \dummyN})) \le \metric^k(t_i,t'_j)$.
Henceforth the infimum for all $t'_j \TransStep{\hat{\beta}_2} \Theta_{i,j}$ of 
the first item is less or equal $\sum_{i \in I_1, j \in J_1} \omega(t_i,t'_j) \cdot  \metric^k(t_i,t'_j)$.
The second item is clearly less or equal than $\sum_{i \in I_1, j \in J_2}\omega(t_i,t'_j)$.
The third item is clearly less or equal than $\sum_{i \in I_2, j \in J_1}\omega(t_i,t'_j)$.
Finally, the last item is 0 since $\metric^k(\dummyN,\dummyN) = 0$.
Namely, 
the infimum for all $t'_j \TransStep{\hat{\beta}_2} \Theta_{i,j}$ of
$\sum_{u,v \in \states} \omega'(u,v) \cdot  \metric^k(u,v)$ is bounded by the summation of the following three values:
\begin{itemize}
\item
$\sum_{i \in I_1, j \in J_1} \omega(t_i,t'_j)   \metric^k(t_i,t'_j)$ 
\item 
$\sum_{i \in I_1, j \in J_2}\omega(t_i,t'_j)$ 
\item
$\sum_{i \in I_2, j \in J_1}\omega(t_i,t'_j)$.
\end{itemize}
Formally:
\begin{equation}
\label{primoInf}
\begin{split}
\inf_{t'_j \TransStep{\hat{\beta}_2} \Theta_{i,j}}
\sum_{u,v \in \states} \omega'(u,v) \cdot  \metric^k(u,v)  \quad \quad \quad \quad \quad \quad \quad \quad  \quad \quad \quad\quad \quad \quad \quad \quad \quad \\
\le 
\sum_{i \in I_1, j \in J_1} \omega(t_i,t'_j)   \metric^k(t_i,t'_j) +
\sum_{i \in I_1, j \in J_2}\omega(t_i,t'_j)+
\sum_{i \in I_2, j \in J_1}\omega(t_i,t'_j)
\end{split}
\end{equation}
Then, since $\Kantorovich(\metric^k)(\Delta' ,\Theta' )$ is the summation of the following values:
\begin{itemize}
\item
$\sum_{i \in I_1,j\in J_1} \omega(t_i,t'_j)   \metric^k(t_i,t'_j)$ 
\item
$\sum_{i \in I_1,j\in J_2} \omega(t_i,t'_j)  \metric^k(t_i,t'_j) = \sum_{i \in I_1,j\in J_2} \omega(t_i,t'_j)$ (since $t_i \transStep{\beta_2}$ and $t'_j \nTransStep{\hat{\beta_2}}$ give $ \metric^k(t_i,t'_j) =1$) 
\item
$\sum_{i \in I_2,j\in J_1} \omega(t_i,t'_j)  \metric^k(t_i,t'_j)= \sum_{i \in I_2,j\in J_1} \omega(t_i,t'_j)$ (since $t'_j \transStep{\beta_2}$ and $t_i \nTransStep{\hat{\beta_2}}$ give $ \metric^k(t_i,t'_j) =1$) 
\item
$\sum_{i \in I_2,j\in J_2} \omega(t_i,t'_j)  \metric^k(t_i,t'_j)$
\end{itemize}
we infer that the right hand side of \autoref{primoInf} $\sum_{i \in I_1, j \in J_1} \omega(t_i,t'_j) \cdot  \metric^k(t_i,t'_j) + \sum_{i \in I_1, j \in J_2}\omega(t_i,t'_j) + \sum_{i \in I_2, j \in J_1}\omega(t_i,t'_j)$ is less or equal than $\Kantorovich(\metric^k)(\Delta' ,\Theta' )$.
Together with \autoref{primoInf} this gives
\begin{equation}
\inf_{t'_j \TransStep{\hat{\beta}_2} \Theta_{i,j}}
\sum_{u,v \in \states} \omega'(u,v) \cdot  \metric^k(u,v) \le 
\Kantorovich(\metric^k)(\Delta' ,\Theta' )
\end{equation}
which, together with \autoref{lemma_sim_weak_transitions_ind_hyp_instance} gives \autoref{target}, which concludes the proof.
\qed
\end{proof}

We are now ready to prove Theorem~\ref{q_and_m_are_metrics}.

\begin{proof}[of  \autoref{q_and_m_are_metrics}]
We have to prove that $\metric^k$ satisfies the three properties in \autoref{def:pseudoquasimetric}.
Properties $\metric^k(t,t) = 0$ and $\metric^k(t,t') = \metric^k(t',t)$ for all $t,t' \in \states$ are immediate.
%
The interesting case is  the triangular property 
$\metric^k(t,t') \le \metric^k(t,t'') + \metric^k(t'',t')$ for all $t,t',t'' \in \states$.
To this purpose, let us define the function $\metric \colon \states \times \states \to [0,1]$ such that
\[
\metric(t,t') = \min\Big(\metric^{k}(t,t') , \inf_{t'' \in \states} (\metric^k(t,t'') + \metric^k(t'',t'))\Big).
\]

We will prove that $\metric = \metric^k$.
By the definition of $\metric$, this gives $\metric^k(t,t') \le \metric^k(t,t'') + \metric^k(t'',t')$ for all $t'' \in \states$, thus confirming that also the triangular property holds for $\metric^k$.

In order to prove $\metric = \metric^k$, we observe first that relation $\metric \sqsubseteq \metric^k$ follows immediately by the definition of $\metric$.
It remains to prove $\metric^k \sqsubseteq \metric$.
To this purpose we prove that: 
\begin{inparaenum}[(i)]
\item 
\label{qm_metric_po1}
$\metric^k$ is the least prefixed point of 
the functional $\Bisimulation$ on the complete lattice 
${(\{d \colon \states \times \states \to [0,1] \colon \Bisimulation_{\tick}(\metric^{k-1}) \sqsubseteq d\},\sqsubseteq)}$, and
\item
\label{qm_metric_po2}
$\metric$ is a prefixed point of the same functional on the same lattice.
\end{inparaenum}

Let us start with property (\ref{qm_metric_po1})
By \autoref{lemma_mk_fixed_point}, $\metric^k$ is the least fixed point of the functional $\Bisimulation$, which is monotone and continuous in the lattice.
This coincides with the least prefixed point.

Let us consider now (\ref{qm_metric_po2}).
We have to prove $\Bisimulation(\metric) \sqsubseteq \metric$, namely,
whenever $\metric(t,t') < 1$, then, for all $\alpha \neq \tick$ we have
\begin{equation}\label{op}
\forall t \trans{\alpha} \Delta . 
\inf_{t' \TransStep{\hat{\alpha}} \Theta}   \Kantorovich(\metric)(\Delta,\Theta + (1-\size{\Theta}) \dirac{\dummyN}) \le \metric(t,t').
\end{equation}
To prove \autoref{op} we distinguish two cases, namely $\metric(t,t') = \metric^k(t,t')$ and 
$\metric(t,t') = \inf_{t'' \in \states} (\metric^k(t,t'') + \metric^k(t'',t'))$.

Assume first $\metric(t,t') = \metric^k(t,t')$.
In this case, 
being $\metric^k$ the least fixed point of $\Bisimulation$,
$t \trans{\alpha} \Delta$ implies that 
\[
\inf_{t' \TransStep{\hat{\alpha}} \Theta} \Kantorovich(\metric^k)(\Delta,\Theta + (1-\size{\Theta}) \dummyN) \le \Bisimulation(\metric^k)(t,t') = \metric^k(t,t') = \metric(t,t')
\]
Since $\Kantorovich$ is monotone and $\metric \sqsubseteq \metric^k$, we infer 
\[
\inf_{t' \TransStep{\hat{\alpha}} \Theta} \Kantorovich(\metric)(\Delta,\Theta + (1-\size{\Theta}) \dummyN) \le \metric(t,t')
\] 
thus giving \autoref{op}.

Assume now $\metric(t,t') = \inf_{t'' \in \states} (\metric^k(t,t'') + \metric^k(t'',t'))$.
Since $\metric(t,t') < 1$, there exist terms $t'' \in \states$ with $\metric^k(t,t'') + \metric^k(t'',t') < 1$, thus implying both $\metric^k(t,t'')<1$ and  $\metric^k(t'',t') < 1$.
By \autoref{lemma_sim_weak_transitions}, 
from $\metric^k(t,t'') < 1$ and
$t \trans{\alpha} \Delta$ we infer
\[
\inf_{t'' \TransStep{\hat{\alpha}} \Phi}
\Kantorovich(\metric^k)(\Delta,\Phi + (1-\size{\Phi}) \dirac{\dummyN}) \le \metric^k(t,t'')
\]
By \autoref{lemma_sim_weak_transitions}, from $\metric^k(t'',t') < 1$, for all $t'' \Trans{\hat{\alpha}} \Phi$ 
we have 
\[
\inf_{t' \TransStep{\hat{\alpha}} \Theta} \Kantorovich(\metric^k)(\Phi + (1-\size{\Phi})\dirac{\dummyN},\Theta + (1-\size{\Theta}) \dirac{\dummyN}) \le \metric^k(t'',t')
\]
By the definition of $\metric$ and \autoref{triangolareStrana} we have 
$\Kantorovich(\metric^k)(\Delta,\Phi + (1-\size{\Phi}) \dummyN) + 
\Kantorovich(\metric^k)(\Phi + (1-\size{\Phi})\dummyN),\Theta + (1-\size{\Theta}) \dummyN)
\ge 
\Kantorovich(\metric)(\Delta,\Theta + (1-\size{\Theta}) \dummyN)$.
We derive 
\[
\inf_{t'' \TransStep{\hat{\alpha}} \Phi \atop t' \TransStep{\hat{\alpha}} \Theta}
 \Kantorovich(\metric)(\Delta,\Theta + (1-\size{\Theta}) \dummyN) 
\le
\metric^k(t,t'') + \metric^k(t'',t')
\]
and, by definition of infimum,
\[
\inf_{t'' \TransStep{\hat{\alpha}} \Phi \atop t' \TransStep{\hat{\alpha}} \Theta}
 \Kantorovich(\metric)(\Delta,\Theta + (1-\size{\Theta}) \dummyN) 
\le
\metric(t,t') 
\]
which gives \autoref{op} and concludes the proof. 
\qed
%
\end{proof}

We prove now \autoref{prop_simulazione}.

\begin{proof}[of \autoref{prop_simulazione}]
We prove the first item, then the second item follows by the first and the result $t \simeq_0 t'$ iff $t \approx t'$ given in \cite{DJGP02}.
First we recall that $t \simeq_p t'$ iff $\metric(t,t') = p$, where $\metric$ is the least fixed point (and also least prefixed point) in the lattice $([0,1]^{\states \times \states},\sqsubseteq)$ of a functional $\Bisimulation'$ such that $\Bisimulation'(d)(t,t') = \max(\Bisimulation(d)(t,t') ,\Bisimulation_{\tick}(d)(t,t'))$ for all $t,t' \in \states$ and $d \in  [0,1]^{\states \times \states}$.
Therefore, we have to prove that $\metric^{\infty} = \metric$.

Let us start with $\metric^{\infty} \sqsubseteq \metric$.
Being $\metric^{\infty}$ the supremum of all $\metric^k$,
it is enough to show $\metric^k \sqsubseteq \metric$ for all $k \in \bbbn$.
This property can be shown by induction over $k$.
The base case is immediate since $\metric^0 = \zeroF$.
Consider the inductive step $k+1$.
Function $\metric^{k+1}$ is obtained as
$\sup_{n \to \infty } \Bisimulation^n(\Bisimulation_{\tick}(\metric^k))$. 
Assume any $n \in \bbbn$. 
By 
$\Bisimulation' \ge \Bisimulation,\Bisimulation_{\tick}$ we get
$\Bisimulation^n(\Bisimulation_{\tick}(\metric^k)) \sqsubseteq
(\Bisimulation')^{n+1}(\metric^k)$ for all $n \in \bbbn$.
By the monotonicity of $\Bisimulation'$ and the inductive hypothesis $\metric^k \sqsubseteq \metric$, we get $(\Bisimulation')^{n+1}(\metric^k) \sqsubseteq (\Bisimulation')^{n+1}(\metric)$.
Finally, since $\metric$ is a fixed point of $\Bisimulation'$ we infer $ (\Bisimulation')^{n+1}(\metric) = \metric$.
Summarising, $\Bisimulation^n(\Bisimulation_{\tick}(\metric^k)) \sqsubseteq \metric$.
By the arbitrarity of $n$ we infer $\metric^{\infty} \sqsubseteq \metric$.

Let us show now  $\metric \sqsubseteq \metric^{\infty}$.
Being $\metric$ the least prefixed point of $\Bisimulation'$, it is enough to show that
$\metric^{\infty}$ is a prefixed point of $\Bisimulation'$.
We have both  $\metric^{\infty} \sqsupseteq \Bisimulation(\metric^{\infty})$ and
$\metric^{\infty} \sqsupseteq \Bisimulation_{\tick}(\metric^{\infty})$, thus giving
$\metric^{\infty} \sqsupseteq \Bisimulation'(\metric^{\infty})$, confirming that $\metric^{\infty}$ is a prefixed point of $\Bisimulation'$.
\qed
\end{proof}


Now we prove  \autoref{thm:attack-tolerance-gen}.
\begin{proof}[of \autoref{thm:attack-tolerance-gen}]
We prove the second item.
The proof of the third item is analogous, then the first item is a consequence of the others.
To prove the thesis we can prove that for all $k \in \bbbn$ we have
\begin{align*}
& \metric^{k}(\confCPS \xi  P_1 \parallel P_2 \parallel A, \confCPS \xi  P_1 \parallel P_2) \\
\le \quad & \metric^{k}(\confCPS \xi  P_1  \parallel A,\confCPS \xi  P_1) +
\metric^{k}(\confCPS \xi  P_2  \parallel A,\confCPS \xi  P_2 ) \\
- \quad & (\metric^{k}(\confCPS \xi  P_1  \parallel A,\confCPS \xi  P_1  ) \cdot \metric^{k}(\confCPS \xi  P_2  \parallel A,\confCPS \xi  P_2  ))
\end{align*}
Since 
$\confCPS \xi  P_1 \parallel P_2 \parallel A$ can mimic all the behaviours by
$\confCPS \xi  P_1 \parallel P_2$, the distance 
$\metric^{k}(\confCPS \xi  P_1 \parallel P_2 \parallel A, \confCPS \xi  P_1 \parallel P_2)$ is given by the behaviours by
$\confCPS \xi  P_1 \parallel P_2 \parallel A$ that are not mimicked by
$\confCPS \xi  P_1 \parallel P_2$.
Then, since 
$\confCPS \xi  P_1 \parallel A \parallel P_2 \parallel A$ can mimic all the behaviours by
$\confCPS \xi  P_1 \parallel P_2 \parallel A$, we have that  
\[\metric^{k}(\confCPS \xi  P_1 \parallel P_2 \parallel A, \confCPS \xi  P_1 \parallel P_2) 
\le
\metric^{k}(\confCPS \xi  P_1 \parallel A \parallel P_2 \parallel A, \confCPS \xi  P_1 \parallel P_2) 
\]
thus implying that to have the proof obligation we can prove the stronger property
\begin{align*}& \metric^{k}(\confCPS \xi  P_1 \parallel A \parallel P_2 \parallel A, \confCPS \xi  P_1 \parallel P_2) \\
\le \quad & \metric^{k}(\confCPS \xi  P_1  \parallel A,\confCPS \xi  P_1) +
\metric^{k}(\confCPS \xi  P_2  \parallel A,\confCPS \xi  P_2 ) \\
- \quad & (\metric^{k}(\confCPS \xi  P_1  \parallel A,\confCPS \xi  P_1  ) \cdot \metric^{k}(\confCPS \xi  P_2  \parallel A,\confCPS \xi  P_2  )).
\end{align*}
More in general, we prove
\begin{align*}
& \metric^{k}(\confCPS \xi  Q_1 \parallel Q_2, \confCPS \xi  P_1 \parallel P_2) \\
\le \quad & \metric^{k}(\confCPS \xi  Q_1  ,\confCPS \xi  P_1) +
\metric^{k}(\confCPS \xi  Q_2,\confCPS \xi  P_2 ) \\
- \quad & (\metric^{k}(\confCPS \xi  Q_1,\confCPS \xi  P_1  ) \cdot \metric^{k}(\confCPS \xi  Q_2 ,\confCPS \xi  P_2  ))\\
\end{align*}
for arbitrary $Q_1$ and $Q_2$,
written also
\begin{equation}\label{pob}
\begin{split}
 \metric^{k}(M_1 \parallel M_2, N_1 \parallel N_2) \le \qquad \qquad \qquad \qquad \qquad \qquad \qquad \qquad  \qquad \\
\metric^{k}(M_1  ,N_1) +
\metric^{k}(M_2 ,N_2 ) 
-  (\metric^{k}(M_1,N_1 ) \cdot \metric^{k}(M_2 ,N_2)).
\end{split}
\end{equation}

To this purpose, first we need to introduce the notion of congruence closure for $\metric^{k}$ as the quantitative analogue of the well-known concept of congruence closure of a process equivalence. 
We define the metric congruence closure of $\metric^{k}$ for operator $\parallel$ w.r.t.\ the bound provided in \autoref{pob}
as a function $m$ assigning to each pair of systems a distance in $[0,1]$ given by
\[
	m(M,N)=
	\begin{cases}
		 \min(m(M_1,N_1) + m(M_2,N_2) - (m(M_1,N_1)  m(M_2,N_2)), \metric^{k}(M,N)) \\
                    \qquad \qquad \qquad \text{if } 
			\left[\begin{array}{l}
				M=M_1\parallel M_{2} \land
				N=N_1 \parallel N_{2} \,\land \\
				\metric^{k}(M_1,N_1) < 1 \,\land 
				\metric^{k}(M_2,N_2) < 1
			\end{array} \right.\\
		\metric^{k}(M,N) \quad \text{otherwise}
	\end{cases}
\]

We note that $m$ satisfies by construction 
$m(M_1 \parallel M_1,N_1 \parallel N_2) \le m(M_1,N_1) + m(M_2,N_2) - (m(M_1,N_1)  \cdot m(M_2,N_2))$.
We note also that $m$ satisfies by construction $m \sqsubseteq \metric^{k}$.
It remains to show that $\metric^{k} \sqsubseteq m$, thus giving $\metric^{k} = m$, and \autoref{pob} holds. 
Since $\metric^{k}$ is the least prefixed point of $\Bisimulation$ over the lattice
${(\{d \colon \states \times \states \to [0,1] \mid \Bisimulation_{\tick}(\metric^{k-1}) \sqsubseteq d\},\sqsubseteq)}$
to show $\metric^{k} \sqsubseteq m$ it is enough to prove that $m$ is a prefixed point of the same functional on the same lattice.

To prove that 
$\Bisimulation(m) \sqsubseteq m$ we need to show that $m$ satisfies the transfer condition of the bisimulation metrics, namely
\begin{equation}\label{bound_parallel_proof_obligation}
 \forall M \trans{\alpha} \gamma . \; \exists M' \Trans{\alpha} \gamma' . \; \Kantorovich(m)(\gamma , \gamma' + (1-|\gamma'|)\overline{\dummyN}) \le m(M,M')
\end{equation}
for all systems $M,M$ with $m(M,M') < 1$ and $\alpha \neq \tick$.

This can be proved by applying the same arguments used to prove Proposition 3.2 in \cite{GLT16}.
\qed
\end{proof}

\noindent
\textbf{Proof of \autoref{prop:case1}} \hspace{0.2 cm}
First we observe that in the evolution of both systems $\confCPS {\xi }  {\mathit{Ctrl_i}$ and $\confCPS {\xi } { \mathit{Ctrl_i}\parallel   A_{\mathsf{fp}} \langle i , m, n \rangle}}$ it never happens that there are more than two instantaneous actions in between any two $\tick$ actions.
This implies that for all $j \in \bbbn$, system $M$ reachable from $\confCPS {\xi }  {\mathit{Ctrl_i}$ and system $N$ reachable from $\confCPS {\xi } { \mathit{Ctrl_i}\parallel   A_{\mathsf{fp}} \langle i , m, n \rangle}}$, we have $\metric^{j}(M,N) =  \sup_{h \in \bbbn} \metric^{j,h}(M,N)= \metric^{j,2}(M,N)$.
Then, the proof follows from the following 7 properties, by observing that first item of the thesis follows from the property expressed by \autoref{prop:case-prop0} below and the second and third items of the thesis follow from the property expressed by \autoref{prop:case-prop3c} below,
 when, respectively, $j_1=j-m+1$ and $j_2=m-1$.
For any $j \in \bbbn$, it holds that:
\begin{enumerate}
\item  
\label{prop:case-prop0}
$\metric^{j,l}(\confCPS{\xi}{P} ,\confCPS{\xi}{P \parallel Q}) =0$ 
for any  $P$  and whenever  process $Q$ has the form
$Q=\tick^{j'} . B\langle i,  n-m+1  \rangle$ for some $j < j'$.

%
\item  
\label{prop:case-prop1a}
$\metric^{j,0}(\confCPS{\xi}{P} ,\confCPS{\xi}{P   \parallel Q})  =
  1-(p_i^+)^{j-1}$ 
whenever $0<  j  \leq n-m  +1$,  
$\xi (r_i)=\mathsf{absence}$,
and the processes $P$ and $Q$ have the form 
$P =\tick .\mathit{Ctrl_{i }}$    
and  
$Q=B\langle i,  n-m+1-j \rangle$. 

\item  
\label{prop:case-prop1b}
$\metric^{j,1}(\confCPS{\xi}{P} ,\confCPS{\xi}{P  \parallel Q}) =
1-(p_i^+)^{j-1}$ 
whenever $0<  j  \leq n-m  +1$, $\xi (r_i)=\mathsf{absence}$, 
and the processes $P$ and $Q$ have the form  
$P  = \snda{c_i}{\mathsf{on}} .\tick.\mathit{Ctrl_{i}}$
and   
$Q=B\langle i,  0,n-m+1-j   \rangle$. 

\item  
\label{prop:case-prop1c}
$\metric^{j}(\confCPS{\xi}{P} ,\confCPS{\xi}{P \parallel Q}) =
1-(p_i^+)^{j}$ 
whenever       
$0<j  \le  n-m   +1$, $\xi (r_i)=\mathsf{absence}$, 
and the processes $P$ and $Q$ have the form  
$P = \mathit{Ctrl_{i}}$  
and
$Q=  B\langle i,  n-m+1 \rangle$. 

%
%
%
%
%

\item  
\label{prop:case-prop3a}
$\metric^{j,0}(\confCPS{\xi}{P} ,\confCPS{\xi}{P \parallel Q}) =
1-(p_i^+)^{j_1}$ 
whenever   
processes $P$ and $Q$ have the form 
$P = \tick. \mathit{Ctrl_i}$ and
$Q=\tick^{j_2} . B\langle i,  n-m+1 \rangle$, for some
 $0 < j_2 \le j$  such that $j_1=\min(j-j_2+1,    n-m+1)$.

\item  
\label{prop:case-prop3b}
$\metric^{j,1}(\confCPS{\xi}{P} ,\confCPS{\xi}{P \parallel Q}) =
 1-(p_i^+)^{j_1}$ 
whenever    
processes $P$ has either the form
$P =  \snda{c_i}{\mathsf{on}} .\tick.\mathit{Ctrl_i}$ or
 $P =  \snda{c_i}{\mathsf{off}} .\tick. \mathit{Ctrl_i}$, and process
$Q$ has the form 
$Q=\tick^{j_2} . B\langle i,  n-m+1 \rangle$,
for some $0 < j_2 \le j$ such that  $j_1=\min(j-j_2+1,    n-m+1 )$.

\item  
\label{prop:case-prop3c}
$\metric^{j}(\confCPS{\xi}{P} ,\confCPS{\xi}{P \parallel Q}) =
1-(p_i^+)^{j_1}$ 
whenever  
processes $P$ and $Q$ have the form form $P = \mathit{Ctrl_i}$
and
  $Q=\tick^{j_2} . B\langle i,  n-m+1 \rangle$, for some
$0 < j_2 \le j $  such that $j_1=\min(j-j_2+1,    n-m+1)$.

%
%

%
%

\end{enumerate}
 
The seven properties above can be proved for all $\metric^{j}$ and $\metric^{j,l}$
by well founded induction over the relation $\prec$ defined as  follows:
\begin{itemize}
\item $\metric^{j} \prec  \metric$ if  $\metric \in \{\metric^{j' },\metric^{j',l}\}$ with $j<j'$
\item $\metric^{j,l}\prec \metric$ if either  $\metric \in \{\metric^{j' },\metric^{j',l}\}$ with $j<j'$,  or,
$\metric =\metric^{j',l'} $ with $j'=j$ and $l<l'$.
\end{itemize}
Obviously, $\prec$ is irreflexive and there does not exist any infinite  descending chain (the base case is  $\metric^{0}$).

The base case $j=0$ is immediate since $\metric^{0}$ is the constant zero function $\zeroF$ and $1-(p_i^+)^{0}=0$.

We consider the inductive step.
\begin{enumerate}
\item 

The thesis can be easily proved since  $Q$ can perform 
only $\tick$ actions and, intuitively, it does not affect the behaviour of $P$.

In detail, for $j=1$ and $l=0$, we have that  whenever $\confCPS{\xi}{P}\trans{\tick} \sum_{i \in I} \confCPS{\dirac{\xi_i}}{\dirac{P_i}} $, 
then $ \confCPS{\xi}{P\parallel Q }\trans{\tick} \sum_{i \in I} \confCPS{\dirac{\xi_i}}{\dirac{P_i\parallel Q'}} $ with $Q=   \tick^{j'-1} .B\langle i,  n-m+1  \rangle$. 
Hence the thesis follows, since $\metric^{0}(\confCPS{\dirac{\xi_i}}{\dirac{P_i}} ,
\confCPS{\dirac{\xi_i}}{\dirac{P_i\parallel Q'}}  )=0$ by definition of $\metric^{0}$.

Assume now 
$l>0$. 
In this case, whenever $\confCPS{\xi}{P}\trans{\alpha} \sum_{i \in I} \confCPS{\dirac{\xi }}{\dirac{P_i}}$  with $\alpha\neq \tick$, 
then $ \confCPS{\xi}{P\parallel Q }\trans{\alpha} \sum_{i \in I} \confCPS{\dirac{\xi}}{\dirac{P_i \parallel Q}}$. 
The thesis holds since, by induction on 
case \autoref{prop:case-prop0}, we have 
$\metric^{j,l-1}(\confCPS{\dirac{\xi}}{\dirac{P_i}}, \confCPS{\dirac{\xi}}{\dirac{P_i\parallel Q}}  )=0$.

Similarly, for $l=0$ and $j>1$, whenever $\confCPS{\xi}{P}\trans{\tick} \sum_{i \in I} \confCPS{\dirac{\xi_i}}{\dirac{P_i}} $, 
then $ \confCPS{\xi}{P\parallel Q }\trans{\tick} \sum_{i \in I} \confCPS{\dirac{\xi_i}}{\dirac{P_i\parallel Q'}}$ with $Q'=\tick^{j'-1} . B\langle i,  n-m+1  \rangle$. 
Hence the thesis holds since, by induction on  
case \autoref{prop:case-prop0}, for any $h$, it holds that  
$ \metric^{j-1,h}(\confCPS{\dirac{\xi_i}}{\dirac{P_i}}, \confCPS{\dirac{\xi_i}}{\dirac{P \parallel Q'}} )=0$
thus implying that  
\[
 \metric^{j-1 }(\confCPS{\dirac{\xi_i}}{\dirac{P_i}}, \confCPS{\dirac{\xi_i}}{\dirac{P \parallel Q'}} )= 
\sup_{h \in \bbbn^{\infty}} \metric^{j-1,h}(\confCPS{\dirac{\xi_i}}{\dirac{P_i}}, \confCPS{\dirac{\xi_i}}{\dirac{P \parallel Q'}} )=0
.
\]

%
%
%
\item 
Define $M=\confCPS{\xi}{P}$ and $N=\confCPS{\xi}{P\parallel Q }$.
We have that $\metric^{j.0}(M ,N) = \Bisimulation_{\mathit{\tick}}( \metric^{j-1})(M,N)= \Bisimulation_{\mathit{\tick}}( \metric^{j-1,2})(M,N)$.
Hence we have to prove that $\Bisimulation_{\mathit{\tick}}( \metric^{j-1,2})(M,N)=1-(p_i^+)^{j-1 }$.
Such a property follows by the following two facts:
\begin{compactitem}
\item
$ \displaystyle \max_{M \transStep{\tick}\Delta} \min_{N \TransStep{\tick} \Theta} \Kantorovich(\metric^{j-1,2})(\Delta,\Theta + (1-\size{\Theta})\dirac{\dummyN}) =1-(p_i^+)^{j -1}$\\[0.5 ex]
\item
$ \displaystyle \max_{N \transStep{\tick}\Theta} \min_{M \TransStep{\tick} \Delta} \Kantorovich(\metric^{j-1,2})(\Delta + (1-\size{\Delta}) \dirac{\dummyN}, \Theta ) =  1-(p_i^+)^{j -1 }$.\\[0.5 ex]
\end{compactitem}

We prove with the first case, the second one is similar.

The only transitions by $M$ are of the form 
$M  \trans{\tick}  \confCPS{\dirac{\xi'}}{\dirac{\mathit{Ctrl_{i}} }}$
with 
  $\xi' \in \mathit{next}(\xi)$. 
The environments $\xi' \in \mathit{next}(\xi)$ maximising the set 
\[
  \min_{N \TransStep{\tick} \Theta} \Kantorovich(\metric^{j-1,2})( \confCPS{\dirac{\xi'}}{\dirac{\mathit{Ctrl_{i}} }},
  \Theta ) 
 \]
 are  such that $\statefun' (r_i)=\mathsf{absence}$.
Indeed the attacker could force
 $N$ to perform  $ \snda {c_i}  {\mathsf{on}} $ with probability equal to $1$.
If $\xi ' (r_i)=\mathsf{absence}$, then $M$ will perform   $ \snda {c_i} {\mathsf{on}} $ 
with probability $  p_i^+ $. Hence  $M$ does not simulate $N$ with a probability $1-p_i^+$.
Otherwise, if $\xi ' (r_i)=\mathsf{presence}$, then $M$ will perform   $ \snda {c_i}  {\mathsf{on}} $ 
with probability $1- p_i^- $. Hence  $M$ does not simulate $N$ with a probability $ p_i^-$.
Since $0 \leq p_i^+, p_i^- < \frac{1}{2}$, then $1- p_i^+ > p_i^-$.


The system $N=  \confCPS{\xi}{P\parallel Q } $ minimises 
\[
  \min_{N \TransStep{\tick} \Theta} \Kantorovich(\metric^{j-1,2})( \confCPS{\dirac{\xi'}}{\dirac{\mathit{Ctrl_{i}} }},
  \Theta ) 
 \]
 by simulating  $M$ with   the transition $N \trans{\tick} \dirac{\confCPS{\xi'}{\mathit{Ctrl_i}\parallel Q'}} $ 
 with    $Q'=B\langle i, \max(0,n-m+1-j-1)\rangle$.

The only admissible matching $\omega$  for  
$ \Kantorovich(\metric^{j-1,2} )( \dirac{\confCPS{\xi'}{\mathit{Ctrl_i}}}
  \dirac{\confCPS{\xi'}{\mathit{Ctrl_i}\parallel Q'}}) $
  is such that 
$\omega( \dirac{\confCPS{\xi'}{\mathit{Ctrl_{i}}}} ,  \dirac{\confCPS{\xi'}{\mathit{Ctrl_i}\parallel Q'}})= 1$.

Summarising we have:
 \[
\begin{array}{rlr}
& \max_{M \transStep{\tick}\Delta} \min_{N \TransStep{\tick} \Theta} \Kantorovich( \metric^{j-1,2})(\Delta,\Theta + (1-\size{\Theta})\dirac{\dummyN}) \\[2.0 ex]
= &   \min_{N \TransStep{\tick} \Theta} \Kantorovich( \metric^{j-1,2})( \dirac{\confCPS{\xi'}{\mathit{Ctrl_i}}} ,  \Theta ) 
 \q\q\q\q\q\q  \text{ with $\xi' (r_i)=\mathsf{absence}$}
\\[2.0 ex]
=&  \Kantorovich(\metric^{j-1,2} )( \dirac{\confCPS{\xi'}{\mathit{Ctrl_i}}} ,  
\dirac{\confCPS{\xi'}{\mathit{Ctrl_i}\parallel Q'}})  \\[2.0 ex]
= &  \metric^{j-1,2} 
( \dirac{\confCPS{\xi'}{\mathit{Ctrl_i} }} ,
  \dirac{\confCPS{\xi'}{\mathit{Ctrl_i}\parallel Q'}})    
  \q\q\q\q\q\q    \text{(by induct. on case \autoref{prop:case-prop1c})} 
 \\[2.0 ex]
= & 1-(p_i^+)^{j-1 } 
\end{array}
\] 
which completes the  the proof.

\item 
Define  $M=\confCPS{\xi}{P }$ and $N=\confCPS{\xi}{P\parallel Q }$.

Analogously to
\autoref{prop:case-prop1a}, to prove $\Bisimulation ( \metric^{j,0})(M,N)=1-(p_i^+)^{j -1}$ 
it is sufficient to prove  the following two facts: 
\begin{compactitem}
\item
$  \max_{M \transStep{\snda {c_i} {\mathsf{on}}}\Delta} \min_{N \TransStep{\snda {c_i} {\mathsf{on}}} \Theta} \Kantorovich(\metric^{j ,0})(\Delta,\Theta + (1-\size{\Theta})\dirac{\dummyN}) =1-( p_i^+)^{j -1}$
\item
$   \max_{N \transStep{\snda {c_i} {\mathsf{on}}}\Theta} \min_{M \TransStep{\snda {c_i} {\mathsf{on}}} \Delta} \Kantorovich(\metric^{j ,0})(\Delta + (1-\size{\Delta}) \dirac{\dummyN}, \Theta ) = 1-( p_i^+)^{j -1}$. 
\end{compactitem}

We prove  the first case, the second one is similar.

The only transition by $M=\confCPS{\xi}{P }$    is 
$M \trans{\snda {c_i} {\mathsf{on}}} \confCPS{\dirac{\xi }}{\dirac{\tick.\mathit{Ctrl_{i}} }}$.
The only transition by  $N=  \confCPS{\xi}{P\parallel Q } $ is 
 $N \trans{\snda {c_i} {\mathsf{on}}}   \confCPS{\dirac{\xi }}{\dirac{\tick.\mathit{Ctrl_{i}} \parallel Q}}  $.

The only admissible matching $\omega$  for  
$ \Kantorovich(\metric^{j-1,0} )(   \confCPS{\dirac{\xi }}{\dirac{\tick.\mathit{Ctrl_{i}} }},
  \confCPS{\dirac{\xi }}{\dirac{\tick.\mathit{Ctrl_{i}}\parallel Q } } ) $
  is such that 
$\omega(   \confCPS{\dirac{\xi }}{\dirac{\tick.\mathit{Ctrl_{i}} }},
  \confCPS{\dirac{\xi }}{\dirac{\tick.\mathit{Ctrl_{i}} \parallel Q} } ) = 1$.

Summarising we have:
 \[
\begin{array}{rlr}
& \max_{M \transStep{\snda {c_i} {\mathsf{on}}}\Delta} \min_{N \TransStep{\snda {c_i} {\mathsf{on}}} \Theta} \Kantorovich( \metric^{j-1,0})(\Delta,\Theta + (1-\size{\Theta})\dirac{\dummyN}) \\[2.0 ex]
= &   \min_{N \TransStep{\snda {c_i} {\mathsf{on}}} \Theta} 
\Kantorovich( \metric^{j-1,0})( \dirac{\confCPS{\xi }{\tick.\mathit{Ctrl_i}}} ,  \Theta ) 
 \\[2.0 ex]
=&  \Kantorovich(\metric^{j-1,0} )( \dirac{\confCPS{\xi }{\tick.\mathit{ Ctrl_i}}} ,  
\dirac{\confCPS{\xi }{\tick.\mathit{Ctrl_i\parallel Q} }})  \\[2.0 ex]
= &  \metric^{j-1,0} 
( \dirac{\confCPS{\xi }{\tick.\mathit{Ctrl_i} }} ,
  \dirac{\confCPS{\xi }{\tick.\mathit{Ctrl_i\parallel Q}}})    
  \q\q\q\q\q\q    \text{(by induct. on case \autoref{prop:case-prop1a})} 
 \\[2.0 ex]
= & 1-(p_i^+)^{j-1 } 
\end{array}
\] 
which completes the  the proof.

\item 
Define $M=\confCPS{\xi}{P }$ and $N=\confCPS{\xi}{P\parallel Q }$.

Since $\metric^{j }= \metric^{j ,2}$, analogously to
\autoref{prop:case-prop1a}, to prove $\Bisimulation ( \metric^{j,1})(M,N)=1-(p_i^+)^{j }$
it is sufficient to prove  the following two facts: 
\begin{compactitem}
\item
$ \max_{M \transStep{\tau}\Delta} \min_{N \TransStep{\tau} \Theta} \Kantorovich(\metric^{j ,1})(\Delta,\Theta + (1-\size{\Theta})\dirac{\dummyN}) \le 1-( p_i^+)^{j }$
\item
$  \max_{N \transStep{\tau}\Theta} \min_{M \TransStep{\tau} \Delta} \Kantorovich(\metric^{j ,1})(\Delta + (1-\size{\Delta}) \dirac{\dummyN}, \Theta ) = 1-( p_i^+)^{j }$.
\end{compactitem}

The interesting case is the second.
Indeed, $N$ is always able to simulate $M$ by considering the case in which the controller reads the right value of the sensor and  does not take the value provided by the attacker.
The system $N=  \confCPS{\xi}{P\parallel Q }$ can perform two transitions depending on the fact that the controller reads or not the fake value provided by the attacker.
But,  obviously, 
the system $N=  \confCPS{\xi}{P\parallel Q } $ maximises 
\[ 
 \max_{N \transStep{\tau}\Theta} \min_{M \TransStep{\tau} \Delta} \Kantorovich(\metric^{j ,1})(\Delta + (1-\size{\Delta}) \dirac{\dummyN}, \Theta )  
 \]
when the controller reads the fake value, namely by the transition 
$
N \TransStep{\widehat \tau} \gamma_N=  \dirac{N'  } 
$
where $N'= \confCPS{\xi}{\snda{c_i}{\mathsf{on}} . \tick.\mathit{Ctrl_{i }}}  $.

The  system $M=\confCPS{\xi}{P }$ minimises 
\[ 
 \min_{M \TransStep{\tau} \Delta} \Kantorovich(\metric^{j ,1})(\Delta + (1-\size{\Delta}) \dirac{\dummyN}, \gamma_N)
 \]
by simulating $N$ by the transition  
\[
M \trans{\tau} \gamma_M=  (p_i^+) \cdot \dirac{M_1}+ (1-p_i^+) \cdot \dirac{M_2}
\]
where $M_1= \confCPS{\xi}{ \snda{c_i}{\mathsf{on}} . \tick.\mathit{Ctrl_{i }}} $ and    
 $M_2= \confCPS{\xi}{\snda{c_i}{\mathsf{off}} . \tick.\mathit{Ctrl_{i }}} $.

Moreover, the only admissible   matching  $\omega$  for 
$  \Kantorovich(\metric^{j.1} )(\gamma_M,  \gamma_N) $
  is such that
$\omega(M_1,N')= p_i^+ $ and $\omega(M_2,N')= 1-p_i^+$.

Summarising:
\[
\begin{array}{rlr}
&    \max_{N \transStep{\tau}\Theta} \min_{M \TransStep{\tau} \Delta} \Kantorovich(\metric^{j ,1})(\Delta + (1-\size{\Delta}) \dirac{\dummyN}, \Theta )  \\[2.0 ex]
= &  \min_{M \TransStep{\tau} \Delta} \Kantorovich(\metric^{j ,1})(\Delta + (1-\size{\Delta}) \dirac{\dummyN}, \gamma_N)\\[2.0 ex]
=&  \Kantorovich(\metric^{j.1} )(\gamma_M,  \gamma_N)     \\[2.0 ex]
= &  (p_i^+) \cdot \metric^{j ,1}(M_1,N')+ (1-p_i^+) \cdot \metric^{j ,1}(M_2,N') \\[2.0 ex]
= &  (p_i^+) \cdot (1-( p_i^+)^{j-1} )  + (1-p_i^+) \cdot 1 
 \q\q\q\q\q\q    \text{(by induct. on case \autoref{prop:case-prop1b})}\\[2.0 ex]
= & 1-(p_i^+)^{j } .
\end{array}
\] 
which completes the proof.

\item  The proof is similar to  the proof of  \autoref{prop:case-prop1a}. Indeed  this
 case can be proved by induction on case \autoref{prop:case-prop1c} if $j=j_1$ and $j_2=1 $, and, 
on case \autoref{prop:case-prop3c} if $j_2>1 $.

\item  The proof is similar to  the proof of  \autoref{prop:case-prop1b}. Indeed  this case can be proved by induction on case \autoref{prop:case-prop3a}. 

\item   The proof is similar to  the proof of  \autoref{prop:case-prop1c}. Indeed  this case can be proved by induction on case \autoref{prop:case-prop3b}.

\end{enumerate}
\qed

\noindent
\textbf{Proof of \autoref{prop:case2}} \hspace{0.2 cm}
 The proof is similar to that of \autoref{prop:case1} by considering  $p_i^-$ instead of $p_i^+$, and, 
$C\langle  \ldots  \rangle$ instead of $B\langle  \ldots  \rangle$. 
\qed

\end{document}

\appendix

\section{Proofs}

\subsection{Proofs of \S~\ref{sec:calculus}}

In order to prove Proposition~\ref{prop:sys} and Proposition~\ref{prop:X}, we use the following lemma that formalises the invariant properties binding 
the state variable $\mathit{temp}$ with the activity of the cooling system.

Intuitively, when the cooling system is inactive the value of the state
variable $\mathit{temp}$ lays in the real interval $[0, 11.5]$.
Furthermore, if the coolant is not active and the variable $\mathit{temp}$
lays in the real interval $(10.1, 11.5]$, then the cooling will be turned
on in the next time slot. Finally, when active the cooling system will
remain so for $k\in1..5$ time slots (counting also the current time slot)
with the variable $\mathit{temp}$ being in the real interval $(
9.9-k{*}(1{+}\delta) , 11.5-k{*}(1{-}\delta)]$.

\begin{lemma} 
\label{lem:sys}
Let $\mathit{Sys}$ be the system defined in Example~\ref{exa:sys}.
Let
\begin{small}
\begin{displaymath}
\mathit{Sys} = \mathit{Sys_1} \trans{t_1}\trans\tick 
\mathit{Sys_2}\trans{t_2}\trans\tick  \dots 
\trans{t_{n-1}}\trans\tick  \mathit{Sys_n}
\end{displaymath}
\end{small}%
such that the traces $t_j$ contain no $\tick$-actions, for any $j \in  1 .. n{-}1 $, and for any  $i \in  1 .. n $, $\mathit{Sys_i}= \confCPS {E_i}{P_i} $ with 
$E_i = \envCPS 
{\statefun^i{}} 
{\actuatorfun^i{}} 
{ \delta }  
{\evolmap{}}
{ \epsilon }  
{\measmap{}}   
{\invariantfun{}}$.
Then, for any $i \in 1 .. n{-}1 $, we have the following:
\begin{enumerate}

\item \label{uno}
 if   $ \actuatorfun^i{}(\mathit{cool})= \off $ then
 $\statefun^i{}(\mathit{temp})  \in [0, 11.1+\delta ]$;
with  $\statefun^i{}(\mathit{stress})=0$ if $ \statefun^i{}(\mathit{temp})  \in [0, 10.9+\delta ] $, and  $\statefun^i{}(\mathit{stress})=1 $, otherwise; 

\item \label{due}
  if   $ \actuatorfun^i{}(\mathit{cool})= \off $ and 
$\statefun^i{}(\mathit{temp})\in (10.1, 11.1+\delta ]$ then, in the next time slot,  $\actuatorfun^{i{+}1}{}(\mathit{cool})=\on$
  and $\statefun^{i{+}1}{}(\mathit{stress})  \in 1..2$;

\item \label{tre}
 if  $ \actuatorfun^i{}(\mathit{cool})=\on$ then   $\statefun^i{}(\mathit{temp}) \in ( 9.9-k {*}(1{+}\delta) , 11.1+\delta -k{*}(1{-}\delta)] $, 
for some  $k  \in 1 .. 5 $   such that $\actuatorfun^{i-k}{}(\mathit{cool})=\off $ and 
$\actuatorfun^{i-j}{}(\mathit{cool}) =\on $, for $j \in 0..k{-}1$;
moreover,  if  $k\in 1.. 3$ then     $\statefun^i{}(\mathit{stress})  \in  1..k{+}1  $, otherwise,  
$\statefun^i{}(\mathit{stress}) =0$. 
\end{enumerate}
\end{lemma}
\begin{proof}
Let us write $v_i$ and $s_i$ to denote the values of the state variables
$\mathit{temp}$ and $\mathit{stress}$, respectively, in the systems
$\mathit{Sys_i}$, i.e., $\statefun^i{} (\mathit{temp})=v_i $ and
$\statefun^i{} (\mathit{stress})=s_i $. Moreover, we will say that the
coolant is active (resp., is not active) in $\mathit{Sys_i}$ if
$\actuatorfun^i{}(\mathit{cool})=\on$ (resp.,
$\actuatorfun^i{}(\mathit{cool})=\off$).

The proof is by mathematical induction on $n$, i.e., the number of
$\tick$-actions of our traces.

The \emph{case base} $n=1$ follows directly from the definition of $\mathit{Sys}$. 

Let us prove the \emph{inductive case}. 
We assume that the three statements hold for $n-1$ and prove that they  
also hold for $n$.
\begin{enumerate}[noitemsep]
\item Let us assume that the cooling  is not active  in $\mathit{Sys_{n}}$.
In this case, we prove that $v_n \in [0, 11.1+\delta ]$, with and $s_n=0$ if $ v_n  \in [0, 10.9+\delta ] $, and $s_n=1$ otherwise.

We consider separately the cases in which the coolant is active or not in $\mathit{Sys_{n-1}}$
\begin{itemize}[noitemsep]
\item Suppose the coolant is not active in $\mathit{Sys_{n{-}1}}$ (and
not active in $\mathit{Sys_{n}}$).

By the induction hypothesis we have 
$v_{n-1} \in [0, 11.1+\delta ]$; with $s_{n{-}1}=0$ if $ v_{n{-}1}  \in [0, 10.9+\delta ] $, and $s_{n{-}1}=1 $ otherwise. Furthermore,  if   
$v_{n-1} \in (10.1, 11.1+\delta ]$, then, by the induction hypothesis, the coolant must be active in $\mathit{Sys_{n}}$.
Since we know that in $\mathit{Sys_n}$ the cooling is not active,
it follows that $v_{n-1} \in [0, 10.1]$ and $s_n =0$.
Furthermore, in $\mathit{Sys_{n}}$ the temperature
will increase of a value laying in the real interval $[1-\delta,1+\delta]=[0.6,1.4]$. Thus, $v_{n}$ will be in 
$ [0.6, 11.1+\delta ]\subseteq[0, 11.1+\delta ]$.  
Moreover, if $v_{n-1} \in [0, 9.9]$, then the state variable $\mathit{stress}$ is not incremented and hence $s_n=0$ with   
$ v_n  \in [0+1-\delta \, , \,  9.9+ 1+\delta ]=[0.6 \, , \, 10.9+\delta]\subseteq [0 \, , \, 10.9+\delta ] $. Otherwise, 
if $v_{n-1} \in (9.9,10.1]$, then the state variable $\mathit{stress}$ is incremented, and hence $s_n=1$.

\item Suppose the coolant is active in $\mathit{Sys_{n{-}1}}$ (and not
active in $\mathit{Sys_{n}}$).

By the induction hypothesis, $v_{n-1} \in ( 9.9-k *(1+\delta) , 11.1+\delta
-k*(1-\delta)]$ for some $k \in 1..5$ such that the coolant is not active
in $\mathit{Sys_{n{-}1{-}k}}$ and is active in $\mathit{Sys_{n{-}k}},
\ldots, \mathit{Sys_{n-1}}$.

The case $k \in \{1,\ldots,4\}$  is not admissible. 
In fact if  $k \in \{1,\ldots,4\}$ then the coolant would be active for less than $5$ $\tick$-actions as we know that 
$\mathit{Sys_{n}}$ is not active. 
Hence, it must be $k=5$. Since $\delta=0.4$ and $k=5$, it holds that $v_{n-1 }\in (9.9-5*1.4, 11.1+0.4 -5*0.6]=(2.8, 8.6] $
and $s_{n{-}1} =0$. Moreover, since
the coolant is active for $5$ time slots, in $\mathit{Sys_{n{-}1}}$ the controller and the $\mathit{IDS}$ synchronise together via channel $\mathit{sync}$ and hence the $\mathit{IDS}$ checks the
temperature. Since $v_{n-1} \in (2.8, 8.6]$ the $\mathit{IDS}$ process 
sends to the controller a command to 
$\mathsf{stop}$ the cooling, and the controller will switch off the 
cooling system. Thus, in the next time slot, the temperature
will increase of a value laying in the real interval $[1-\delta,1+\delta]=[0.6,1.4]$. As
a consequence, in $\mathit{Sys_{n}}$ we will have $v_{n}
\in [2.8+0,6, 8.6+1.4]=[3.4,10] \subseteq [0, 11.1+\delta ]$.
Moreover, since $v_{n-1} \in (2.8, 8.6]$ and $s_{n{-}1} =0$, we derive that the state variable $\mathit{stress}$ is not increased and hence
$s_{n } =0$, with $v_{n} \in [3.4,10] \subseteq [0, 10.9+\delta ]$. 
\end{itemize}

\item Let us assume that the coolant is not active in $\mathit{Sys_{n}}$
and $v_n \in (10.1, 11.1+\delta ]$; we prove that the coolant is active in
$\mathit{Sys_{n{+}1}}$ with $s_{n {+}1} \in 1..2 $. Since the coolant is
not active in $\mathit{Sys_{n}}$, then it will check the temperature
before the next time slot. Since $v_n \in (10.1, 11.1+\delta ]$ and
$\epsilon=0.1$, then the process $\mathit{Ctrl}$ will sense a temperature
greater than $10$ and the coolant will be turned on. Thus, the coolant
will be active in $\mathit{Sys_{n{+}1}}$. Moreover, since $v_n \in (10.1,
11.1+\delta ]$, and $s_{n}$ could be either $0$ or $1$, the state variable
$\mathit{stress}$ is increased and therefore $s_{n {+}1} \in 1..2$.

\item Let us assume that the coolant is active in $\mathit{Sys_{n}}$; we
prove that $v_{n} \in ( 9.9-k *(1+\delta), 11.1+\delta -k*(1-\delta)] $
for some $k \in 1..5 $ and the coolant is not active in
$\mathit{Sys_{n{-}k}}$ and active in $\mathit{Sys_{n-k+1}}, \dots,
\mathit{Sys_{n}}$. Moreover, we have to prove that if $k\leq 3$ then $s_n
\in 1..k{+}1 $, otherwise, if $k > 3$ then $s_n =0$.

We prove the first statement. That is, we prove that $v_{n} \in ( 9.9-k
*(1+\delta), 11.1+\delta -k*(1-\delta)] $, for some $k \in 1..5 $, and the
coolant is not active in $\mathit{Sys_{n{-}k}}$, whereas it is active in
the systems $\mathit{Sys_{n-k+1}}, \dots, \mathit{Sys_{n}}$.


We separate the case in which the coolant is active in
$\mathit{Sys_{n{-}1}}$ from that in which is not active.

\begin{itemize}[noitemsep]
\item Suppose the coolant is not active in $\mathit{Sys_{n{-}1}}$ (and active in $\mathit{Sys_{n}}$).

In this case $k=1$ as the coolant is not active in $\mathit{Sys_{n-1}}$
and it is active in $\mathit{Sys_{n}}$. Since $k=1$, we have to prove $v_n
\in (9.9-(1+\delta), 11.1+\delta-(1-\delta)]$.

However, since the coolant is not active in $\mathit{Sys_{n-1}}$ and is
active in $\mathit{Sys_{n}}$ it means that the coolant has been switched
on in $\mathit{Sys_{n-1}}$ because the sensed temperature was above $10$
(since $\epsilon=0.1$ this may happen only if $v_{n-1} > 9.9$). By the
induction hypothesis, since the coolant is not active in
$\mathit{Sys_{n-1}}$, we have that $v_{n-1} \in [0, 11.1+\delta ]$.
Therefore, from $v_{n-1} > 9.9$ and $v_{n-1} \in [0, 11.1+\delta ]$ it
follows that $v_{n-1} \in (9.9, 11.1+\delta ]$. Furthermore, since the
coolant is active in $\mathit{Sys_{n}}$, the temperature will decrease of
a value in $[1-\delta,1+\delta]$ and therefore $v_n \in (9.9-(1+\delta),
11.1+\delta-(1-\delta)]$, which concludes this case of the proof.

\item Suppose the coolant is active in $\mathit{Sys_{n{-}1}}$ (and active
in $\mathit{Sys_{n}}$ as well).

By the induction hypothesis, there is $h \in 1..5$ such that $v_{n-1} \in
( 9.9-h *(1+\delta) , 11.1+\delta -h*(1-\delta)] $ and the coolant is not
active in $\mathit{Sys_{n{-}1{-}h}}$ and is active in
$\mathit{Sys_{n{-}h}}, \ldots, \mathit{Sys_{n{-}1}}$.

The case $h=5$ is not admissible. In fact, since $\delta=0.4$, if $h=5$
then $v_{n-1 }\in (9.9-5*1.4, 11.1+\delta -5*0.6]=(2.8, 8.6] $.
Furthermore, since the cooling system has been active for $5$ time
instants, in $\mathit{Sys_{n{-}1}}$ the controller and the IDS synchronise
together via channel $\mathit{sync}$, and the $\mathit{IDS}$ checks the
received temperature. As $v_{n-1 }\in (2.8, 8.6] $, the $\mathit{IDS}$
sends to the controller via channel $\mathit{ins}$ the command
$\mathsf{stop}$. This implies that the controller should turn off the
cooling system, in contradiction with the hypothesis that the coolant is
active in $\mathit{Sys_{n }}$.

Hence, it must be $h \in 1 .. 4$. Let us prove that for $k=h+1$ we obtain
our result. Namely, we have to prove that, for $k=h+1$, (i) $v_{n} \in (
9.9-k *(1+\delta), 11.1+\delta -k*(1-\delta)] $, and (ii) the coolant is
not active in $\mathit{Sys_{n{-}k}}$ and active in $\mathit{Sys_{n-k+1}},
\dots, \mathit{Sys_{n}}$.

Let us prove the statement (i). By the induction hypothesis, it holds that
$v_{n-1} \in ( 9.9-h *(1+\delta) , 11.1+\delta -h*(1-\delta)] $. Since the
coolant is active in $\mathit{Sys_{n}}$, the temperature will decrease
Hence, $v_{n } \in ( 9.9-(h+1) *(1+\delta) , 11.1+\delta
-(h+1)*(1-\delta)] $. Therefore, since $k=h+1$, we have that $v_{n} \in (
9.9-k *(1+\delta) , 11.1+\delta -k*(1-\delta)] $.

Let us prove the statement (ii). By the induction hypothesis the coolant
is not active in $\mathit{Sys_{n-1-h}}$ and it is active in
$\mathit{Sys_{n-h}}, \ldots, \mathit{Sys_{n-1}}$. Now, since the coolant
is active in $\mathit{Sys_{n}}$, for $k=h+1$, we have that the coolant is
not active in $\mathit{Sys_{n-k}}$ and is active in $\mathit{Sys_{n-k+1}},
\ldots, \mathit{Sys_{n}}$, which concludes this case of the proof.
\end{itemize}

Thus, we have proved that $v_{n} \in ( 9.9-k *(1+\delta), 11.1+\delta
-k*(1-\delta)] $, for some $k \in 1..5 $; moreover, the coolant is not
active in $\mathit{Sys_{n{-}k}}$ and active in the systems
$\mathit{Sys_{n-k+1}}, \dots, \mathit{Sys_{n}}$.

It remains to prove that $s_n \in 1..k{+}1 $ if $k\leq 3$, and $s_n =0$,
otherwise.

By inductive hypothesis, since the coolant is not active in
$\mathit{Sys_{n{-}k}}$, we have that $s_{n{-}k} \in 0..1$. Now, for $k \in
[1..2]$, the temperature could be greater than $9.9$. Hence if the state
variable $\mathit{stress}$ is either increased or reset, then $s_n \in
1..k{+}1 $, for $k\in 1.. 3$. Moreover, since for $k\in 3..5 $ the
temperature is below $9.9$, it follows that $s_n =0$ for $k> 3$.
\end{enumerate}
\end{proof}

\begin{proof}[Proof of Proposition~\ref{prop:sys}]
Since $\delta=0.4$, by Lemma~\ref{lem:sys} the value of the state variable
$\mathit{temp}$ is always in the real interval $[0, 11.5]$. As a
consequence, the invariant of the system is never violated and the system
never deadlocks. Moreover, after $5$ time units of cooling, the state
variable $\mathit{temp}$ is always in the real interval $( 9.9-5 *1.4 ,
11.1+0.4-5*0.6]=(2.9, 8.5]$. Hence, the process $\mathit{IDS}$ will never
transmit on the channel $\mathit{alarm}$.

Finally, by Lemma~\ref{lem:sys} the maximum value reached by the state variable $\mathit{stress}$ is $4$ and therefore the system does not reach unsafe states.
\end{proof}

\begin{proof}[Proof of  Proposition~\ref{prop:X}]
Let us prove the two statements separately. 
\begin{itemize}
\item Since $\epsilon=0.1$, if process $\mathit{Ctrl}$ senses a
temperature above $10$ (and hence $\mathit{Sys}$ turns on the cooling)
then the value of the state variable $\mathit{temp}$ is greater than
$9.9$. By Lemma~\ref{lem:sys}, the value of the state variable
$\mathit{temp}$ is always less than or equal to $11.1+\delta $. Therefore, if
$\mathit{Ctrl}$ senses a temperature above $10$, then the value of the
state variable $\mathit{temp}$ is in $(9.9,11.1+\delta ]$.

\item By Lemma~\ref{lem:sys} (third item), the coolant can be active for no
more than $5$ time slots. Hence, by Lemma~\ref{lem:sys}, when $\mathit{Sys}$
turns off the cooling system the state variable $\mathit{temp}$ ranges
over $( 9.9-5 *(1+\delta) , 11.1+\delta-5*(1-\delta)]$. 
\end{itemize}
\end{proof}

\subsection{Proofs of \S~\ref{sec:cyber-physical-attackers}}

%
%
%

\begin{proof}[Proof of Proposition~\ref{prop:att:DoS}]
We distinguish the two cases, depending on $m$.
\begin{itemize}[noitemsep]
\item Let $m\leq 8$.
We recall that the cooling system is activated only when the sensed temperature is above $10$. Since $\epsilon = 0.1$, when this happens the state variable $\mathit{temp}$ must be at least $9.9$. Note that after $m{-}1 \leq 7$ $\tick$-actions, when the attack tries to interact with the
controller of the actuator $\mathit{cool}$, the variable $\mathit{temp}$ may reach at most $7* (1 + \delta)= 7 * 1.4=9.8$ degrees. Thus, the cooling system will not be activated and the attack will not have any effect.

\item Let $m>8$. 
By Proposition~\ref{prop:sys}, the system $\mathit{Sys}$ in
isolation may never deadlock, it does not get into an unsafe state, and it may never emit an output on channel $\mathit{alarm}$. Thus, any execution trace of the system $\mathit{Sys}$ consists of a sequence of $\tau$-actions and $\tick$-actions.

In order to prove the statement it is enough to show the following four facts:
\begin{itemize}[noitemsep]
\item the system $\mathit{Sys} \parallel A_m$ may not deadlock in the first
$m+3$ time slots; 

\item the system $\mathit{Sys} \parallel A_m$ may not emit any output in the first $m+3$ time slots; 

\item the system $\mathit{Sys} \parallel A_m$ may not enter in an unsafe
state in the first $m+3$ time slots;

\item the system $\mathit{Sys} \parallel A_m$ has a trace reaching un unsafe state from the $(m{+}4)$-th time slot on, and until the invariant gets violated and the system deadlocks. 
\end{itemize}


The first three facts are easy to show as the attack may steal the command
addressed to the actuator $\mathit{cool}$ only in the $m$-th time slot.
Thus, until time slot $m$, the whole system behaves correctly. In
particular, by Proposition~\ref{prop:sys} and Proposition~\ref{prop:X}, no alarms,
deadlocks or violations of safety conditions occur, and the temperature
lies in the expected ranges. Any of those three actions requires at least
further $4$ time slots to occur. Indeed, by Lemma~\ref{lem:sys}, when the
cooling is switched on in the time slot $m$, the variable
$\mathit{stress}$ might be equal to $2$ and hence the system might not
enters in an unsafe state in the first $m+3$ time slots. Moreover, an
alarm or a deadlock needs more than $3$ time slots and hence no alarm can
occur in the first $m+3$ time slots.

Let us show the fourth fact, i.e., that there is a trace where the system
$\mathit{Sys} \parallel A_m$ enters into an unsafe state starting from the
$(m{+}4)$-th time slot and until the invariant gets violated.

Firstly, we prove that for all time slots $n$, with $9\leq n < m$, there
is a trace of the system $\mathit{Sys} \parallel A_m$ in which the state
variable $\mathit{temp}$ reaches the values $10.1$ in the time slot $n$.

The fastest trace reaching the temperature of $10.1$ degrees requires
$\lceil \frac{10.1}{1 + \delta }\rceil = \lceil \frac{10.1}{1.4 }\rceil
=8$ time units, whereas the slowest one $\lceil \frac{10.1}{1 - \delta
}\rceil = \lceil \frac{10.1}{0.6 }\rceil =17$ time units. Thus, for any
time slot $n$, with $9 \leq n \leq 18$, there is a trace of the system
where the value of the state variable $\mathit{temp}$ is $10.1$. Now, for
any of those time slots $n$ there is a trace in which the state variable
$\mathit{temp}$ is equal to $10.1$ in all time slots $n+10i < m$, with
$i\in \mathbb{N}$. Indeed, when the variable $\mathit{temp}$ is equal to
$10.1$ the cooling might be activated. Thus, there is a trace in which the
cooling system is activated. We can always assume that during the cooling
the temperature decreases of $1+\delta$ degrees per time unit, reaching at
the end of the cooling cycle the value of $5$. This entails that the trace
may continue with $5$ time slots in which the variable $\mathit{temp}$ is
increased of $1+\delta$ degrees per time unit; reaching again the value
$10.1$. Thus, for all time slots $n$, with $9 \leq n < m$, there is a
trace of the system $\mathit{Sys} \parallel A_m$ in which the state
variable $\mathit{temp}$ is $10.1$ in $n$.

As a consequence, we can suppose that in the $m{-}1$-th time slot there is
a trace in which the value of the variable $\mathit{temp}$ is $10.1$.
Since $\epsilon=0.1$, the sensed temperature lays in the real interval
$[10,10.2]$. Let us focus on the trace in which the sensed temperature is
$10$ and the cooling system is not activated. In this case, in the $m$-th
time slot the system may reach a temperature of $10.1 + (1 + \delta)=11.5$
degrees and the variable $\mathit{stress}$ is $1$.

The process $\mathit{Ctrl}$ will sense a temperature above $10$ sending
the command $\snda {cool} {\on}$ to the actuator $\mathit{cool}$. Now,
since the attack $A_m$ is active in that time slot ($m > 8$), the command
will be stolen by the attack and it will never reach the actuator. Without
that dose of coolant, the temperature of the system will continue to grow.
As a consequence, after further $4$ time units of cooling, i.e.\ in the
$m{+}4$-th time slot, the value of the state variable $\mathit{stress}$
may be $5$ and the system enters in an unsafe state.

After $1$ time slots, in the time slot $m+5$, the controller and the
$\mathit{IDS}$ synchronise via channel $\mathit{sync}$, the $\mathit{IDS}$
will detect a temperature above $10$, and it will fire the output on
channel $\mathit{alarm}$ saying to process $\mathit{Ctrl}$ to keep
cooling. But $\mathit{Ctrl}$ will not send again the command $\snda {cool}
{\on}$. Hence, the temperature would continue to increase and the system
remains in an unsafe state while the process $\mathit{IDS}$ will keep
sending of $\mathit{alarm}$(s) until the invariant of the environment gets
violated.
\end{itemize}
\end{proof}

\begin{proof}[Proof of Proposition~\ref{prop:critical2}]
By induction on the length of the trace.  
\end{proof}

In order to prove \Proposition~\ref{prop:att:integrity}, we introduce Lemma~\ref{lem:sys2}. This is a variant of Lemma~\ref{lem:sys} in which the smart system $\mathit{Sys} $ runs in parallel with the attack $A_n$ defined in Example~\ref{exa:att:integrity}. Here, due to the presence of the attack, the temperature is $2$ degrees higher when compared to the system $\mathit{Sys}$ in isolation. 
\begin{lemma} 
\label{lem:sys2}
Let $\mathit{Sys}$ be the system defined in Example~\ref{exa:sys} and $A_n$ be the attack of Example~\ref{exa:att:integrity}. 
Let
\begin{small}
\begin{displaymath}
\mathit{Sys} \parallel A_n= \mathit{Sys_1}  \trans{t_1}\trans\tick \dots \mathit{Sys_{n-1}}  \trans{t_{n-1}}\trans\tick \mathit{Sys_n} 
\end{displaymath}
\end{small}%
such that the traces $t_j$ contain no $\tick$-actions, for any $j \in  1 .. n{-}1 $,  and for any  $i \in  1 .. n $ $\mathit{Sys_i}= \confCPS {E_i}{P_i} $ with 
$E_i = \envCPS 
{\statefun^i{}} 
{\actuatorfun^i{}} 
{ \delta }  
{\evolmap{}}
{ \epsilon }  
{\measmap{}}   
{\invariantfun{}}$.
Then, for any $i \in 1 .. n{-}1 $ we have the following:
\begin{itemize}[noitemsep]
\item  if   $ \actuatorfun^i{}(\mathit{cool})= \off $, then
 $\statefun^i{}(\mathit{temp})\in [0, 11.1+2+\delta ]$; 

\item if $ \actuatorfun^i{}(\mathit{cool})= \off $ and
$\statefun^i{}(\mathit{temp})\in (10.1+2, 11.1+2+\delta ]$, then we have $
\actuatorfun^{i+1}{}(\mathit{cool}) =\on$;

\item  if  $ \actuatorfun^i{}(\mathit{cool})=\on$, then   $\statefun^i{}(\mathit{temp}) \in ( 9.9+2-k *(1+\delta) , 11.1+2+\delta -k*(1-\delta)] $, 
for some  $k  \in 1..5$,   such that $\actuatorfun^{i-k}{}(\mathit{cool}) =\off $ and $ \actuatorfun^{i-j}{}(\mathit{cool}) =\on $, for $j \in 0..k{-}1$. 

\end{itemize}
\end{lemma}
\begin{proof}
Similar to the proof of    Lemma~\ref{lem:sys}.
\end{proof}

Now, everything is in place to prove Proposition~\ref{prop:att:integrity}. 
\begin{proof}[Proof of Proposition~\ref{prop:att:integrity}]
Let us proceed by case analysis. 
\begin{itemize}[noitemsep]
\item 
Let $0 \leq n \leq 8$.  
In the proof of Proposition~\ref{prop:att:DoS}, we remarked that the system $\mathit{Sys}$ in isolation may sense a temperature greater than $10$  
only after $8$ $\tick$-actions, i.e., in the $9$-th time slot.
However, the life of the attack is $n \leq 8$, and in the $9$-th time
slot the attack is already terminated. As a consequence, starting
from the $9$-th time slot the system will correctly sense the
temperature and it will correctly activate the cooling system.
\item Let $n=9$. 